\documentclass[lettersize,journal]{IEEEtran}
\def \D {\mathcal{D}}
\def\expect{\mathbb{E}}
\def \eqd {\overset{\Delta}{=}}
\def \ind{{\bf 1}}
\def \policy{\pi}
\def \policyvec{\boldsymbol{\pi}}
\def \policyset{\Pi}
\def \P{\mathbb{P}}
\def \btau{\boldsymbol{\tau}}
\def \bomega{\boldsymbol{\omega}}
\def \bq{\mathbf{q}}
\def \bq{\mathbf{q}}
\def \qq{\mathbf{q}}
\def \N{\mathbb{N}}
\def \trco {\text{rco}}
\def \tco {\text{co}}
\def \tcoe {E}
\def \tcot {T}
\def \too {\text{o}}
\def \toe {{E-O}}

\def \tc {\text{c}}

\def \um {{(m)}}
\usepackage{amsmath,amsthm,amssymb,amsfonts}
\usepackage{algorithmic}
\usepackage{algorithm}
\usepackage{array}
\usepackage{subcaption}
\usepackage{textcomp}
\usepackage{stfloats}
\usepackage{url}
\usepackage{xcolor}
\usepackage{verbatim}
\usepackage{graphicx}
\usepackage{cite}
\hyphenation{op-tical net-works semi-conduc-tor IEEE-Xplore}
% updated with editorial comments 8/9/2021

\newtheorem{lemma}{Lemma}
\newtheorem{theorem}{Theorem}

\begin{document}

\title{Optimal Edge Caching For Individualized Demand Dynamics}

\author{
  % IEEE Publication Technology,~\IEEEmembership{Staff,~IEEE,}
  Guocong~Quan,
  Atilla~Eryilmaz,~\IEEEmembership{Senior~Member,~IEEE},
  Ness~Shroff,~\IEEEmembership{Fellow,~IEEE}
  \thanks{The authors are with The Ohio State University, Columbus, OH 43210, USA 
(emails: \{quan.72, eryilmaz.2, shroff.11\}@osu.edu).}
        % <-this % stops a space
% \thanks{This paper was produced by the IEEE Publication Technology Group. They are in Piscataway, NJ.}% <-this % stops a space
% \thanks{Manuscript received April 19, 2021; revised August 16, 2021.}
}

% The paper headers
% \markboth{Journal of \LaTeX\ Class Files,~Vol.~14, No.~8, August~2021}%
% {Shell \MakeLowercase{\textit{et al.}}: A Sample Article Using IEEEtran.cls for IEEE Journals}

% \IEEEpubid{0000--0000/00\$00.00~\copyright~2021 IEEE}
% Remember, if you use this you must call \IEEEpubidadjcol in the second
% column for its text to clear the IEEEpubid mark.

\maketitle

%%
%% The abstract is a short summary of the work to be presented in the
%% article.
\begin{abstract} 
The ever-growing end user data demands, 
and the simultaneous reductions in memory costs are fueling edge-caching deployments.
Caching at the edge is substantially different from that at the core 
and needs to take into account the nature of individual data demands. 
For example, an individual user may not be interested in 
requesting the same data item again, if it has recently requested it. 
Such individual dynamics are not apparent in the aggregated data requests at the core 
and have not been considered in popularity-driven caching designs for the core.
Hence, these traditional caching policies could induce significant inefficiencies 
when applied at the edges. 
To address this issue, 
we develop new edge caching policies optimized for the individual demands 
that also leverage overhearing opportunities at the wireless edge. 
With the objective of maximizing the hit ratio, 
the proposed policies will actively evict the data items 
that are not likely to be requested in the near future,
and strategically bring them back into the cache through overhearing 
when they are likely to be popular again. 
Both theoretical analysis and numerical simulations demonstrate that 
the proposed edge caching policies could outperform the popularity-driven policies 
that are optimal at the core.
    % Caching at the edge is becoming increasingly prevalent 
    % with the dramatic increases in data demands from end users 
    % coupled with ever reducing memory costs.  
    % Caching at the edge is substantially different from that at the core 
    % and needs to take into account the nature of individual data demands. 
    % For example, an individual user may not be interested in 
    % requesting the same data item again, 
    % if he/she has already recently requested it. 
    % Such individual dynamics are not apparent 
    % in the aggregated data requests at the core 
    % and have not been considered 
    % in popularity-driven caching designs for the core. 
    % Hence, these traditional caching policies 
    % could induce significant inefficiency when applied at the edges. 
    % To address this issue, we develop new edge caching policies 
    % optimized for the individual demands 
    % that also leverage overhearing opportunities at the wireless edge. 
    % With the objective of maximizing the overall hit ratio, 
    % the proposed policies will actively evict the data items 
    % that are not likely to be requested in the near future, 
    % and strategically bring them back into the cache through overhearing 
    % when they are likely to be popular again. 
    % Both theoretical analysis and numerical simulations validate that 
    % the proposed edge caching policies could outperform 
    % the popularity-driven policy that are optimal for caching at the core.
\end{abstract}

\begin{IEEEkeywords}
    edge caching, demand dynamics, overhearing
\end{IEEEkeywords}

\section{Introduction}
\label{sec:intro}
Data demands are growing exponentially,  
driven by the rapid proliferation of edge devices such as the Internet of Things (IoT), 
and increasingly capable hand-held devices. 
At the same time, memory is becoming increasingly cheaper, larger, and faster.  
These two forces are creating an ideal environment for the large-scale deployment of edge caching 
to support fast data retrieval~\cite{qian2012web, 
niyato2016novel, wang2014cache, zhang2018cooperative}.
While, extensive 
studies~\cite{jia2017efficient,li2016popularity, 
huang2013analysis, cidon2015dynacache} 
have been conducted 
to optimize caching strategies 
for relatively stationary data demands at the network core, 
caching at the edges due its individualized demand dynamics, 
is quite different from the core, 
and therefore should be studied in their own right.  
% these results cannot be directly applied for the edge caching
% where the data demands are highly individualized. 
In this paper, we will propose new caching policies 
optimized for the individualized data demands at the wireless edges. 

\subsection{Challenge: Individualized Demands at Network Edges}
At the network core, 
data demands are aggregated from 
a large number of end-users, as shown in Fig.~\ref{fig:core_caching}.
Thus, the demand dynamics of each individual user 
could become negligible, which leads to 
relatively stationary data popularities for the population.
% data requests are typically assumed to be generated 
% from a popularity distribution which evolves relatively slowly.
A number of popularity-driven policies have been 
proposed for optimizing caching 
at the core~\cite{li2016popularity,jiang2002lirs}. 
% In real systems where the popularities are unknown, 
% historical request information is used to estimate the data popularities 
% and guide the caching decision. 
Inspired by the observation that data items recently requested by one user 
are very likely to be requested again by others with similar interests, 
the least recently used (LRU) policy
estimates the popularity by the data recency 
and caches the most recently requested data items. 
The LRU policy and its variants have been widely implemented at the core,
and validated to achieve considerably good 
performance~\cite{nishtala2013scaling,atikoglu2012workload,jelenkovic1999asymptotic}.
% The least frequently used (LFU) policy stores the data with the largest
% request frequency. 
% Other recency and frequency based policies have been proposed and implemented 
% in a variety of systems
% and achieved good performance. 
% However, at the edge of the network, 
% the policies designed for data centers may suffer from the highly 
% dynamic demands. 
\begin{figure}[h]
    \centering
    \vspace{-3mm}
    \includegraphics[width=8.6cm]{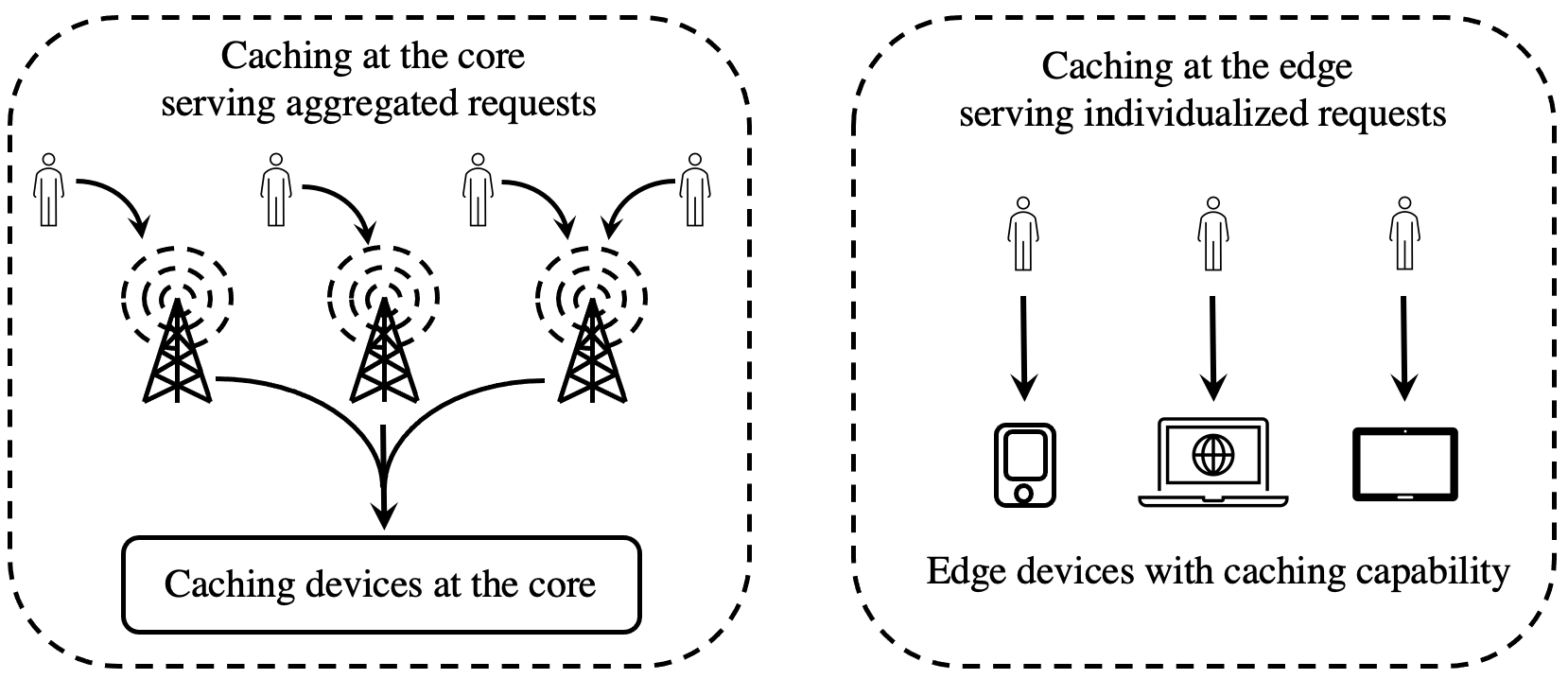}
    \vspace{-1mm}
    \caption{Caching at the core v.s. caching at the edge.
    % Multiple edge devices are connected to a base station through
    % broadcast channels, where each of them serves data requests from 
    % an individual user. 
    }
    \vspace{-0mm}
    \label{fig:core_caching}
\end{figure}

In contrast, edge caching serves a small group of users, 
or even a single user (e.g., caching in smartphones),
where the data demands are more individualized.
Those have fundamentally different dynamics than the population demand models.
% could be highly individualized. 
% In this paper, we highlight a critical demand pattern of an individual user that,
In particular,
\textit{after requesting a data item, 
the user may not be likely to request the same data item again
in the near future}.
One supportive reason is that users may lose interest in seeing the same or 
similar content repeatedly. 
A common methodology applied by recommendation systems is 
to avoid presenting the same or similar content consecutively~\cite{adomavicius2011improving, 
hurley2013personalised,qian2017image}. 
% For example, when a user is listening to songs in a playlist, 
% the data of a song may not be needed for a while after it is played. 
Another evidence is that users may need some time to process the recently requested data. 
A trace analysis on Yelp
validates that there are considerable time gaps between users' actions~\cite{tadrous2018action}.
Hence, 
the popularity of a data item at the edges could change dramatically even
after every request for it. 
% the individualized request pattern can be evident and significantly impact
% the optimal caching decisions. 
% However, such individualized dynamics will 
% not dominate the aggregated data demands at the core,
% since a recently requested data item by one user 
% is very like to be requested by others.
% since the behavior of an individual user will not 
% dominate the aggregated demands.
The popularity-driven policies designed for the core
are not able to make adaptive caching decisions for such individualized demands, 
and therefore may achieve poor performance at the edges.
% In Section~\ref{sec:model},  

In Fig.~\ref{fig:LRU}, we illustrate a motivating example showing 
that the LRU policy could make irrational decisions for edge caching, 
since the recently requested data items typically have small popularities at the edges,
which is completely opposite to the experience at the core.
Assume that an individual user will not request the same
data item in the near future after each 
request\footnote{The data requests are generated with $N=1000, b = 50,
s_i=5000, \beta_i=c\cdot i^{-1.4}, c = 1/\sum_{j=1}^N\beta_j = 0.3392,1\leq i\leq N$, 
where the detailed parameter definitions are introduced in Section~\ref{sec:demand_dynamics}}. 
We simulate the hit ratio achieved by a single LRU cache serving aggregated data requests from a group of users. 
The LRU policy achieves good performance when the number of users is large 
(i.e., the network core scenario). 
However, the performance degenerates significantly as the number of users decreases 
(i.e., the network edge scenario). 
Interestingly, when the cache serves one user, the hit ratio will decrease to zero,
which indicates that the LRU policy almost always makes the wrong decisions.
\begin{figure}[t]
    \centering
    \includegraphics[width=7cm]{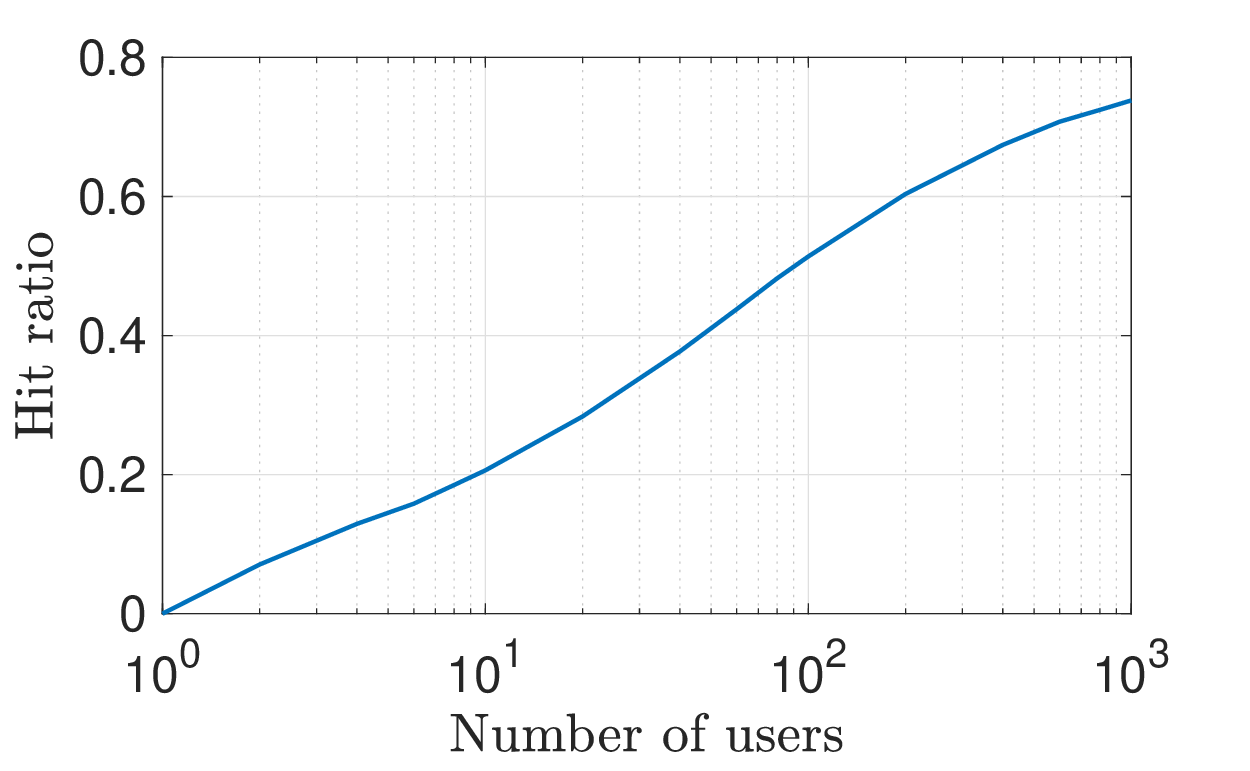}
    \vspace{0mm}
    \caption{Degenerate performance of LRU when serving 
    a small group of users.}
    \vspace{0mm}
    \label{fig:LRU}
\end{figure}

\subsection{Solution: Active Eviction and Strategical Overhearing}
To address this issue, we develop new adaptive edge caching policies 
customized for the individualized demands.
In an ideal case, the policy should frequently update the cache content 
and only store data items 
that are most likely to be requested in the near future. 
% In particular, we propose to actively evict recently requested data 
% items, and strategically bring them back by leveraging the overhearing opportunities 
% at the network edges.
% to improve the caching efficiency.
We will leverage the overhearing opportunities 
at the wireless edges to mimic this ideal design.  
% Consider a number of edge devices with caching capabilities 
% connected to a base station through wireless channels. 
% Thanks to the broadcasting capability of wireless channels, 

Specifically, an edge cache can overhear the broadcasted data items
over the wireless channels, 
even when it is not the intended receiver. 
% the cache content of edge devices can be adaptively updated through overhearing. 
% In particular, when the base station broadcast a data item, 
% all the connected edge devices could 
% overhear and decide whether to store the item into their own caches. 
% With the overhearing opportunities,  
To achieve high caching efficiency,  
we may actively evict the recently requested data items 
that will not be needed in the near future,
and strategically bring them back into the cache later through overhearing 
when their popularities rise up sufficiently again. 
With the objective to maximize the overall hit ratio, 
we optimize the eviction and overhearing decisions 
for two different settings depending on 
how the overhearing opportunities are generated. 
Under the time-driven overhearing setting (cf. Section~\ref{sec:single}),
the overhearing opportunities are described 
by Poisson processes that are independent of the data requests
and out of the designer's control.
Under the event-driven overhearing setting (cf. Section~\ref{sec:multi}),
the overhearing opportunities are generated 
when an item is requested and is unavailable at a user, 
which triggers its broadcast over the wireless channel,
hereby generating an overhearing opportunity for all other users.  
Our contributions are summarized as follows.
\begin{itemize}
    % \item We model the individualized data demands as ON-OFF processes
    % and formulate a hit ratio maximization problem to 
    % optimize the caching decisions
    % (see Section~\ref{sec:model}). 
    \item With the objective to maximize the overall hit ratio, 
    we propose an optimal caching and overhearing policy 
    for the time-driven overhearing setting. 
    Specifically, we first prove that the hit ratio maximization problem 
    is nonconvex.
    By exploiting an informative structure of the optimal solutions,
    we then convert the nonconvex problem to a convex one and propose
    efficient algorithms to solve it
    (see Section~\ref{sec:single}). 
    \item We propose an asymptotically optimal caching and overhearing policy
    for the event-driven overhearing setting. 
    Although the overhearing process is not fully tractable
    under this setting, 
    we are inspired by the structure of the optimal policy under time-driven overhearing 
    and propose a policy for event-driven overhearing, which is asymptotically optimal 
    when the number of edge caches in the system is sufficiently large
    % We further show that the proposed policy could outperform two benchmark
    % policies. 
    (see Section~\ref{sec:multi}). 
    \item We extend our main results for both time-driven and event-driven overhearing scenarios 
    to a more general data demand setting, where different users can have heterogeneous demand patterns
    (see Section~\ref{sec:gen}). 
    \item We conduct extensive simulations to validate that 
    the proposed policies can achieve better performance than a few benchmarks
    (see Section~\ref{sec:eva}). 
\end{itemize}

\subsection{Related Works}

% \noindent 
% \textbf{Caching under stationary data demands:}
Conventional caching analysis for stationary data demands typically assumes
an independent reference model (IRM), where the data requests are
assumed to be generated from a stationary popularity distribution 
independently. 
Popularity driven caching policies are proposed for such scenarios
in different systems~\cite{li2016popularity,cho2012wave}.
Historical request information 
including data recency and frequency
are commonly leveraged to estimate the popularity,
and inspire the design of LRU, LFU, LIRS and other 
variants~\cite{jiang2002lirs,o1993lru,friedlander2019generalization, 
beckmann2018lhd,quan2019new}.
Among the various caching policies, 
the time-to-live (TTL) based policies have garnered significant attention, 
since they are not only easy to implement in real practice, 
but also provide tractability and flexibility to optimize different system 
goals~\cite{domingues2021role,ferragut2016optimizing,dehghan2019utility,jung2003modeling}. 
However, how to design a good TTL-based policy for edge caching with individualized 
demand dynamics still lacks a systematic study.

% \noindent 
% \textbf{Caching under dynamic data demands:}
To characterize the non-stationary data demands whose popularities
may evolve over time, 
a shot noise model (SNM) is proposed in~\cite{traverso2013temporal},
where the request process of a data item is described by a
time-inhomogeneous Poisson process.
Compared to IRM,  SNM could better characterize the temporal 
locality and is validated by real data traces collected from 
more than 10000 IPs.
However, under the general SNM, the theoretical analysis of 
some caching strategies may become intractable. 
To address this issue, an ON-OFF traffic model is proposed 
in~\cite{garetto2015efficient},
which captures time-variant data popularities and 
supports efficient analysis for a number caching strategies. 
An age-based threshold (ABT) caching policy is proposed 
for small user populations 
under SNM~\cite{leconte2016placing}. 

{
\color{black}
Numerous studies have explored methods for tracking dynamic data demands and optimizing
caching decisions to achieve better efficiency~\cite{hoang2018dynamic, qi2019learning, kumar2020optimized,
gao2020design, zong2022cocktail}.
However, they are different from this paper in the following aspects.
1) Existing works typically consider the aggregated demands from a group of users and 
attempt to track the dynamic demands over a relatively long time period
by collecting historical requests from different users.
The individualized demands that could change dramatically
in a short time period (e.g., after each request for the data item)
have not been well addressed by existing works. 
2) These works did not explore the joint optimization of 
overhearing and caching decisions when demands are dynamically changing. 
In this paper, we fill the gap by proposing a new model to describe 
the individualized demand dynamics and designing new edge caching policies 
that achieve provably better performance by leveraging overhearing opportunities
at the wireless edges.

Another prevalent category of dynamics in caching problems is the content dynamics,
where each data item could be occasionally refreshed, rendering older versions obsolete.
Different caching strategies have been explored to optimize caching efficiency and content freshness
~\cite{abolhassani2021single,bastopcu2020information, zhang2021aoi}.
We note that such content dynamics are different from the demand 
dynamics investigated in this paper, because 
the data popularity changes under content dynamics are triggered by the refresh of content sources, 
while popularities under demand dynamics are changing with users' actions (e.g., recent requests). 
Furthermore, edge caching with overhearing opportunities have been shown great potential to 
improve energy efficiency, transmission delays and data freshness
~\cite{poularakis2016exploiting,cui2016analysis,
zhou2017optimal,amiri2018caching,abolhassani2020delay},
but these designs didn't consider the individualized demand dynamics. 
}

\section{Model Description}
\label{sec:model}
In this section, we will first formally model the edge demand dynamics 
by ON-OFF processes. 
Then we will introduce TTL based caching and overhearing policies 
and formulate a hit ratio maximization problem.

\begin{figure}[h]
    \centering
    \vspace{-2mm}
    \includegraphics[width=8.3cm]{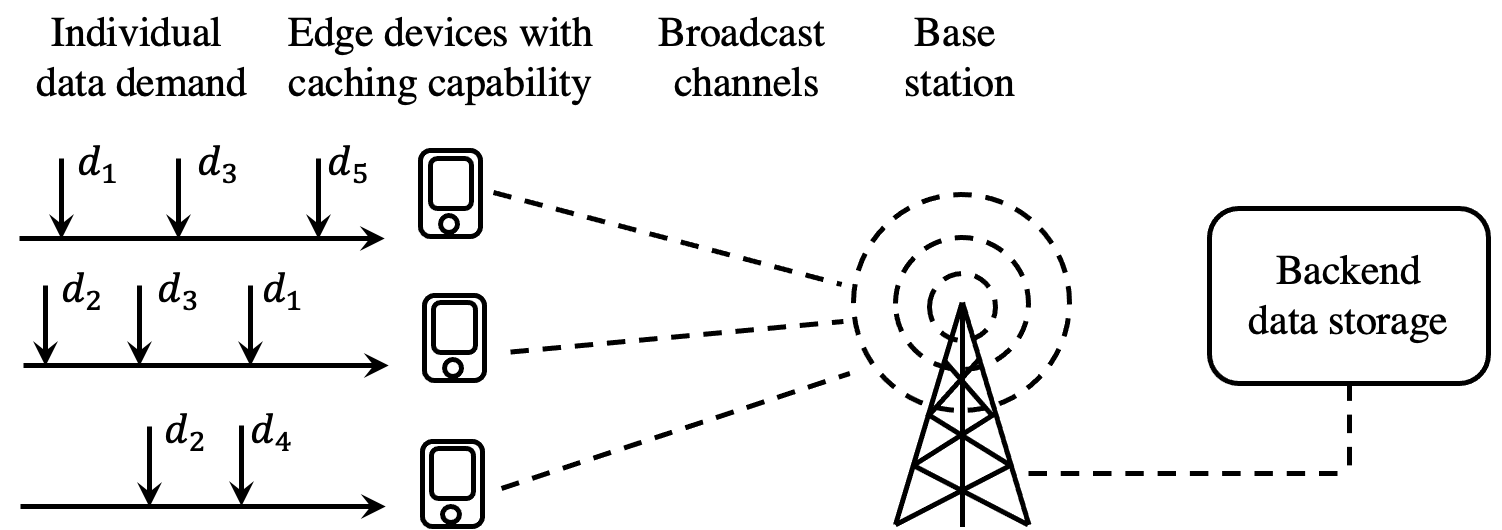}
    \vspace{-1mm}
    \caption{Edge caching with individual data demand.
    % Multiple edge devices are connected to a base station through
    % broadcast channels, where each of them serves data requests from 
    % an individual user. 
    }
    \vspace{-0mm}
    \label{fig:model}
\end{figure}

\subsection{Individual Demand Dynamics}
\label{sec:demand_dynamics}
Consider $M$ edge caches connected to a base station through 
wireless channels, as shown in Fig.~\ref{fig:model}.
Each edge cache serves data requests from a single user. 
We use $m$, $1\leq m \leq M$, to index an edge cache or the user served by the edge cache interchangeably. 
Let $\{d_i$, $1\leq i \leq N\}$ denote a set of $N$ 
distinct data items.
Assume that the data items are of unit size and
each edge cache has a size of $b$, $0 < b \leq N$.
Each edge cache serves an individual user independently. 
If the requested data is stored in the cache, 
then the request could be served immediately with a low latency, 
which is called a cache hit.
Otherwise, the requested data has to be obtained from the backend data storage
and sent back to serve the user's demand,
which is called a cache miss.

{\color{black}
To characterize the demand dynamics of individual users, 
we model the requests for the data item $d_i$, $1\leq i\leq N$, 
generated by each user 
as a renewable ON-OFF process. 
Specifically, after the user requests $d_i$,
he/she will not request it again within $s_i$ units of time, 
which is the OFF period. 
The OFF period can effectively capture the transfer of user interests 
as well as the time gaps between users' actions
as demonstrated by a trace analysis on Yelp~\cite{tadrous2018action}.
After the OFF period, the next request for $d_i$
will be generated according to a Poisson process with rate $\beta_i$,
which is the ON period. 
Without loss of generality, we assume that 
the data items are indexed such that $\beta_i$'s are decreasing with 
respect to $i$.
When a new request is generated in the ON period, 
a subsequent OFF period starts immediately and 
the ON-OFF process is renewed. 

For example, in Fig.~\ref{fig:on-off} we illustrate a sequence of requests
for the data item $d_1$ with $s_1 = 2$ and $\beta_1 = 1$.
The first request for $d_1$ is initiated at epoch 0.
After the first request, there is an OFF period of a fixed duration 2,
during which, the user is not interested in requesting $d_1$.
Starting from epoch 2, the first OFF period ends, ushering in an ON period, where 
the user would request $d_1$ again. 
Within the ON period, the next request for $d_1$ will be generated based on a Poisson process with a rate 1. 
In this example, the next request occurs at epoch 4, 
at which time, the second OFF period starts and the whole process is renewed. 
% This renewable ON-OFF process characterizes the individual demand pattern that 
% a user may not be interested in requesting the same data item again, 
% if he/she has requested it recently.
% We assume the caches are homogeneous in the sense that 
% they have identical demand dynamics and cache sizes.
\begin{figure}[h]
    \vspace{-2mm}
    \centering
    \includegraphics[width=7cm]{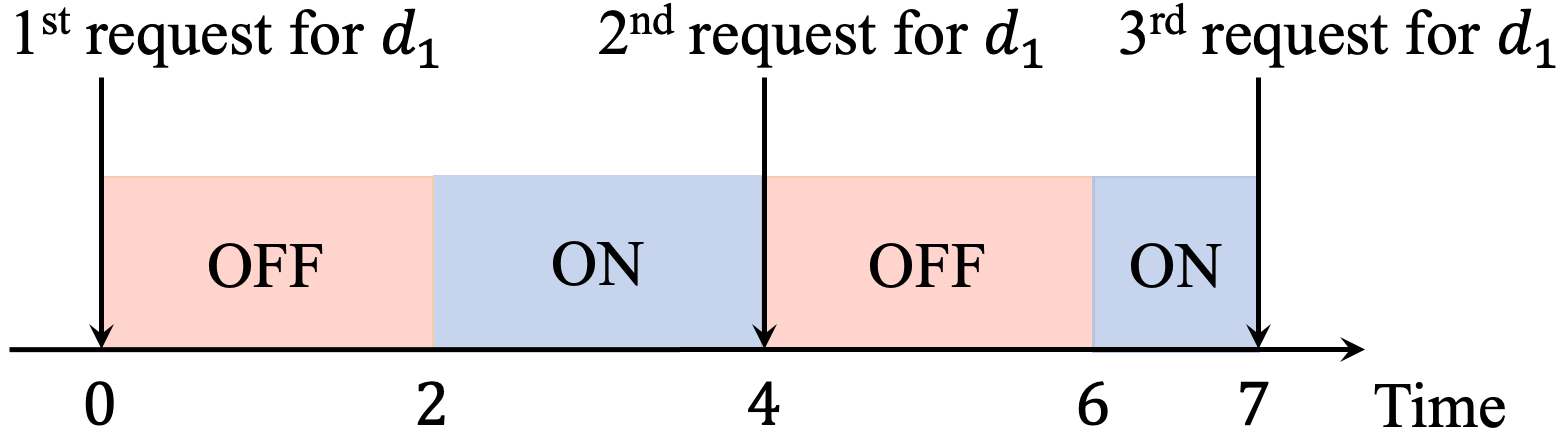}
    \vspace{-2mm}
    \caption{\color{black}Individualized demands characterized by renewable ON-OFF processes.}
    \vspace{-2mm}
    \label{fig:on-off}
\end{figure}

The proposed renewable ON-OFF process describes how a user's demand for a data item will evolve based on the user's recent requests for it. 
% Note that the conventional Poisson arrival demand patterns is a special case (i.e., $s_i = 0$) of 
% the proposed renewable ON-OFF process. 
We note that the proposed model could characterize different demand patterns depending on the value of $s_i$
\begin{itemize}
    \item When $0 < s_i < +\infty$, the data item won't be requested again in the near future, if the user has recently made a request for it. 
    However, over time, the user might regain interest in it.
    For example, music within a playlist typically follow this demand pattern.
    \item When $s_i = +\infty$, $d_i$ will never be requested again after the first request for it.
    For example, the weather information for today is rarely in demand beyond the current day.
    \item
    When $s_i = 0$, the requests for data item $d_i$ will be generated following a 
    Poisson arrival process with a constant rate $\beta_i$, 
    which indicates that users are consistently interested in such data items.
    This setting corresponds to the conventional accumulated demands at the network core. 
\end{itemize}
In this paper, we assume that the parameters $s_i$ and $\beta_i$ are fixed and known, 
and focus on how we should update the edge cache content 
for a set of candidate data items with different demand patterns (i.e., different $\beta_i$, $s_i$ values). 
In real practice, $s_i$ and $\beta_i$ could be unknown and different approaches could be applied to estimate them.
As an illustration, we can employ clustering algorithms to group users with similar interests, 
enabling us to leverage the observed historical requests from similar users to estimate parameters for others~\cite{hoang2018dynamic}.
}

In the main paper, we consider the homogeneous demand dynamics, 
where different users have the same request pattern 
(i.e., $s_i$, $\beta_i$ only depend on the data item $d_i$ and are identical for different users). 
This setting is particularly relevant to scenarios where individual users 
have common data interests (e.g., on-trend music and TV series).
The homogeneous setting helps us focus on the impact of individualized demands at the network edge
as opposed to the aggregated demands at the network core.
Later, in Section~\ref{sec:gen}, we will show that most of the theorems and insights obtained 
for the homogeneous setting are also valid when users have heterogeneous demands.

\subsection{Overhearing Opportunities}
% We define the caching efficiency 
% (or equivalently, the hit density~\cite{beckmann2018lhd}) 
% as the hit ratio per cache space. 
To improve the edge caching efficiency under such dynamic data demands, 
we will leverage overhearing opportunities over the wireless channels. 
Specifically, the base station can broadcast data items from time to time. 
When the cache overhears a data item, it may decide to store it or not
based on the adopted caching and overhearing policies. 
Since the users are assumed to have common data interests, 
the overheard data item could potentially satisfy the demand of multiple users,
and therefore, improves the caching efficiency. 
Note that the privacy concerns or the data encryption are not considered in this paper.
Developing efficient edge caching policies with privacy protection 
is a crucial avenue for future research.
However, this topic falls outside the scope of this paper.

In this paper, we will investigate optimal caching and overhearing policy
under the following two different overhearing scenarios depending on 
how the overhearing opportunities are generated.

\noindent
\textbf{Scenario 1 (Time-driven overhearing):} 
In this scenario, we assume that the base station will broadcast each data item 
based on independent Poisson processes with given fixed rates. 
The caches could passively overhear and need to decide whether to store the 
overheard data items. 
{\color{black}
Note that the overhearing processes considered in this scenario are fixed and independent of caching decisions.
% Under this setting, 
% the overhearing processes are independent of the data requests,
% the policies or the total number of caches in the system. 
The time-driven overhearing opportunities are given by the environment
and our goal is to find good policies to leverage these opportunities.
This simple setting could provide us informative insights to 
optimize caching decisions with overhearing opportunities,
which inspires the policy design for more realistic settings in Scenario~2. 
}

\noindent
\textbf{Scenario 2 (Event-driven overhearing):}
In this scenario, we consider a more realistic setting.
When a cache misses some data item, e.g., $d_1$,
the base station will fetch $d_1$ from the backend storage and 
send it back to the cache over the broadcast channel.
Meanwhile, the other caches could overhear $d_1$ and decide whether to 
cache it or not. 
{\color{black}
Unlike time-driven overhearing,
the event-driven overhearing opportunities are not fixed or given by the environment.
They are shaped by the miss behaviors, 
which, in turn, are determined by the caching policies. }
Thus, the policy design for this scenario is expected to be more challenging.

\subsection{TTL-Based Caching and Overhearing Policies}
\label{sec:model_policy}
To achieve as many hits as possible,  
the key question to answer is how to update the cache content. 
In this paper, we consider the design where each data item 
can be updated separately based on its own rule. 
Formally,
we use a vector $\policyvec = (\policy_1, \policy_2, \cdots, \policy_N)$
to denote the policy for managing the $N$ data items, 
where each element $\policy_i$ is the update rule for a data item $d_i$, $1\leq i\leq N$.
To avoid possible confusion, we call $\policyvec$ a policy and each $\policy_i$
an item policy. 

First, let us consider the item policies that belong to 
the following TTL based caching and overhearing item policy set. 
Define
\begin{align}
    \Pi^\tco = \{\pi^\tco(\tau, \omega): \omega \geq \tau\geq 0\},
\end{align}
where an item policy $\pi^\tco(\tau,\omega)$
is determined by two parameters for each data item, 
i.e., the caching TTL $\tau$ and the deaf TTL $\omega$.
The superscript ``co'' stands for caching and overhearing. 
% For each data item $d_i$, we will make caching decisions for it
% based on a customized policy $\pi^\tco(\tau_i,\omega_i)$,
% where $\tau_i, \omega_i$ can take different values for different data items, 
% $1\leq i\leq N$. 
% and the most recent time to request $d_i$ is $t_i$.
% Then the future timeline can be divided into three intervals, i.e., 
% the caching interval $(t_i, t_i + \tau_i]$, 
% the miss interval $(t_i + \tau_i, t_i + \omega_i]$
% and the overhearing interval $(t_i + \omega_i, +\infty]$.

In particular, 
assume that the data item $d_i$ is served by the item policy $\pi^\tco(\tau_i,\omega_i)$.
% then the cache content will be updated as follows. 
Then every time the data item $d_i$ is requested,
it will be loaded into the cache regardless of a hit or a miss.
Meanwhile, a caching timer with duration $\tau_i$ 
and a deaf timer with duration $\omega_i$ ($\omega_i\geq \tau_i$) 
will be initiated. 
{\color{black}
In Fig.~\ref{fig:policy}, we illustrate an item policy $\pi^\tco(\tau_1,\omega_1)$
for $d_1$ with $\tau_1 = 3$ and $\omega_1 = 7$.
% Before the next request for $d_i$ arrives, 
% $d_i$ will be cached until the caching timer expires. 
% If the next request for $d_i$ has not arrived
% even when the deaf timer expires,
% the cache will try to overhear $d_i$ and keep it in the cache until the 
% next request.
% The cache update process will be renewed after each request.  
% Accordingly, every time the item $d_i$ is requested, 
% it will be loaded into the cache
% and the caching and deaf timers will be initiated.
\begin{enumerate}
    \item Until the caching timer expires,
    $d_i$ will be cached but promptly evicted once the timer runs out.
    In Fig.~\ref{fig:policy},
    $d_1$ will be cached during the period [0, 3] and evicted at epoch 3. 
    \item Before the deaf timer expires, the cache refrains from loading $d_i$ into cache via overhearing.
    In Fig.~\ref{fig:policy}, although $d_1$ is broadcasted at epoch 5, it will not be loaded into cache
    based on the chosen policy.
    \item After the deaf timer expires, the cache will opportunistically store
    $d_i$ via overhearing when it is broadcasted.  
    Once a request for $d_i$ is generated and fulfilled, 
    both two timers will be reset, and the procedure will be renewed to serve the next request.
    In Fig.~\ref{fig:policy}, $d_1$ will be overheard and loaded into the cache at epoch 8,
    if the second request for $d_1$ is not generated before epoch 8.
\end{enumerate} 
\begin{figure}[h]
    \vspace{-2mm}
    \centering
    \includegraphics[width=7.4cm]{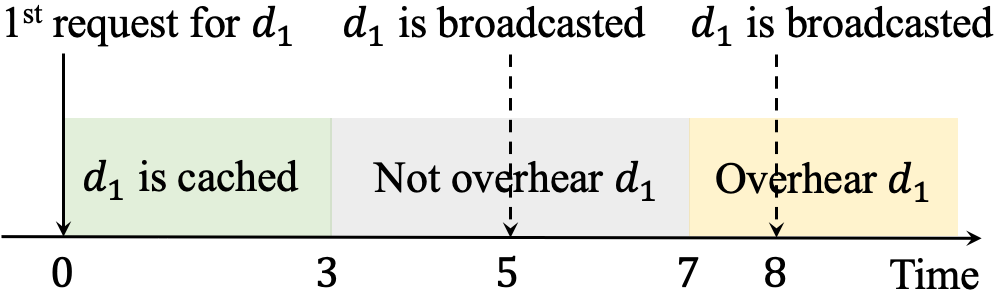}
    \vspace{-2mm}
    \caption{\color{black}TTL-based caching and overhearing policy.}
    \vspace{-2mm}
    \label{fig:policy}
\end{figure}

It is easy to observe that whether the next request for $d_i$ is a hit or a miss 
depends on when it arrives.
Specifically, we will use the example in Fig.~\ref{fig:policy} to illustrate this process.
\begin{enumerate}
    \item Request before eviction: 
    If the second request for $d_1$ arrives before the caching timer expires (i.e., epoch 3), 
    then it is a cache hit since $d_1$ has not been evicted yet. 
    \item Request during the deaf period: 
    If the second request for $d_1$ arrives in the period [3, 7], 
    then a cache miss occurs. 
    To serve the request, $d_1$ has to be fetched from the backend storage. 
    \item Request before overhearing: 
    If the second request for $d_1$ arrives in the period [7, 8],
    then $d_1$ is still a cache miss. 
    \item Request after overhearing:
    If the second request for $d_1$ arrives after epoch 8,
    then it is a cache hit and the request can be served from the cache. 
\end{enumerate}
Note that how to strategically choose $\tau_i$ and $\omega_i$ parameters 
is crucial to efficiently utilize the limited cache space.
For example, if $\omega_i$ is very large, 
the next request for $d_i$ is very likely to be generated
before the deaf period expires, and 
the hit ratio will be low. 
Instead, if $\omega_i$ takes a small value, we could
overhear and load it into the cache at the very early stage,
but it would be a waste of cache space if $d_i$ will not be requested in the near future.
}

To further expand the design space, 
we allow possible randomization for the item policies.
Let 
% \begin{align}
%     & \Pi^\trco = \Big\{\pi^\trco\left(
%         \left(q^{(1)},q^{(2)},\cdots, q^{(n)}\right), 
%         \left(\pi^{(1)},\pi^{(2)},\cdots, \pi^{(n)}\right)
%         \right):  \nonumber \\
%     & \hspace{20mm} 0\leq q^{(j)}\leq 1, \sum_{j=1}^n q^{(j)} = 1, \pi^{(j)} \in \Pi^\tco, 
%     1\leq j\leq n, n\in \N \Big\} \nonumber
% \end{align}
% \begin{align}
%     & \Pi^\trco  \eqd \nonumber \\
%     & \hspace{0mm} \Big\{\pi^\trco \big(
%         \big(q^{(1)},\cdots, q^{(n)}\big), 
%         \big(\tau^{(1)},\cdots, \tau^{(n)}\big), 
%         \big(\omega^{(1)},\cdots, \omega^{(n)}\big)
%         \big):  \nonumber \\
%     & \hspace{0mm} 0\leq q^{(j)}\leq 1, \sum_{j=1}^n q^{(j)} = 1, 
%     0\leq \tau^{(j)} \leq \omega^{(j)}, 
%     1\leq j\leq n, n\in \N \Big\} \nonumber
% \end{align}
\begin{align}
    & \Pi^\trco  \eqd \nonumber \\
    & \hspace{0mm} \Big\{\pi^\trco \big(
        \big(q^{(1)},\cdots, q^{(n)}\big), 
        \big(\tau^{(1)},\cdots, \tau^{(n)}\big), 
        \big(\omega^{(1)},\cdots, \omega^{(n)}\big)
        \big):&  \nonumber \\
    & 0\leq q^{(j)}\leq 1, \sum_{j=1}^n q^{(j)} = 1, 
    0\leq \tau^{(j)} \leq \omega^{(j)}, \nonumber\\
    & 1\leq j\leq n, n\in \N \Big\} \nonumber
\end{align}
denote the set of all possible randomized item policies based on $\policyset^\tco$,
where the superscript ``rco'' stands for randomized caching and overhearing.
Each randomized item policy 
is a randomization of $n$ deterministic item policies in $\policyset^\tco$,
where $n$ could be any positive integer
and $q^{(j)}$ is the probability to apply the $j$-th deterministic item policy 
$\policy^\tco(\tau^{(j)},\omega^{(j)})$.
% A randomized item policy 
% $\policy^\trco(\mathbf{q},\boldsymbol{\tau},\boldsymbol{\omega})$ 
% is determined by 
% a discrete probability distribution $\mathbf{q}$,
% a caching TTL vector $\boldsymbol{\tau}$ and 
% a deaf TTL vector  $\boldsymbol{\omega}$.
% and an array of deterministic caching and overhearing policies $\policyvec^\tco$'s.
Suppose $d_i$ is served by 
% $\policy^\trco\left((q^{(1)},\cdots, q^{(n)}), (\pi^{(1)},\cdots, \pi^{(n)})\right)$,
$\policy^\trco(\mathbf{q}_i, \boldsymbol{\tau}_i, \boldsymbol{\omega}_i)$
with $\mathbf{q}_i=(q_i^{(1)},\cdots, q_i^{(n)})$,
$\boldsymbol{\tau}_i=(\tau_i^{(1)},\cdots, \tau_i^{(n)})$
and $\boldsymbol{\omega}_i=(\omega_i^{(1)},\cdots, \omega_i^{(n)})$.
Every time the data item $d_i$ is requested, 
a deterministic item policy $\pi^\tco(\tau^{(j)}_i, \omega^{(j)}_i)$
will be selected with a probability $q_i^{(j)}$ and applied to update the cache content, 
$1\leq j\leq n$. 
Notably, each item $d_i$ can be served by 
its customized item policy with carefully-selected parameters
$\mathbf{q}_i$, $\boldsymbol{\tau}_i$ and $\boldsymbol{\omega}_i$.
And the caching decisions for different data items are independent.
Thus, we can analyze the hit ratio of each data item separately. 
Since the caches are homogeneous, we assume that 
the item policies for the same data item are identical on difference caches.

\subsection{Hit Ratio Maximization}\label{sec:HRM}
% We use the hit ratio to evaluate the performance of a policy. 
For each data item $d_i$, $1\leq i \leq N$, 
we define its expected hit ratio 
achieved by an item policy $\policy_i$ on an edge cache as
\begin{align}
     & h_i(\policy_i) \nonumber \\
     & \eqd \expect \left[ \lim_{T \to \infty} 
     \frac{\text{Number of hits for $d_i$ during $[0, T]$ under $\policy_i$}}
     {\text{Number of requests for $d_i$ during $[0, T]$}} \right]. \nonumber
\end{align}
% % The expected hit ratio for $d_i$ can be calculated as 
% $$h_i(\policy_i) = \expect [H_i(\policy_i)].$$
Let $p_i$ denote the probability that a request is for data item $d_i$, $1\leq i\leq N$,
which can be calculated as
\begin{align}\label{eq:p_i}
    p_i = \frac{1}{s_i + 1/\beta_i} \left/\sum_{j=1}^N\frac{1}{s_j + 1/\beta_j}\right. .
\end{align}
Since the demand dynamics are homogeneous across different caches,
the overall expected hit ratio of all $M$ caches is 
equal to the expected hit ratio of a single cache, 
which can be expressed by $\sum_{i=1}^N p_ih_i(\policy_i)$. 
We would like to maximize the overall expected hit ratio 
under the cache capacity constraint. 
For an edge cache, define the expected cache occupancy for $d_i$ as 
\begin{align}
    r_i(\policy_i) \eqd  \expect \Big[ 
        \lim_{T \to \infty} \frac{1}{T} \cdot 
        &\big(\text{Duration when $d_i$ is stored} \nonumber \\
        &\hspace{4mm}\text{ in the cache during $[0,T]$}\big)
    \Big], \nonumber
\end{align}
% and the expected cache occupancy as 
% $$r_i(\policy_i) = \expect [R_i(\policy_i)].$$
which characterizes the average cache space used for storing $d_i$.
The cache capacity constraint states that 
the expected cache occupancy of all data items should not 
exceed the cache size.
{\color{black}
Notably, the cache capacity constraint considers the average cache occupancy 
rather than the real-time cache occupancy.
We adopt the cache capacity constraint in an average sense for the following reason:
\begin{itemize}
    \item It simplifies the analysis of the hit ratio maximization problem, so we could derive efficient
        caching policies with provable performance.  
    \item The caching policy obtained under the average cache capacity constraint could be easily 
     generalized to satisfy to real-time cache capacity constraint with minor performance regressions. 
     For example, 
     if loading an overheard item into cache will violate the real-time cache capacity constraint, 
     we have the option to reject the operation. 
\end{itemize} 
}

% Our goal is to find the optimal policy from the policy set $\policyset^\trco$ that 
% maximizes the overall expected hit ratio and 
% guarantees that the overall expected cache occupancy does not 
% exceed the cache size. 
Formally, we propose the hit ratio maximization problem 
\begin{align}\label{eq:opt_TTL_1}
    \max_{\policy_i} \hspace{12mm}& \sum_{i=1}^N p_i \cdot h_i(\policy_i)   \\
    \text{subject to} \hspace{8mm} & \policy_i \in \policyset^\text{rco}, \hspace{9mm} 1\leq i\leq N,\nonumber \\
    &\sum_{i=1}^N r_i(\policy_i) \leq b. \nonumber
\end{align}
The objective is to maximize the overall hit ratio of an edge cache by
selecting the optimal item policy for each data item from 
the item policy set $\policyset^\trco$.
Since the demand dynamics are homogeneous across different caches,
applying the optimal policy of the proposed problem to all $M$ caches 
should maximize the overall hit ratio of the entire system.  

{\color{black}
Note that the optimal caching policies will not changer over time, instead it 
captures the statistics of the demand dynamics and maximizes the expected overall hit ratio.
However, if the statistics used to characterize the demand dynamics (i.e., $s_i$ and $\beta_i$) are time varying, 
it would necessitate the optimal policy to change over time. 
Such a topic, however, falls outside the scope of this paper's discussion.
}
Next, we will investigate this problem under 
the two different overhearing settings, i.e., time-driven 
and event-driven overhearing.

\section{Edge Caching with Time-Driven Overhearing}
\label{sec:single}

In this section, we consider the time-driven overhearing scenario
where the overhearing processes of the users 
are governed by independent processes from the request dynamics. 
This is particularly relevant to scenarios 
where the base station broadcasts the items at a regular rate.  
In particular, we assume that the overhearing opportunity of the data item $d_i$ 
is a Poisson process with rate $\lambda_i$, $1\leq i \leq N$.
Once an overhearing opportunity is generated, 
each cache may decide to load the overheard item into its cache or not, depending on 
the caching and overhearing policy.

\subsection{Hit Ratios and Cache Occupancies}
% For the time-driven overhearing opportunities,
% each cache makes its own caching decisions independently.
Since the caches are homogeneous
and the overhearing process is independent of the number of caches, 
it suffices to analyze the system with a single cache. 
To simplify the notations, we use $h_i^\tco(\tau_i,\omega_i) \eqd h_i(\pi^\tco(\tau_i,\omega_i))$
and $r_i^\tco(\tau_i,\omega_i) \eqd r_i(\pi^\tco(\tau_i,\omega_i))$
to denote the expected hit ratio and cache occupancy of the data item $d_i$,
if it is served by the deterministic item policy $\pi^\tco(\tau_i,\omega_i)$.
In Theorem~\ref{theorem:hit_ratio}, we characterize $h_i^\tco(\tau_i,\omega_i)$ and 
$r_i^\tco(\tau_i,\omega_i)$ explicitly.

\begin{theorem}\label{theorem:hit_ratio}
    % For each data item $d_i$, $1\leq i\leq N$, assume the overhearing opportunity of it 
    % arrives following a Poisson process with rate $\lambda_i$.
    For time-driven overhearing, if 
    the data item $d_i$ is served by a deterministic item policy $\pi^\tco(\tau_i,\omega_i) \in \policyset^\tco$,
    we have

    \noindent
    (1) for $\tau_i \leq \omega_i \leq s_i$,
    \begin{align}
        & h_i^\tco(\tau_i, \omega_i) = 1- \frac{\beta_i}{\lambda_i+\beta_i} \exp(-\lambda_i(s_i-\omega_i)), \nonumber \\
        & r_i^\tco(\tau_i, \omega_i) = \frac{1}{\expect[X_i]} \Big(
            \frac{\beta_i}{\lambda_i(\lambda_i+\beta_i)} \exp(-\lambda_i(s_i-\omega_i))\nonumber \\
            & \hspace{27mm} + s_i - \omega_i - \frac{1}{\lambda_i} + \frac{1}{\beta_i}
        \Big), \nonumber
        % \label{eq:r_oc_1}
    \end{align}
        
    \noindent
    (2) for $\tau_i \leq s_i  \leq \omega_i$,
    \begin{align}
        & h_i^\tco(\tau_i, \omega_i) = \frac{\lambda_i}{\lambda_i+\beta_i} \exp(-\beta_i(\omega_i-s_i)), \nonumber \\
        & r_i^\tco(\tau_i, \omega_i) = \frac{1}{\expect[X_i]} \cdot \frac{\lambda_i}{\beta_i(\lambda_i+\beta_i)} 
          \exp(-\beta_i(\omega_i-s_i)), \nonumber
    \end{align}

    \noindent
    (3) for $s_i \leq \tau_i \leq \omega_i$,
    \begin{align}
        & h_i^\tco(\tau_i, \omega_i) = 1 - \exp(-\beta_i(\tau_i-s_i)) \nonumber \\
            & \hspace{33mm} + \frac{\lambda_i}{\lambda_i+\beta_i} \exp(-\beta_i(\omega_i-s_i)), \nonumber \\
        & r_i^\tco(\tau_i, \omega_i) = 1 + 
        \frac{1}{\expect[X_i]} \cdot \Big( 
            \frac{\lambda_i}{\beta_i(\lambda_i+\beta_i)} \exp(-\beta_i(\omega_i-s_i)) \nonumber \\
            & \hspace{33mm}- \frac{1}{\beta_i} \exp(-\beta_i(\tau_i-s_i))
        \Big), \nonumber
    \end{align}
    where $X_i$ is defined as the inter-request time for $d_i$ and $\expect[X_i] = s_i + 1/\beta_i$.
\end{theorem}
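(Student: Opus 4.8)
The plan is to analyze a single renewal cycle of the ON-OFF process for $d_i$ and compute the hit and occupancy contributions conditioned on when the next request arrives, then normalize by the mean cycle length $\expect[X_i] = s_i + 1/\beta_i$. Fix a request epoch at time $0$, which resets both the caching timer $\tau_i$ and the deaf timer $\omega_i$, and let the next request epoch be $X_i$, where $X_i = s_i + Y_i$ with $Y_i \sim \text{Exp}(\beta_i)$ (the deterministic OFF period $s_i$ followed by a Poisson inter-arrival). The key auxiliary random variable is the first overhearing epoch after $\omega_i$: since the overhearing process is Poisson($\lambda_i$) and independent of everything else, memorylessness implies that the waiting time from $\omega_i$ until the first overhearing opportunity is $\text{Exp}(\lambda_i)$; call this overhearing-load epoch $\omega_i + Z_i$ with $Z_i \sim \text{Exp}(\lambda_i)$. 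A request at time $X_i$ is a hit precisely when either $X_i \le \tau_i$ (still in cache before eviction) or $X_i \ge \omega_i + Z_i$ (already re-loaded via overhearing); it is a miss when $\tau_i < X_i < \omega_i + Z_i$.

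For the hit ratio, I would compute $h_i^\tco(\tau_i,\omega_i) = \P(X_i \le \tau_i) + \P(X_i \ge \omega_i + Z_i)$, splitting into the three regimes according to where $s_i$ sits relative to $\tau_i$ and $\omega_i$, because the law of $X_i$ is a point mass shifted by an exponential and the thresholds $\tau_i, \omega_i$ may fall inside the deterministic OFF part or the exponential ON part. In regime (1), $\tau_i \le \omega_i \le s_i$: then $X_i \ge s_i \ge \omega_i$ always, so the hit event reduces to $\{X_i \ge \omega_i + Z_i\}$; conditioning on $Z_i$ and using $X_i - s_i \sim \text{Exp}(\beta_i)$ together with $\P(Z_i > s_i - \omega_i + t)$ gives the stated $1 - \frac{\beta_i}{\lambda_i+\beta_i}e^{-\lambda_i(s_i-\omega_i)}$ after an elementary integral of a product of exponentials. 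In regime (2), $\tau_i \le s_i \le \omega_i$: now $X_i \le \tau_i$ is impossible, and $\{X_i \ge \omega_i + Z_i\}$ requires both $X_i \ge \omega_i$ (probability $e^{-\beta_i(\omega_i - s_i)}$) and then a race between the residual $\text{Exp}(\beta_i)$ and $Z_i \sim \text{Exp}(\lambda_i)$, giving the factor $\frac{\lambda_i}{\lambda_i+\beta_i}$. In regime (3), $s_i \le \tau_i \le \omega_i$: the hit-before-eviction term contributes $\P(X_i \le \tau_i) = 1 - e^{-\beta_i(\tau_i - s_i)}$, and the overhearing term contributes the same expression as in regime (2); these add because the two events are disjoint.

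For the cache occupancy $r_i^\tco(\tau_i,\omega_i)$, I would use the renewal-reward identity $r_i = \expect[\text{time } d_i \text{ occupies cache in one cycle}] / \expect[X_i]$. Within a cycle the item occupies the cache during $[0, \min(\tau_i, X_i)]$ (the post-request caching interval, truncated if the next request is early) plus, if $X_i > \omega_i + Z_i$, the interval $[\omega_i + Z_i, X_i]$ (from overhearing-load until the next request renews it). So the per-cycle occupancy is $\expect[\min(\tau_i, X_i)] + \expect[(X_i - \omega_i - Z_i)^+]$, and I would evaluate each expectation in the three regimes using the shifted-exponential law of $X_i$ and independence of $Z_i$; the $(X_i - \omega_i - Z_i)^+$ term is where the $\exp$ and $\frac{1}{\lambda_i(\lambda_i+\beta_i)}$ type factors enter, matching the formulas. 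The main obstacle is bookkeeping rather than conceptual: correctly handling the truncation $\min(\tau_i, X_i)$ and the positive-part $(X_i-\omega_i-Z_i)^+$ in each of the three orderings of $s_i, \tau_i, \omega_i$, and verifying that the "renew on request" convention means a request in the deaf period $(\tau_i, \omega_i)$ contributes zero occupancy between eviction and renewal. I would also double-check the edge case where the next request arrives during the OFF period — impossible by construction — and the degenerate limits $s_i \to 0$ and $\lambda_i \to \infty$ as sanity checks on the closed forms.
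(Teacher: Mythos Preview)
Your approach is essentially the paper's: a single-cycle renewal analysis with the next request at $X_i=s_i+\text{Exp}(\beta_i)$ and the first post-$\omega_i$ overhearing at $\omega_i+Z_i$ with $Z_i\sim\text{Exp}(\lambda_i)$, yielding hit probability $\P(X_i\le\tau_i)+\P(X_i\ge\omega_i+Z_i)$ and per-cycle occupancy $\expect[\min(\tau_i,X_i)]+\expect[(X_i-\omega_i-Z_i)^+]$ normalized by $\expect[X_i]$. The paper packages exactly this split into a preliminary lemma $h_i^{\tco}=h_i^{\tc}+h_i^{\too}$, $r_i^{\tco}=r_i^{\tc}+r_i^{\too}$ (caching-only plus overhearing-only), and then evaluates the two pieces case by case; the integrals it computes are the same ones you would do.

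One point to watch when you carry this out: in regimes (1) and (2) your caching term gives $\expect[\min(\tau_i,X_i)]=\tau_i$, hence a contribution $\tau_i/\expect[X_i]$ to the occupancy, whereas the theorem's stated $r_i^{\tco}$ in those two cases has no $\tau_i$-dependence at all. The paper's proof obtains this by asserting that the caching-only occupancy $r_i^{\tc}(\tau_i)$ equals zero whenever $\tau_i\le s_i$. Following your computation literally will therefore produce an extra $\tau_i/\expect[X_i]$ relative to the stated formulas in (1) and (2); the jump of exactly $s_i/\expect[X_i]$ between cases (2) and (3) at $\tau_i=s_i$ in the theorem as written points to the same issue. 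Be prepared to reconcile this discrepancy rather than assume an algebra mistake on your side.
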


The proof of Theorem~\ref{theorem:hit_ratio} is presented in Appendix~\ref{sec:pf_hit_ratio}. 
Note that for a fixed $\tau_i<s_i$, $h_i^\tco$ is concave with respect to $\omega_i$
when $\omega_i<s_i-\tau_i$ and convex when $\omega_i\geq s_i-\tau_i$.
Therefore, the original problem (\ref{eq:opt_TTL_1}) is a nonconvex optimization problem,
which is difficult to solve in general.
However, for this specific problem, 
we could find the global optimum by
exploiting an informative structure of the optimal solution,
which will be presented in Section~\ref{sec:structure}.

Next, we will leverage Theorem~\ref{theorem:hit_ratio} to calculate the 
expected hit ratio and cache occupancy for randomized item policies. 
Consider a randomized item policy $\pi_i^\trco(\qq_i, \boldsymbol{\tau}_i, \boldsymbol{\omega}_i)$ 
for the data item $d_i$ with
$\qq_i = (q_i^{(1)}, \cdots, q_i^{(n)})$, 
$\boldsymbol{\tau}_i = (\tau_i^{(1)}, \cdots, \tau_i^{(n)})$,
$\boldsymbol{\omega}_i = (\omega_i^{(1)}, \cdots, \omega_i^{(n)})$.
% \begin{align}
%     \qq_i = \left(q_i^{(1)}, \cdots, q_i^{(n)}\right), 
%     \boldsymbol{\tau}_i = \left(\tau_i^{(1)}, \cdots, \tau_i^{(n)}\right),
%     \boldsymbol{\omega}_i = \left(\omega_i^{(1)}, \cdots, \omega_i^{(n)}\right).
%     \nonumber
%     % \policyvec^\tco_i = 
%     % \left(\pi^\tco\left(\tau_i^{(1)}, \omega_i^{(1)}\right),\cdots, 
%     % \pi^\tco\left(\tau_i^{(n)}, \omega_i^{(n)}\right) \right).
% \end{align}
Let $h_i^\trco$ and $r_i^\trco$ denote the expected hit ratio and the expected
cache occupancy achieved by $\pi_i^\trco$. 
We derive the explicit expression for $h_i^\trco$  and $r_i^\trco$ 
in the following theorem.

\begin{theorem}\label{theorem:hit_ratio_random}
    For time-driven overhearing, if 
    the data item $d_i$ is served by a randomized item policy 
    $\pi^\trco(\qq_i, \boldsymbol{\tau}_i, \boldsymbol{\omega}_i)$,
    then we have
    \begin{align}
        & h_i^\trco (\qq_i, \boldsymbol{\tau}_i, \boldsymbol{\omega}_i) 
        = \sum_{j=1}^n q_i^{(j)} \cdot 
            h_i^\tco(\tau_i^{(j)}, \omega_i^{(j)}), \nonumber \\
        & r_i^\trco (\qq_i, \boldsymbol{\tau}_i, \boldsymbol{\omega}_i)
        = \sum_{j=1}^n q_i^{(j)} \cdot
            r_i^\tco(\tau_i^{(j)}, \omega_i^{(j)}), \nonumber
    \end{align}
    where $h_i^\tco(\tau_i^{(j)}, \omega_i^{(j)})$, $r_i^\tco(\tau_i^{(j)}, \omega_i^{(j)})$
    can be explicitly characterized by Theorem~\ref{theorem:hit_ratio}.
\end{theorem}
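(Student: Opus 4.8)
The plan is to prove the statement by a renewal--reward argument resting on one structural fact: for any item policy in $\policyset^\trco$, the sequence of request epochs for $d_i$ is a renewal process whose distribution does not depend on the chosen policy, since the policy controls only what is stored in the cache, never when $d_i$ is requested. First I would fix this cycle structure. Let $T_1<T_2<\cdots$ be the successive request epochs for $d_i$ and call $[T_k,T_{k+1})$ the $k$-th cycle, with length $X_i^{(k)}=T_{k+1}-T_k$; by the renewable ON--OFF assumption the $X_i^{(k)}$ are i.i.d.\ with $\expect[X_i]=s_i+1/\beta_i$. Under $\pi^\trco(\qq_i,\btau_i,\bomega_i)$, at each $T_k$ an index $J_k$ is drawn independently with $\P(J_k=j)=q_i^{(j)}$, after which the deterministic rule $\pi^\tco(\tau_i^{(j)},\omega_i^{(j)})$ governs cycle $k$. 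Because a request always reloads $d_i$ and restarts both the caching and deaf timers, the system state is fully reset at every $T_k$; hence the cycle-$k$ quantities $H_k$ (the indicator that the request at $T_{k+1}$ is a hit) and $O_k$ (the time $d_i$ spends in the cache during $[T_k,T_{k+1})$) are functions only of $J_k$, of $X_i^{(k)}$, and of the increments on that cycle of the independent rate-$\lambda_i$ overhearing Poisson process. It follows that the triples $(H_k,O_k,X_i^{(k)})$ are i.i.d.

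Next I would apply the strong law of large numbers (equivalently, the renewal--reward theorem). Writing $N(T)$ for the number of requests for $d_i$ in $[0,T]$, the inner limit defining $h_i^\trco$ equals $\lim_{T\to\infty}N(T)^{-1}\sum_{k\le N(T)}H_k=\expect[H_1]$ almost surely, so the outer expectation in the definition of $h_i$ is vacuous; conditioning on $J_1$ gives $\expect[H_1]=\sum_{j}q_i^{(j)}\,\expect[H_1\mid J_1=j]=\sum_j q_i^{(j)}h_i^\tco(\tau_i^{(j)},\omega_i^{(j)})$, where the last equality is the renewal--reward characterization of $h_i^\tco$ as the probability that a generic request for $d_i$ is a hit, exactly as in the proof of Theorem~\ref{theorem:hit_ratio}. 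For the occupancy, the time $d_i$ is in the cache during $[0,T]$ is $\sum_{k\le N(T)}O_k$ up to boundary contributions that vanish in the long-run average, and $T=\sum_{k\le N(T)}X_i^{(k)}$ up to the same kind of term, so $r_i^\trco=\expect[O_1]/\expect[X_i]$; conditioning on $J_1$ and using $\expect[O_1\mid J_1=j]=\expect[X_i]\,r_i^\tco(\tau_i^{(j)},\omega_i^{(j)})$ (again from Theorem~\ref{theorem:hit_ratio}) cancels the $\expect[X_i]$ factor and yields $r_i^\trco=\sum_j q_i^{(j)}r_i^\tco(\tau_i^{(j)},\omega_i^{(j)})$.

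The SLLN bookkeeping and the $o(T)$ boundary terms from the first and last incomplete cycles are routine. The one step I expect to be the real content is the i.i.d.\ claim for $(H_k,O_k,X_i^{(k)})$: one must argue that no state is carried across request epochs, which is precisely where the ``reload and reset both timers on every request'' feature of $\policyset^\tco$ is used --- it makes every $T_k$ a regeneration point --- together with the independent-increments property of the overhearing process. Once that regeneration is established, the two identities are just linearity of expectation over the mixing weights $q_i^{(j)}$.
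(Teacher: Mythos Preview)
Your argument is correct. The paper itself does not supply a proof of this theorem: it is stated immediately after Theorem~\ref{theorem:hit_ratio} and accompanied only by the remark that ``the expected hit ratio and cache occupancy of a randomized item policy can be calculated as the linear combination of the ones of its basic policies,'' with no derivation. Your renewal--reward argument furnishes exactly the justification the paper leaves implicit. The regeneration structure you identify --- that every request epoch resets both timers and reloads $d_i$, so $(H_k,O_k,X_i^{(k)})$ are i.i.d.\ with the overhearing Poisson increments providing fresh independent randomness on each cycle --- is the substantive point, and once it is in place the two identities are, as you say, just conditioning on $J_1$ and linearity of expectation. So there is no discrepancy to flag: your approach is the natural one and simply makes explicit what the paper takes for granted.
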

It is shown that the expected hit ratio and cache occupancy
of a randomized item policy can be calculated as 
the linear combination of the ones of its basic policies. 

\subsection{Informative Structure of Optimal Policies}
\label{sec:structure}

In this section, we will prove a special structure 
of the optimal caching and overhearing policies, 
which significantly simplifies the optimization problem. 
% To achieve this, we will investigate the relationship between
% the hit ratio and the cache occupancy achieved by an item policy. 
Intuitively, for each data item, 
an item policy utilizes the cache space as the resource to 
achieve a high hit ratio which can be viewed as the revenue.
% The cache occupancy achieved by the item policy 
% is the consumed resource. 
% And the policy $\policyvec = (\pi_1, \cdots, \pi_N)$ 
% allocates the total resource (i.e., the total cache space) to 
% the item policy of each item to achieve a good overall revenue
% (i.e., the overall hit ratio). 
Thus, to evaluate how efficient an item policy is, 
a straightforward approach is to characterize the relationship
between the hit ratio and the cache occupancy achieved by it. 

For each data item $d_i$,
the hit ratio and the cache occupancy that can be achieved by 
some item policy in the set $\policyset^\trco$
can be described by an achievable region. 
% since the performance is impacted by both parameters,
% there may exist different $(\tau, \omega)$ pairs that achieve 
% the same cache occupancy but different hit ratios. 
% Therefore, the hit ratio and the cache occupancy that can be achieved by $\policy^\tco(\tau, \omega)$
% is described by an achievable region, instead of a function. 
Formally, define the achievable region for the data item $d_i$ as 
\begin{align} 
    &\mathcal{R}_i^\tco = \{(r, h): 
    \text{there exists $\policy_i \in \policyset^\trco$ such that } \nonumber \\
    &\hspace{2mm}\text{$\policy_i$ achieves a cache occupancy $r$ and a hit ratio $h$ 
    for $d_i$}\}. \nonumber
\end{align}

To better characterize the achievable region $\mathcal{R}_i^\tco$,
% To better understand the effect of the caching TTL $\tau$
% and the overhearing TTL $\omega$ of the item policy $\policy^\tco(\tau, \omega)$, 
we investigate two specific item policies.
Define
\begin{align}
    \pi^\tc(\tau) \eqd \pi^\tco(\tau,\infty) 
    \hspace{4mm} \text{and} \hspace{4mm}
    \pi^\too(\omega) \eqd \pi^\tco(0, \omega). \nonumber
\end{align}
The \emph{caching-only} item policy $\pi^\tc(\tau)$ is a specific case of 
$\policy^\tco(\tau, \omega)$
with a caching TTL $\tau$ and an infinite overhearing TTL, 
i.e., never overhearing.
The \emph{overhearing-only} item policy $\pi^\too(\omega)$ is a specific case of 
$\policy^\tco(\tau, \omega)$
with an overhearing TTL $\omega$ and a caching TTL zero, 
i.e., evicting the item immediately after serving its request.
% For the item policy $\pi^\tc(\tau)$ ($\pi^\too(\omega)$, respectively), 
% the achieved hit ratio can be viewed as a function of the achieved cache occupancy,
% and different function values are achieved by different parameter $\tau$'s 
% ($\omega$'s, respectively).

If we restrict the item policy to be selected from the item policy set
$\{\pi^\tc(\tau): \tau\geq 0\}$
or $\{\pi^\too(\omega):\omega\geq 0 \}$,
then the achievable region will degenerate to a curve, 
since it can be parameterized in one variable (i.e., $\tau$ or $\omega$, respectively). 
Therefore, the hit ratio achieved by $\pi^\tc(\tau)$ and $\pi^\too(\omega)$
can be viewed as a function of the cache occupancy. 
Formally, for each data item $d_i$,
define
\begin{align*}
    h_i^\tc(r) \eqd h_i^\tco(\tau_i, +\infty),
\end{align*}
where $\tau_i$ is selected such that $r^\tco_i(\tau_i, +\infty) = r$, and
\begin{align*}
    & h_i^\too(r) \eqd h_i^\tco(0, \omega_i)
\end{align*}
where $\omega_i$ is selected such that $r_i^\tco(0, \omega_i) = r$. 
For an item $d_i$,
$h_i^\tc(r)$ (respectively, $h_i^\too(r)$) is the expected hit ratio achieved by 
the item policy $\pi^\tc(\tau_i)$ (respectively, $\pi^\too(\omega_i)$) such that 
the average cache space used to store $d_i$ is $r$.
The $h_i^\tc(r)$ and $h_i^\too(r)$ functions 
can be easily derived based on Theorem~\ref{theorem:hit_ratio}.
We plot these two functions in Fig.~\ref{fig:h_r}.

Note that, by setting $\tau_i=+\infty$, the item policy $\policy^\tc(\tau_i)$
can achieve the maximal hit ratio $1$ and the maximal cache occupancy $1$.
Under this setting, the data item will always be stored in the cache. 
By setting $\omega_i=0$, the item policy $\policy^\too(\omega_i)$
% can achieve the maximal cache occupancy $r_i^\tco(0,0)$ and 
% the maximal hit ratio $h_i^\too(r_i^\tco(0,0)) = h_i^\tco(0,0)$,
% which are both less than $1$.
can achieve the maximal hit ratio $h_i^\too(r_i^\tco(0,0)) = h_i^\tco(0,0)$, 
which is smaller than $1$.
The reason is that the request of $d_i$ may arrive before the overhearing opportunities
even when we set $\omega_i = 0$.
Based on Theorem~\ref{theorem:hit_ratio_random},
any points on the line segment connecting 
$(1,1)$ and $(r_i^\tco(0,0), h_i^\tco(0,0))$ 
can be achieved by a randomization of $\policy^\tc(+\infty)$ and $\policy^\too(0)$.
% As a result, the caching efficiency can be greatly improved, 
% if we consider the randomization of $\policy^\tc(+\infty)$ and $\policy^\too(0)$.

\begin{figure}[t]
    \centering
    \vspace{0mm}
    \includegraphics[width=7cm]{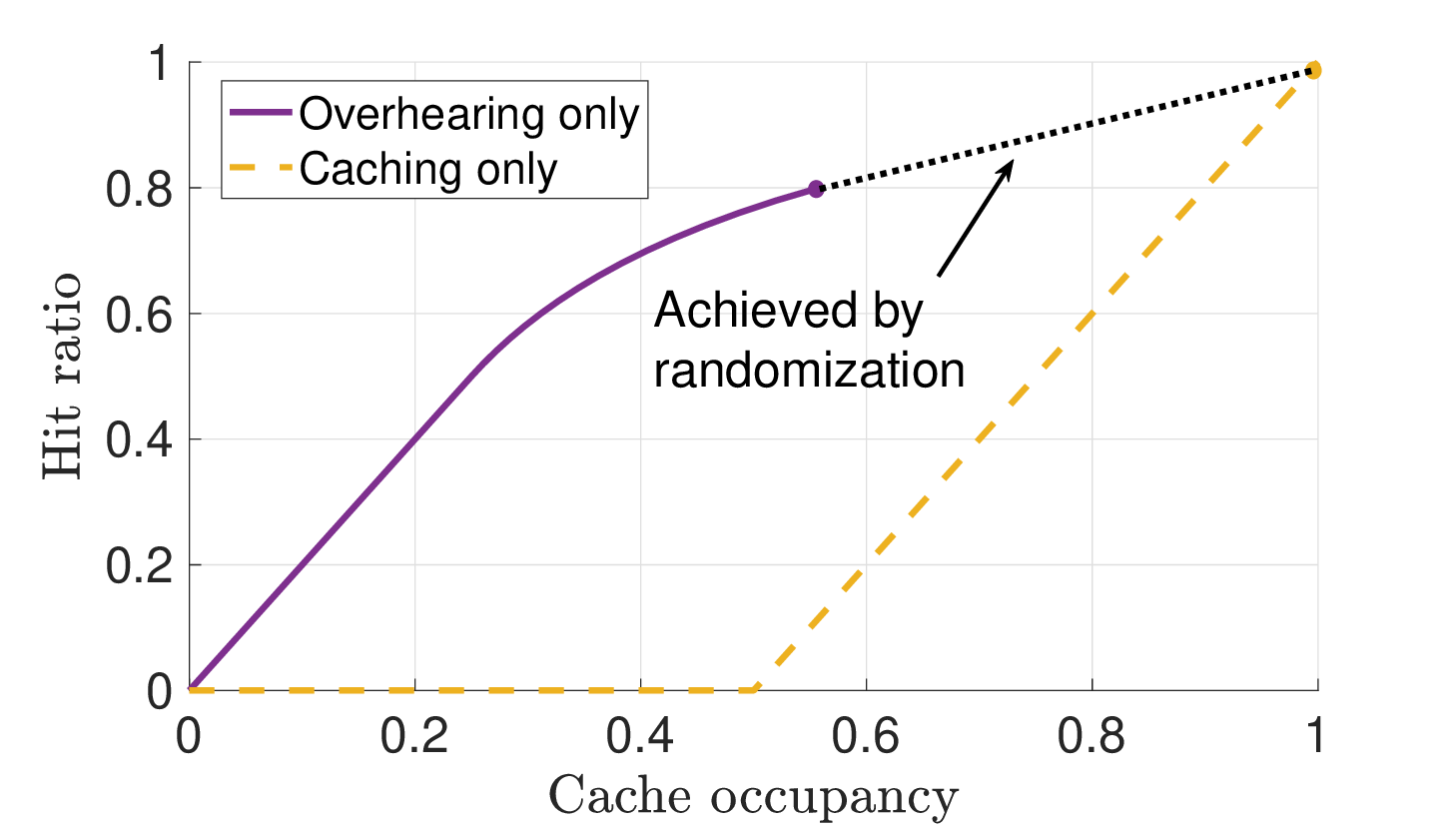}
    \vspace{-0mm}
    \caption{Hit ratio and cache occupancy achieved by the caching-only item policies and 
    the overhearing-only item policies 
    with $s_i = \beta_i = \lambda_i = 1$.}
    \vspace{-2mm}
    \label{fig:h_r}
\end{figure}

Formally, we define a randomized caching and overhearing item policy set 
\begin{align} \label{eq:policy_set_1}
    & \widetilde{\policyset}^\trco = \{\policy^\too(\omega): \omega \geq 0\} \nonumber \\
    & \cup
    \{\policy^\trco((q,1-q),(+\infty,0),(+\infty,0)): 0 \leq q \leq 1\}.
\end{align}
The item policy set $\widetilde{\policyset}^\trco$ contains all overhearing-only 
item policies and all possible randomizations of $\policy^\tc(+\infty)$ and $\policy^\too(0)$.
The achievable region of $\widetilde{\policyset}^\trco$ can also be characterized by a curve.
We define the $h_i^\text{rco}(r)$ function as 
the hit ratio achieved by an item policy from $\widetilde{\policyset}^\trco$
% the overhearing policy $\pi^\too(\omega_i)$
% and this randomization
when the cache occupancy is $r$.
% i.e., the overhearing curve and the randomization curve shown in 
% Fig.~\ref{fig:h_r}.
For $r\in[0,r^\tco(0,0)]$, we have $h_i^\text{rco}(r)= h_i^\text{o}(r)$.
% i.e., the expected hit ratio achieved by some policy $\pi^\to(\omega_i)$
% such that the cache occupancy of $d_i$ is $r$. 
For $r\in(r^\tco(0,0),1]$, $h_i^\text{rco}(r)$ is the line segment 
connecting the points $(r_i^\tco(0,0), h_i^\tco(0,0))$ 
and $(1, 1)$.
In Fig.~\ref{fig:h_r}, $h_i^\text{rco}(r)$ is the curve labeled by ``overhearing only''
and the line segment achieved by randomization. 
Notably, every point on the curve $h_i^\text{rco}(r)$ corresponds to 
exact one policy in the set $\widetilde{\policyset}^\trco$, and vice versa. 
Based on Theorems~\ref{theorem:hit_ratio} and \ref{theorem:hit_ratio_random},
we can easily calculate the parameters (i.e., $\omega$, $q$) for 
an item policy from $\widetilde{\policyset}^\trco$ that achieves a given 
hit ratio or cache occupancy. 

Next, we show an insightful characteristic of 
the achievable region $\mathcal{R}_i^\tco$.
\begin{lemma}\label{lemma:boundary}
    For any $(r, h)\in \mathcal{R}_i^\tco$, we have
    $h \leq h_i^\trco(r)$. 
\end{lemma}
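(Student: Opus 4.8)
The plan is to exploit the description of $\mathcal{R}_i^\tco$ as a convex hull of deterministic achievable pairs together with the fact that $h_i^\trco$ is a concave curve through the extreme point $(1,1)$.

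\emph{Step 1 (reduction).} By Theorem~\ref{theorem:hit_ratio_random}, every pair in $\mathcal{R}_i^\tco$ is a finite convex combination $\sum_j q^{(j)}\big(r_i^\tco(\tau^{(j)},\omega^{(j)}),\,h_i^\tco(\tau^{(j)},\omega^{(j)})\big)$ of pairs realized by deterministic item policies $\pi^\tco(\tau^{(j)},\omega^{(j)})$. Hence it suffices to prove: (i) $h_i^\trco$ is concave on $[0,1]$; and (ii) $h_i^\tco(\tau,\omega)\le h_i^\trco\!\big(r_i^\tco(\tau,\omega)\big)$ for every $\omega\ge\tau\ge0$. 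Given (i) and (ii), concavity lets me push the convex combination inside $h_i^\trco$ by Jensen's inequality, giving $h=\sum_j q^{(j)}h_i^\tco(\cdot)\le\sum_j q^{(j)}h_i^\trco(r^{(j)})\le h_i^\trco\!\big(\sum_j q^{(j)}r^{(j)}\big)=h_i^\trco(r)$, which is the claim.

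\emph{Step 2 (concavity of $h_i^\trco$).} Using Theorem~\ref{theorem:hit_ratio} with $\tau=0$ I would eliminate $\omega$ to write $h_i^\too$ explicitly as a function of $r$ on $[0,r_i^\tco(0,0)]$: in the branch $\omega\le s_i$ one finds $dr/dh$ proportional to $h/(1-h)$, which is increasing, so $r$ is convex in $h$ and hence $h_i^\too$ is concave; in the branch $\omega\ge s_i$, $r$ is affine in $h$; the two branches meet with matching value and slope at $\omega=s_i$. On $[r_i^\tco(0,0),1]$, $h_i^\trco$ is by definition the line segment to $(1,1)$, and the identity $1-r_i^\tco(0,0)=h_i^\tco(0,0)/(\lambda_i\expect[X_i])$ shows that the left slope of $h_i^\too$ at $r_i^\tco(0,0)$ equals the slope of that segment, so $h_i^\trco$ is actually $C^1$ there; combined with piecewise concavity this yields concavity on all of $[0,1]$.

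\emph{Step 3 (the deterministic bound, and the crux).} Split by the three cases of Theorem~\ref{theorem:hit_ratio}. If $\tau\le s_i$ (cases~(1)–(2)), the formulas for $h_i^\tco$ and $r_i^\tco$ do not involve $\tau$ (caching inside the OFF period is inert), so the pair equals $(r_i^\tco(0,\omega),h_i^\tco(0,\omega))$, which lies exactly on the graph of $h_i^\too$ and hence on $h_i^\trco$; equality holds. If $\tau\ge s_i$ (case~(3)), set $x=e^{-\beta_i(\tau-s_i)}$, $y=e^{-\beta_i(\omega-s_i)}$, $\gamma=\lambda_i/(\lambda_i+\beta_i)$; then both $1-h_i^\tco$ and $1-r_i^\tco$ equal the \emph{same} positive multiple of $x-\gamma y$, so all these pairs lie on one line $L:\ h=1-\expect[X_i]\beta_i(1-r)$ through $(1,1)$, and as $(x,y)$ ranges over $\{0\le y\le x\le1\}$ the occupancy $r$ sweeps $[\,s_i/\expect[X_i],\,1\,]$. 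On $[r_i^\tco(0,0),1]$ both $L$ and $h_i^\trco$ pass through $(1,1)$ and the slope of $h_i^\trco$ there is $\le\expect[X_i]\beta_i$ (equivalently $h_i^\tco(0,0)\ge\gamma$, which holds because $e^{-\lambda_i s_i}\le1$), so $L$ lies weakly below. On the remaining piece $[\,s_i/\expect[X_i],\,r_i^\tco(0,0)\,]$ (nonempty only when overhearing is fast) I parametrize $h_i^\too$ by $\omega$ and verify $1-h_i^\too(r)\le\expect[X_i]\beta_i(1-r)$ by a direct sign check on the formulas of Theorem~\ref{theorem:hit_ratio} (using $h\ge0$ and $e^{-\lambda_i(s_i-\omega)}\le1$ for $\omega\le s_i$, and a one-line identity for $\omega\ge s_i$), so $L\le h_i^\too=h_i^\trco$ there as well.

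\emph{Main obstacle.} The delicate step is (ii) for $\tau\ge s_i$: one must recognize that item policies which also keep the item cached during the ON period contribute nothing beyond the single line $L$, and then rule out that any sub-segment of $L$ rises above the curved portion of $h_i^\trco$ when $\lambda_i$ is large. The slope-and-endpoint comparison above settles this; the rest is routine manipulation of the closed forms in Theorem~\ref{theorem:hit_ratio} and Theorem~\ref{theorem:hit_ratio_random}.
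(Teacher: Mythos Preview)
Your argument is correct, and the two proofs share the same core observations but are organized differently. The paper first shows (its Step~1) that every deterministic item policy is weakly dominated, at the same cache occupancy, by either an overhearing-only policy $\pi^\too(\tilde\omega_i)$ (when $\tau_i\le s_i$) or a caching-only policy $\pi^\tc(\tilde\tau_i)$ (when $\tau_i>s_i$); it then invokes a convex-hull picture on the achievable region to handle randomization (its Step~2), reducing any mixture to a two-point mixture of an overhearing-only and a caching-only policy, and finally compares $h_i^\too$ with $h_i^\tc$ to read off the upper boundary $h_i^\trco$. You instead establish concavity of $h_i^\trco$ directly (including the $C^1$-match at $r_i^\tco(0,0)$ via the identity $1-r_i^\tco(0,0)=h_i^\tco(0,0)/(\lambda_i\expect[X_i])$), use Jensen's inequality to reduce to the deterministic case, and then verify the deterministic bound by your line observation: with $x=e^{-\beta_i(\tau-s_i)}$, $y=e^{-\beta_i(\omega-s_i)}$, $\gamma=\lambda_i/(\lambda_i+\beta_i)$, both $1-h_i^\tco$ and $1-r_i^\tco$ in case~(3) are the same positive multiple of $x-\gamma y$, so all those pairs lie \emph{exactly} on the caching-only line $L:\,h=1-\beta_i\expect[X_i](1-r)$. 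This last point is a slightly sharper version of the paper's Step~1 (which only asserts domination), and it makes the remaining comparison $L\le h_i^\trco$ a clean slope check at $(1,1)$ plus the short sign verifications you outline. Your route is more analytic and self-contained (concavity is made explicit rather than implicit in a convex-hull argument); the paper's route is more geometric and avoids differentiating $h_i^\too$. Either way, the heavy lifting is the same pair of facts: case-(3) policies collapse to the caching-only line, and the overhearing-only curve lies above that line on the overlap.
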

The proof is presented in Appendix~\ref{sec:proof_lemma_boundary}.
Lemma~\ref{lemma:boundary} shows that
the upper boundary of the achievable region $\mathcal{R}_i^\tco$
is characterized by the function $h_i^\trco(r)$,
based on which, 
% For any item policy in the set $\policyset^\trco$,
% there exists an overhearing-only item policy $\policy^\too(\omega)$
% or a randomization of $\policy^\too(0)$ and $\policy^\tc(+\infty)$
% that achieves the same cache occupancy and a higher (or the same) hit ratio.
we can prove an informative structure of the optimal policies.
\begin{theorem}\label{theorem:opt_structure}
    For time-driven overhearing, there must exist a caching and overhearing policy 
    $\policyvec^* = (\policy_1^*, \cdots, \policy_N^*)$ which is the optimal solution 
    of problem~(\ref{eq:opt_TTL_1}) and satisfies $\policy_i^* \in \widetilde{\policyset}^\trco$ 
    for any $1\leq i \leq N$.
\end{theorem}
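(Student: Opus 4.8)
The plan is a componentwise exchange argument: start from an arbitrary optimal policy and, for each data item, replace its item policy by the unique member of $\widetilde{\policyset}^\trco$ that uses the same average cache space, invoking Lemma~\ref{lemma:boundary} to guarantee that the hit ratio does not drop. Since the cache occupancies are kept unchanged item by item, the capacity constraint in (\ref{eq:opt_TTL_1}) is preserved automatically, and the objective can only increase, so the new policy is still optimal and lies in $\widetilde{\policyset}^\trco$ coordinatewise.

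First I would pin down the cache occupancies reachable by $\widetilde{\policyset}^\trco$. By Theorem~\ref{theorem:hit_ratio}, the occupancy $r_i^\tco(0,\omega)$ of the overhearing-only policy $\pi^\too(\omega)$ is continuous and strictly decreasing in $\omega\in[0,\infty)$, covering $(0, r_i^\tco(0,0)]$ and tending to $0$ as $\omega\to\infty$ (the never-cache policy $\pi^\tc(0)$). By Theorem~\ref{theorem:hit_ratio_random}, the randomizations of $\pi^\tc(+\infty)$ and $\pi^\too(0)$ trace out the occupancy range $[r_i^\tco(0,0),1]$. Together with the one-to-one correspondence already noted between the curve $h_i^\trco(\cdot)$ and the policies of $\widetilde{\policyset}^\trco$, this shows that every $r\in[0,1]$ is the cache occupancy of exactly one policy in $\widetilde{\policyset}^\trco$, whose hit ratio is $h_i^\trco(r)$. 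Since every policy in $\policyset^\trco$ has occupancy in $[0,1]$, this range is wide enough for the exchange step.

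Next, let $\policyvec=(\policy_1,\dots,\policy_N)$ be any optimal solution of (\ref{eq:opt_TTL_1}), and write $r_i \eqd r_i(\policy_i)$ and $h_i \eqd h_i(\policy_i)$, so $(r_i,h_i)\in\mathcal{R}_i^\tco$ and $\sum_{i=1}^N r_i\le b$. For each $i$, pick $\policy_i^*\in\widetilde{\policyset}^\trco$ with $r_i(\policy_i^*)=r_i$; then $h_i(\policy_i^*)=h_i^\trco(r_i)$, and Lemma~\ref{lemma:boundary} gives $h_i^\trco(r_i)\ge h_i$. The policy $\policyvec^*=(\policy_1^*,\dots,\policy_N^*)$ is feasible, since $\policy_i^*\in\widetilde{\policyset}^\trco\subseteq\policyset^\trco$ and $\sum_{i=1}^N r_i(\policy_i^*)=\sum_{i=1}^N r_i\le b$, and its objective value is $\sum_{i=1}^N p_i h_i^\trco(r_i)\ge \sum_{i=1}^N p_i h_i$, i.e.\ at least the optimal value of (\ref{eq:opt_TTL_1}). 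Hence $\policyvec^*$ is optimal as well and satisfies $\policy_i^*\in\widetilde{\policyset}^\trco$ for every $1\le i\le N$, which is exactly the claim.

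The only delicate point is the well-definedness of the exchange, namely that the cache-occupancy range spanned by $\widetilde{\policyset}^\trco$ contains that spanned by the full set $\policyset^\trco$; this is precisely what the second paragraph establishes from Theorem~\ref{theorem:hit_ratio} and the linearity in Theorem~\ref{theorem:hit_ratio_random}. Once that is in place, the argument is immediate, so the real substance of the theorem sits in Lemma~\ref{lemma:boundary}: it asserts that the upper boundary of the achievable region $\mathcal{R}_i^\tco$ is already attained within the small family $\widetilde{\policyset}^\trco$, which is what lets us restrict attention to overhearing-only policies and randomizations of always-cache with immediate-eviction-plus-overhearing.
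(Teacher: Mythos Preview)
Your argument is correct and matches the paper's own reasoning: the paper simply states that Theorem~\ref{theorem:opt_structure} is ``a direct application of Lemma~\ref{lemma:boundary},'' and your componentwise exchange (replace each $\policy_i$ by the unique $\widetilde{\policyset}^\trco$ policy with the same occupancy, invoke Lemma~\ref{lemma:boundary}, preserve feasibility) is exactly that direct application spelled out. The one point you might tighten is that you presuppose an optimizer of (\ref{eq:opt_TTL_1}) exists; an easy fix is to run the exchange on an arbitrary feasible $\policyvec$, conclude that the suprema over $\policyset^\trco$ and $\widetilde{\policyset}^\trco$ coincide, and then use the compact parameterization of $\widetilde{\policyset}^\trco$ by $r\in[0,1]$ (which you already established) to get attainment.
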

Theorem~\ref{theorem:opt_structure} is a direct application of Lemma~\ref{lemma:boundary}.
It shows that an optimal solution of 
problem~(\ref{eq:opt_TTL_1}) can be found from the 
set $\widetilde{\policyset}^\trco$,
which significantly narrows the design space. 
By leveraging this informative structure,
we solve the optimal caching and overhearing policy
in the next section.

\subsection{Optimal Policy for Time-Driven Overhearing}
\label{sec:opt_time_driven}
Directly replacing the policy set in problem~\ref{eq:opt_TTL_1} with $\widetilde{\policyset}^\trco$ 
will still result in a nonconvex optimization. 
Instead, we will solve this problem by following two steps. 

\noindent 
\textbf{Step 1}: Solve the optimal $r_i^*$'s of the following problem (\ref{eq:opt_TTL_2})
\begin{align}
    \max_{r_i} \hspace{12mm}& \sum_{i=1}^N  p_i \cdot h_i^\text{rco}(r_i)   
        \hspace{-20mm} & \label{eq:opt_TTL_2} \\
    \text{subject to} \hspace{8mm} 
    & 0 \leq r_i \leq 1, &  \hspace{-10mm} 1\leq i\leq N, \nonumber \\
    &\sum_{i=1}^N r_i \leq b. & \hspace{-0mm} \nonumber
\end{align}
Note that the original problem~(\ref{eq:opt_TTL_1}) is trying to find the optimal policy parameters 
(i.e., $\qq$, $\boldsymbol{\tau}$ and $\boldsymbol{\omega}$).
In Step~1, the original optimization 
problem in the domain of policy parameters is converted into 
a new problem in the domain of the cache occupancies (i.e., $\boldsymbol{r}$).
Recall that $h_i^\trco(r_i)$ captures the relationship between
the hit ratio and the cache occupancy for item policies in the set $\widetilde{\policyset}^\trco$.
Using Theorem~\ref{theorem:opt_structure}, 
we can prove that the optimal occupancies found by problem~(\ref{eq:opt_TTL_2}) are actually 
the occupancies achieved by the optimal item policies of the original problem~(\ref{eq:opt_TTL_1}).
More importantly, 
Theorems~\ref{theorem:hit_ratio} and~\ref{theorem:hit_ratio_random} indicate that 
the $h_i^\trco(r)$ functions are convex, $1\leq i\leq N$.
Therefore, the optimization problem~(\ref{eq:opt_TTL_2}) is convex and  
can be solved using standard tools
(e.g., KKT conditions and the water filling algorithm~\cite{boyd2004convex}).
% Due to the page limit, we omit the detailed algorithm for solving this convex problem.
% Let $r^*_i$'s be the optimal solution to the problem~(\ref{eq:opt_TTL_2}).
% KKT conditions indicate that there must exist a positive constant $\eta$ 
% such that for any $0\leq i \leq N$ 
% \begin{align*}
% & \left. p_i\cdot \frac{dh_i^\trco(r)}{dr} \right|_{r = r_i^*} \leq \eta,
%     \hspace{4mm} \text{ if } r_i^*=0; \\
% & \left. p_i\cdot \frac{dh_i^\trco(r)}{dr} \right|_{r = r_i^*} = \eta,
%     \hspace{4mm} \text{ if } 0<r_i^*<1; \\
% & \left. p_i\cdot \frac{dh_i^\trco(r)}{dr} \right|_{r = r_i^*} \geq \eta,
%     \hspace{4mm} \text{ if } r_i^*=1.
% % & \text{and} \hspace{5mm}  \sum_{i=1}^N r_i^*  = b.
% \end{align*}
% The derivative $p_i \cdot dh_i^\trco(r)/dr$ can be interpreted as the 
% marginal gain in the overall hit ratio 
% by increasing the cache occupancy of $d_i$. 
% Due to the convexity, 
% this marginal gain is non-increasing with respect to $r$.
% The optimal solution attempts to equalize the marginal gains
% and fully utilize the cache space. 
% Problem~(\ref{eq:opt_TTL_2}) can be easily solved by 
% finding $\eta$ using binary search.  

\noindent 
\textbf{Step 2}: Once the optimal solution $r^*_i$'s of Step 1 is solved, 
then based on Theorems~\ref{theorem:hit_ratio} and 
\ref{theorem:hit_ratio_random},
we can easily find the item policies from the set $\widetilde{\policyset}^\trco$
that achieve $r^*_i$'s.
And these item policies form an optimal solution of the original hit ratio maximization
problem~(\ref{eq:opt_TTL_1}).
We use $\omega^*_i$'s and $q^*_i$'s to denote the parameters for these 
item policies.
An optimal policy for time-driven overhearing is formally proposed as follows. 
% Due to the page limit, we omit the detailed algorithm to solve 
% the problem~(\ref{eq:opt_TTL_2}) and $\omega^*_i$'s and $q^*_i$'s
% from the main text,
% and attach it in Appendix~\ref{sec:alg} as Algorithm~\ref{alg:single}.
% Next, we define a caching and overhearing policy based on 
% the optimal solution of the problem~(\ref{eq:opt_TTL_2}). 

\noindent
\textbf{Caching and overhearing policy for time-driven overhearing
($\policyvec^\tcot$):}
Serve each data item $d_i$, $1\leq i\leq N$,  by a randomized item policy 
$\policy^\trco(\qq_i, \boldsymbol{\tau}_i, \boldsymbol{\omega}_i)$
where $\qq_i=(q_i^*,1-q_i^*)$, $\boldsymbol{\tau}_i=(+\infty,0)$
and $\boldsymbol{\omega}_i=(+\infty,\omega_i^*)$.

% Note that for each data item $d_i$,
% the proposed policy applies $\policy^\tc(+\infty)$ with a probability $q_i^*$
% and $\policy^\tc(\omega_i^*)$ with a probability $1-q_i^*$.
% Next, we will show that the propose policy achieves the optimality.

% \begin{theorem}\label{theorem:opt_single}
%     With time-driven overhearing opportunities,
%     the proposed policy $\policyvec^\tcot$ is an optimal solution to 
%     the hit ratio maximization problem~(\ref{eq:opt_TTL_1}).
% \end{theorem}

% Theorem~\ref{theorem:opt_structure} implies that the proposed policy $\policyvec^\tcot$
% is an optimal solution to the hit ratio maximization problem~(\ref{eq:opt_TTL_1}).
% So far, we have solved an optimal policy for the time-driven overhearing scenario. 
The proposed optimal policy reveals the following insights:
\begin{enumerate}
    \item We should
    either evict a data item immediately after serving a request for it
    (i.e., set $\tau_i = 0$) 
    or always store it in the cache (i.e, set $\tau_i = +\infty$).
    Setting $\tau_i\in (0,+\infty)$ will be suboptimal. 
    % since $h_i^\trco(r)\geq h_i^\tc(r)$ for $r \in [0,1]$.
    \item If we decide to bring an item back into the cache 
    by overhearing (i.e., set $\omega_i < +\infty$), 
    then that item should be evicted immediately after serving each request for it
    (i.e., set $\tau_i = 0$).
\end{enumerate}
These insights will guide us 
in the more complex scenario of event-driven overhearing, 
which is tackled next.

\section{Edge Caching with Event-Driven Overhearing}
\label{sec:multi}

In this section, we consider a more realistic overhearing setting.
When a cache misses a data item, the base station will send
the data item to it over broadcast channels. 
Meanwhile, the other caches could overhear and decide 
whether they would like to store this data item.  
Since the overhearing opportunities are triggered by cache misses, 
% we name this scenario the event-driven overhearing.  
% Notably, 
the event-driven overhearing process depends on the caching decisions
as well as the number of caches in the system.
It can be easily verified that the overhearing process are not Poisson 
under this setting. 
As a result, the analysis for time-driven overhearing cannot be directly 
applied for the event-driven scenario.

\subsection{Hit Ratios and Cache Occupancies}
\label{sec:hit_ratio_event_driven}
It is difficult to derive the hit ratio and the 
cache occupancy for a general caching and overhearing policy,
since the overhearing process is not tractable under this setting. 
However, we are able to characterize a few key properties 
for some specific policies, which could inspire us to 
design a provably good policy. 

Similar to the notations in Section~\ref{sec:single}, 
we still use $h_i^\too(\cdot)$, $h_i^\tc(\cdot)$, $h_i^\tco(\cdot)$, $h_i^\trco(\cdot)$
to denote the hit ratios achieved by the item policies 
$\policy^\too$, $\policy^\tc$, $\policy^\tco$, $\policy^\trco$
for the data item $d_i$, respectively.
The same rules will also be applied to the notations for cache occupancies.
However, the expression of these functions will be different from those 
in Section~\ref{sec:single}, since the overhearing processes have been changed.

\begin{lemma}\label{lemma:property2}
    Consider the event-driven overhearing. 
    If $d_i$ is served by the item policy $\pi^\too(\omega_i)$,
    then we have $$h_i^\too(r) = (\beta_is_i+1)  r $$
    for $0\leq r \leq r_i^\tco(0,s_i)$.
    % \begin{align}
    %     % h_i^\too(r) = \beta_i \cdot r \left/ \sum_{j=1}^N\frac{\beta_j}{\beta_js_j+1} \right.
    %     %\text{ for } 0\leq r \leq  \frac{1}{\beta_is_i + 1} \frac{\lambda_i}{\lambda_i+\beta_i},
    %     % h_i^\too(r) = \frac{\beta_i}{s_i+1/\beta_i} \cdot r 
    %     h_i^\too(r) = (\beta_is_i+1)  r 
    %     . \nonumber
    % \end{align}
    If $d_i$ is served by the item policy $\pi^\tc(\tau_i)$,
    then its hit ratio and occupancy are exactly the same as those 
    achieved by $\pi^\tc(\tau_i)$ for time-driven overhearing, 
    and can be directly calculated using 
    Theorem~\ref{theorem:hit_ratio}.
\end{lemma}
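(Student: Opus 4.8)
The plan is to treat the two item policies separately. For the caching-only policy $\pi^\tc(\tau_i)=\pi^\tco(\tau_i,+\infty)$ the deaf timer never expires, so the cache never loads $d_i$ through overhearing; the intervals during which $d_i$ sits in a cache are therefore governed entirely by that cache's own request process for $d_i$ (namely, $d_i$ is present exactly during the first $\tau_i$ units of time after each request for it). This is the same sample-path behaviour as in the time-driven setting with overhearing simply never triggered, so $h_i^\tc$ and $r_i^\tc$ are given by the formulas of Theorem~\ref{theorem:hit_ratio} evaluated at $\omega_i=+\infty$ (case~(2) if $\tau_i\le s_i$, case~(3) if $\tau_i\ge s_i$); the misses this cache may generate for the others are irrelevant to its own statistics for $d_i$. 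That settles the second assertion.

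For the overhearing-only policy $\pi^\too(\omega_i)=\pi^\tco(0,\omega_i)$, since the caches are homogeneous it suffices to study one tagged cache in steady state, and I would run a renewal--reward argument over its successive request cycles for $d_i$ (the argument, and hence the relationship, being uniform in the number of caches). Write a generic cycle as $(0,X_i]$, with $X_i=s_i+Y$ and $Y$ an exponential random variable of rate $\beta_i$, the inter-request time. At the cycle start $d_i$ is evicted (caching TTL $0$), and it can re-enter only through overhearing after the deaf timer expires at time $\omega_i$. The claimed range $0\le r\le r_i^\tco(0,s_i)$ is exactly the range $\omega_i\ge s_i$ (the occupancy is continuous and decreasing in $\omega_i$, equal to $r_i^\tco(0,s_i)$ at $\omega_i=s_i$ and tending to $0$ as $\omega_i\to+\infty$, whereas $\omega_i<s_i$ yields strictly larger occupancies), and in this range the listening window $[\omega_i,X_i]$ lies inside the ON period. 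Two structural facts then hold: (i) during the open interval $(0,X_i)$ the tagged cache never itself requests $d_i$, hence never triggers a broadcast of $d_i$, so every broadcast of $d_i$ reaching the tagged cache during the cycle comes from another cache; and (ii) once $d_i$ is loaded, at the first such broadcast time $L>\omega_i$ (set $L=+\infty$ if none occurs), it stays cached until the request at $X_i$. Consequently, per cycle, the occupancy time of $d_i$ equals $(X_i-L)^+$ and the cycle ends in a hit iff $L\le X_i$; by renewal--reward, $r_i^\too=\expect\big[(X_i-L)^+\big]/\expect[X_i]$ and $h_i^\too=\P(L\le X_i)$.

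The crux is an independence claim. Let $\mathcal{G}$ be the $\sigma$-algebra generated by the other caches' request processes together with the tagged cache's entire history up to the cycle start. By fact~(i), during $(0,X_i)$ the other caches evolve using only their own requests and the broadcasts they hear, and the only contribution of the tagged cache to those broadcasts is what it emitted at times $\le 0$; hence the whole sequence of broadcast times of $d_i$ inside the cycle, and in particular $L$, is $\mathcal{G}$-measurable. On the other hand, by the renewal property of the tagged cache's ON--OFF process at the cycle start, $X_i$ (equivalently $Y$) is independent of $\mathcal{G}$. Conditioning on $\mathcal{G}$ therefore fixes $L\ge\omega_i\ge s_i$, and memorylessness of $Y$ gives $\P(\text{hit}\mid\mathcal{G})=\P(X_i>L\mid\mathcal{G})=e^{-\beta_i(L-s_i)}\ind\{L<+\infty\}$ and, likewise, $\expect\big[(X_i-L)^+\mid\mathcal{G}\big]=\int_{L}^{+\infty}e^{-\beta_i(t-s_i)}\,dt\cdot\ind\{L<+\infty\}=\tfrac{1}{\beta_i}e^{-\beta_i(L-s_i)}\ind\{L<+\infty\}$. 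Taking expectations, $\expect\big[(X_i-L)^+\big]=\tfrac{1}{\beta_i}\,h_i^\too$, and dividing by $\expect[X_i]=s_i+1/\beta_i$ yields $r_i^\too=h_i^\too/\big(\beta_i(s_i+1/\beta_i)\big)$, i.e.\ $h_i^\too=(\beta_i s_i+1)\,r_i^\too$, as claimed. I expect the main obstacle to be making this independence rigorous: one has to untangle the feedback loop between the tagged cache's misses and the other caches' cache states, showing that the broadcast process seen inside the current cycle is a functional of $\mathcal{G}$ alone and hence independent of the tagged cache's current inter-request time — which is exactly where fact~(i) and the renewal structure of the request process come in.
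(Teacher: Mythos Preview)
Your proposal is correct and follows essentially the same approach as the paper: a per-cycle renewal argument for the tagged cache that conditions on the first overhearing time after $\omega_i$ and then invokes memorylessness of the exponential ON period to get $\expect[(X_i-L)^+\mid L]=\tfrac{1}{\beta_i}\,\P(X_i>L\mid L)$, which immediately yields $h_i^\too=(\beta_is_i+1)r_i^\too$; the caching-only part is handled identically. Your explicit $\sigma$-algebra argument for why $L$ (the paper's $Y_i+\omega_i$) may be treated as independent of $X_i$ is in fact more careful than the paper's own proof, which simply writes $\expect[X_i-Y_i-\omega_i\mid Y_i+\omega_i\le X_i,\,Y_i]=1/\beta_i$ without discussing the feedback loop you flag.
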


In Lemma~\ref{lemma:property2},
we analyze the item policies $\pi^\tc(\tau_i)$
and $\pi^\too(\omega_i)$ under event-driven overhearing.
The proof is presented in Appendix~\ref{sec:proof_lemma_property2}.
For the caching-only item policy $\pi^\tc(\tau_i)$,
the hit ratio and the cache occupancy 
are exactly the same as the ones under time-driven overhearing,
since $\pi^\tc(\tau_i)$ 
sets $\omega_i = +\infty$ and is independent of the overhearing process.
For the overhearing-only item policy $\pi^\too(\omega_i)$,
we show that the hit ratio of $d_i$ is a linear function 
with respect to the cache occupancy when $0\leq r \leq r^\tco(0,s_i)$,
or, equivalently when $\omega_i \geq s_i$.
When $\omega_i < s_i$, the overhearing-only item policy becomes intractable.
% This linear relationship is also observed for time-driven overhearing 
% as illustrated by Lemma~\ref{lemma:property}.
\begin{figure}[t]
    \centering
    \includegraphics[width=7cm]{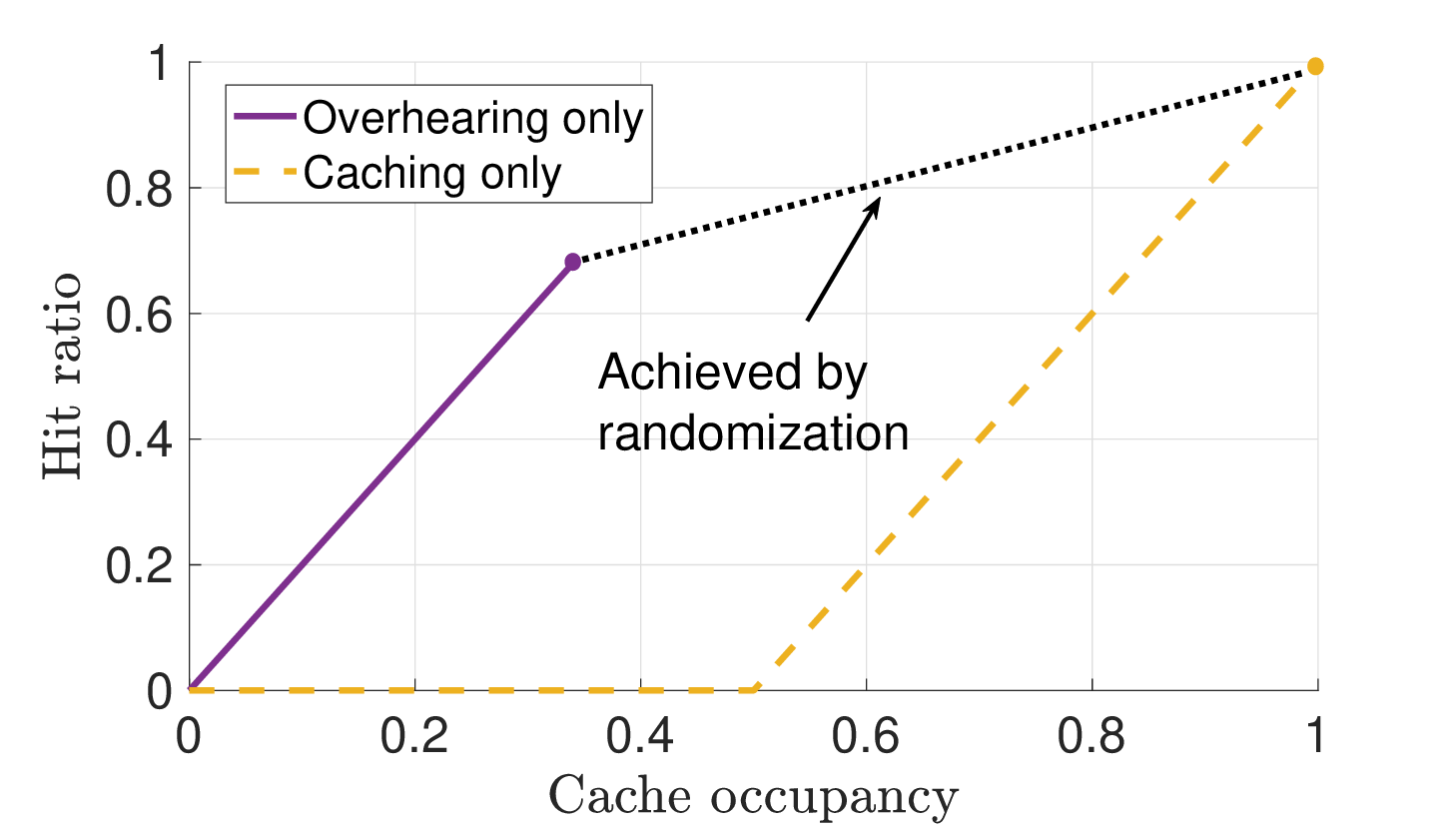}
    \vspace{-0mm}
    \caption{Hit ratio and cache occupancy achieved by the caching-only item policies and 
    the overhearing-only item policies
    with $s_i = \beta_i = 1$ and $M = 10$.}
    \vspace{-0mm}
    \label{fig:h_r_event_driven}
\end{figure}
We plot the hit ratio and the cache occupancy that can be achieved by 
$\policy^\too(\omega_i)$ with $\omega_i \geq s_i$ and 
$\policy^\tc(\tau_i)$ with $\tau_i \geq 0$ 
in Fig.~\ref{fig:h_r_event_driven}.

In Lemma~\ref{lemma:property2} we characterize the relationship between hit ratios and cache occupancies
for event-driven overhearing,
but we are not able to analytically solve the parameter $\omega_i$
that achieves a given cache occupancy $r$.
To address this issue, we first assume that
the policy parameter $\omega_i$ to achieve any $r\leq r^\tco(0,s_i)$
is solvable. 
With this assumption, we will propose provably good policies 
in Section~\ref{sec:opt_general}. 
Then, in Section~\ref{sec:discuss}, 
we will discuss how to implement these policies in real practice 
without the proposed assumption.
% However, for $0\leq \omega_i < s_i$, 
% the hit ratio and the cache occupancy become intractable. 

\subsection{Asymptotically Optimal Policy for Event-Driven Overhearing}
\label{sec:opt_general}
Although the hit ratio and the cache occupancy under event-driven overhearing are
not fully tractable, we could still design provably good polices by leveraging 
the insights obtained from the optimal structure under time-driven overhearing. 
The general idea is to first construct a shrunken policy set, which contains 
less item policies but retains some tractability under event-driven overhearing. 
Then, we will find the best policy from the shrunken policy set and 
analytically characterize its performance.

For each data item $d_i$, $1\leq i\leq N$,
define a policy set 
\begin{align} \label{eq:policy_set_2}
    & \widehat{\policyset}_i^\trco = 
    \{\policy^\too(\omega_i): \omega_i \geq s_i\} \nonumber \\
    & \cup
    \{\policy^\trco((q_i,1-q_i),(+\infty,0),(+\infty,s_i)): 0 \leq q_i \leq 1\}.
\end{align}
The set $\widehat{\policyset}_i^\trco$ contains overhearing-only item policies
$\policy^\too(\omega_i)$ with $\omega_i\geq s_i$
and all possible randomizations of $\policy^\too(s_i)$ and $\policy^\tc(+\infty)$.
The reason to construct such policy sets is that we could characterize the relationship
between hit ratios and cache occupancies for these item policies based on Lemma~\ref{lemma:property2}.
Instead of solving the original problem~(\ref{eq:opt_TTL_1}), 
we would like to find the best policy from the shrunken policy sets by solving the following problem
\begin{align}\label{eq:opt_TTL_6}
    \max_{\policy_i} \hspace{12mm}& \sum_{i=1}^N p_i \cdot h_i(\policy_i)   \\
    \text{subject to} \hspace{8mm} & \policy_i \in \widehat{\policyset}_i^\trco, \hspace{9mm} 1\leq i\leq N,\nonumber \\
    &\sum_{i=1}^N r_i(\policy_i) \leq b. \nonumber
\end{align}
% We will first find a policy that solves problem~(\ref{eq:opt_TTL_6}) and then prove that its performance
% could be arbitrarily close to the optimal policy of the original problem~(\ref{eq:opt_TTL_6})
% when there are sufficiently many edge caches in the system. 

Define $\widehat{h}_i^\trco(r)$ as the hit ratio of $d_i$ 
achieved by an item policy from the set 
$\widehat{\policyset}_i^\trco$
such that the cache occupancy is $r$.
We have $\widehat{h}_i^\trco(r) = (s_i\beta_i+1)r$ for $0\leq r \leq r_i^\tco(0,s_i)$,
and 
\begin{align*}
    \widehat{h}_i^\trco(r) = \frac{1-(s_i\beta_i+1)r_i^\tco(0,s_i)} {1-r_i^\tco(0,s_i)} \cdot r 
    + \frac{s_i\beta_i r_i^\tco(0,s_i)}{1-r_i^\tco(0,s_i)}
\end{align*}
for $r_i^\tco(0,s_i)\leq r \leq 1$.
% \begin{equation}
%     \widehat{h}_i^\trco(r) =
%       \begin{cases}
%         (s_i\beta_i+1) \cdot r, & \text{ $0\leq r \leq r_i^\tco(0,s_i)$}, \\
%         \frac{1-(s_i\beta_i+1)r_i^\tco(0,s_i)} {1-r_i^\tco(0,s_i)} \cdot r 
%         + \frac{s_i\beta_i r_i^\tco(0,s_i)}{1-r_i^\tco(0,s_i)}, 
%         & \text{ $r_i^\tco(0,s_i)\leq r \leq 1$}.
%       \end{cases}       \nonumber
%   \end{equation}
The $\widehat{h}_i^\trco(r)$ curve is illustrated in Fig.~\ref{fig:h_r_event_driven}
as the line segments achieved by overhearing only and randomization.
Every point on the curve $\widehat{h}_i^\trco(r)$
corresponds an item policy in the set $\widehat{\policyset}_i^\trco$,
and vice versa. 
To find the best item policies from $\widehat{\policyset}_i^\trco$,
we formulate the following optimization problem. 
\begin{align}\label{eq:opt_TTL_4}
    \max_{r_i} \hspace{12mm}& \sum_{i=1}^N  p_i \cdot \widehat{h}_i^\text{rco}(r_i)   
     \hspace{-20mm} &  \\
    \text{subject to} \hspace{8mm} 
    & 0 \leq r_i \leq 1, &  \hspace{-10mm} 1\leq i\leq N, \nonumber \\
    &\sum_{i=1}^N r_i \leq b. & \hspace{-0mm} \nonumber
\end{align}

% Recall that 
% % the value of $h_i^\too(r)/r$ or $h_i^\tc(r)/r$
% % represents the efficiency of the policy,
% % and 
% the slope at a point on  $h_i^\too(r)$ or $h_i^\tc(r)$
% represents the marginal gain in the hit ratio. 
% Similar to the optimal design under the time-driven overhearing setting,
% we could apply the idea to gradually increase the cache occupancy
% of each data item such that the marginal gain is maximized.
Since $\widehat{h}_i^\trco(r)$ functions are convex,  
problem~(\ref{eq:opt_TTL_4}) is convex
and can be solved by the same approach that solves problem~(\ref{eq:opt_TTL_2}).
Let $r_i^*$'s denote the optimal solution to problem~(\ref{eq:opt_TTL_4}).
We can easily identify the item policy from the set $\widehat{\policyset}_i^\trco$
that achieves $r_i^*$, $1\leq i\leq N$. 
We propose a caching and overhearing policy as follows.

\noindent 
\textbf{Caching and overhearing policy for event-driven overhearing ($\policyvec^\tcoe$):} 
Let each data item $d_i$, $1\leq i\leq  N$, be served by the item policy from the set $\widehat{\policyset}_i^\trco$
that achieves the cache occupancy $r_i^*$, i.e., the solution of~(\ref{eq:opt_TTL_4}).

To analytically characterize the performance of the proposed policy $\policyvec^\tcoe$, 
we first introduce an upper bound for the achievable hit ratio. 
For a system consisting of $M$ caches,
let $h^*(M)$ denote the overall hit ratio achieved by 
the optimal solution of problem~(\ref{eq:opt_TTL_1}) under event-driven overhearing.
In the following lemma, we prove that $h^*(M)$ is upper bounded by a constant which is 
defined as $h^\text{upper}$.
\begin{lemma}\label{lemma:upper}
    Consider a system of $M$ caches where each cache has a size $b$.
    We have 
    \begin{align}\label{eq:upper_hit}
        & h^*(M) \nonumber \\
        & \leq \sum_{i=1}^K p_i 
            + p_{K+1} \left(\beta_{K+1} s_{K+1} + 1\right)
            \Big(b - \sum_{i=1}^{K} \frac{1}{\beta_is_i +1}\Big) \nonumber\\
        & \eqd h^\text{upper},
    \end{align}
    where $K$ is the integer such that 
    \begin{align}
        \sum_{i=1}^{K} \frac{1}{\beta_is_i +1} \leq b 
        < \sum_{i=1}^{K+1} \frac{1}{\beta_is_i +1} \label{eq:upper_K}. 
    \end{align}
\end{lemma}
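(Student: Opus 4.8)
The plan is to bound $h^*(M)$ by the optimal value of a relaxed problem that ignores the intractable structure of event-driven overhearing and instead only uses the feasibility constraints that any item policy must obey. The key observation is that, regardless of the overhearing mechanism, for each data item $d_i$ the pair $(r_i, h_i)$ achieved by any item policy in $\policyset^\trco$ must satisfy two bounds: first, trivially $h_i \leq 1$ and $r_i \leq 1$; second, the hit ratio cannot exceed what is obtainable when the item is never proactively cached but is loaded as early as possible via overhearing, which by Lemma~\ref{lemma:property2} gives the linear bound $h_i \leq (\beta_i s_i + 1) r_i$ in the regime $r \leq r_i^\tco(0,s_i)$, extended by the line segment to $(1,1)$ beyond that point (this is exactly the curve $\widehat h_i^\trco(r)$). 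So the first step is to argue that $h^*(M) \leq \max \sum_i p_i \widehat h_i^\trco(r_i)$ subject to $\sum_i r_i \leq b$, $0 \leq r_i \leq 1$; i.e., that the achievable region for event-driven overhearing lies under the curve $\widehat h_i^\trco(r)$, by the same envelope argument used in Lemma~\ref{lemma:boundary}.

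Next I would solve this relaxed LP-type problem explicitly. Because each $\widehat h_i^\trco$ is piecewise linear, concave-looking only in the randomization segment but with slope $(\beta_i s_i + 1)$ on $[0, r_i^\tco(0,s_i)]$ — and since $\beta_i$ (hence, together with the $s_i$ ordering implicit in the $p_i$ formula) the marginal value $p_i(\beta_i s_i + 1)$ is largest for small $i$ — the optimal allocation is greedy: fill items $d_1, d_2, \dots$ to their "knee" occupancy $r_i^\tco(0,s_i) = \frac{1}{\beta_i s_i + 1}$ in order of decreasing marginal value, until the budget $b$ is exhausted. The integer $K$ in~\eqref{eq:upper_K} is precisely the number of items that get filled all the way to their knee, i.e., $\sum_{i=1}^K \frac{1}{\beta_i s_i+1} \leq b < \sum_{i=1}^{K+1}\frac{1}{\beta_i s_i+1}$; each such item contributes $p_i \cdot (\beta_i s_i+1)\cdot\frac{1}{\beta_i s_i+1} = p_i$ to the objective, and the residual budget $b - \sum_{i=1}^K \frac{1}{\beta_i s_i+1}$ goes to item $d_{K+1}$ on its linear piece, contributing $p_{K+1}(\beta_{K+1}s_{K+1}+1)\bigl(b - \sum_{i=1}^K \frac{1}{\beta_i s_i+1}\bigr)$. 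Summing gives exactly $h^\text{upper}$.

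The main obstacle I anticipate is the first step: justifying that \emph{every} item policy — including caching-only policies with finite $\tau_i$, arbitrary randomized policies, and overhearing-only policies with $\omega_i < s_i$ (which Lemma~\ref{lemma:property2} explicitly leaves intractable) — produces a point $(r_i,h_i)$ that lies weakly below $\widehat h_i^\trco(r_i)$ under event-driven overhearing. For the caching-only and the $\omega_i \geq s_i$ overhearing-only cases this is immediate from Lemma~\ref{lemma:property2}; the subtle cases are $\omega_i < s_i$ and general $\pi^\tco(\tau_i,\omega_i)$. Here I would argue that event-driven overhearing opportunities for $d_i$ arrive no faster than in some reference process whose rate can be bounded using that each overhearing event is triggered by a miss somewhere in the system, so the realized hit ratio per unit occupancy cannot beat the "best case" slope $(\beta_i s_i+1)$ captured by the $\omega_i = s_i$ policy — intuitively, occupancy accrued during the OFF period (when no request can possibly hit) is pure waste, and $\widehat h_i^\trco$ already charges optimally for the unavoidable OFF-period occupancy $\frac{s_i}{s_i + 1/\beta_i}\cdot(\text{stuff})$. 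Making this domination rigorous, likely via a coupling or a direct renewal-reward computation that shows the ratio $h_i/r_i$ is maximized (among all policies with a given "active overhearing start") by starting overhearing exactly at the end of the OFF period, is where the real work lies. Once that envelope claim is in hand, the optimization is routine and the formula for $h^\text{upper}$ follows as above.
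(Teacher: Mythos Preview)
Your strategy (envelope plus greedy LP) is the same as the paper's, but you have chosen the wrong envelope. You take $\widehat h_i^{\trco}$ from Section~\ref{sec:opt_general} and then assert its knee sits at $r_i^{\tco}(0,s_i)=\tfrac{1}{\beta_is_i+1}$. Under event-driven overhearing this equality is false: $r_i^{\tco}(0,s_i)$ is the occupancy achieved by $\pi^{\too}(s_i)$ in the \emph{actual} $M$-cache system, an $M$-dependent quantity strictly below $\tfrac{1}{\beta_is_i+1}$ (Section~\ref{sec:discuss} is devoted to estimating it precisely because it has no closed form). With the knee in the wrong place your greedy fill does not reproduce $h^{\text{upper}}$. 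More fundamentally, $\widehat h_i^{\trco}$ is the boundary of the \emph{shrunken} set $\widehat\policyset_i^{\trco}$, not an upper envelope for the full achievable region $\mathcal{R}_i^{\tco}$: overhearing-only policies with $\omega_i<s_i$ still obey $h\le(\beta_is_i+1)r$ and can therefore land above the second, shallower segment of $\widehat h_i^{\trco}$. So the obstacle you flag in your last paragraph is structural for this envelope, not a technicality to be patched by a coupling.

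The paper sidesteps the obstacle entirely by comparing against an \emph{idealized} system in which any item can be overheard at will---in particular, $d_i$ can be loaded exactly when its OFF period ends. In that system $\pi^{\too}(s_i)$ attains hit ratio $1$ at occupancy exactly $\tfrac{1}{\beta_is_i+1}$, and the envelope is the single line $h_i^{\text{ideal}}(r)=(\beta_is_i+1)r$ on $\bigl[0,\tfrac{1}{\beta_is_i+1}\bigr]$. That this dominates the event-driven achievable region is immediate (richer overhearing can only help); equivalently, a one-line renewal-reward check gives $h_i\le(\beta_is_i+1)r_i$ for \emph{every} item policy under \emph{any} overhearing process, because by memorylessness of the ON period the expected cached time per hit is at least $1/\beta_i$. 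With this envelope the knee is exactly $\tfrac{1}{\beta_is_i+1}$, the marginal value $p_i(\beta_is_i+1)=\beta_i\big/\sum_j(s_j+1/\beta_j)^{-1}$ is decreasing in $i$, and your greedy argument then goes through verbatim to yield $h^{\text{upper}}$.
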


The proof of Lemma~\ref{lemma:upper} is presented in Appendix~\ref{sec:proof_lemma_upper}.
The upper bound proposed in this lemma is actually the hit ratio 
achieved by an idealized policy. 
The idealized policy assumes that we could always overhear any data item 
at any time and attempts to find the best overhearing time based on the 
anticipated arrival time for the next request.
Along this direction, we could easily prove this lemma. 
And the detailed proof is omitted due to the page limit.

Let $h^\tcoe(M)$ denote the expected overall hit ratio achieved 
by $\policyvec^\tcoe$ in a system consisting of $M$ caches. 
We characterize the distance between $h^\tcoe(M)$ and $h^\text{upper}$ in the following 
theorem.
% We say a positive function $f(x)$ is asymptotically less than or equal to
% a positive function $g(x)$ (denoted by $f(x) \lesssim g(x)$) as $x\to + \infty$,
% if $\limsup_{x\to +\infty} f(x)/g(x) \leq 1$.

\begin{theorem}\label{theorem:asymp_opt}
    For the proposed policy $\policyvec^\tcoe$
    and $K$ defined in Equation~(\ref{eq:upper_K}), 
    we have, as the number of caches $M \to +\infty$,
    \begin{align}
        0 & \leq h^*(M) - h^\tcoe(M)  \nonumber \\
        & \leq h^\text{upper} - h^\tcoe(M)
        \leq
        \max_{1\leq i\leq K+1} 2\sqrt{\frac{\beta_is_i + 1}{M}}, \nonumber
    \end{align} 
    which implies that  
    \begin{align}
        \lim_{M\to +\infty} h^\tcoe(M) 
        = \lim_{M\to +\infty} h^*(M) 
        = h^\text{upper}.\nonumber
    \end{align}
\end{theorem}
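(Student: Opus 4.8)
The plan is to prove the three‑term chain one link at a time, the limiting equalities then following by $M\to\infty$. The left link $0\le h^*(M)-h^\tcoe(M)$ is immediate: every item policy in $\widehat{\policyset}_i^\trco$ also lies in $\policyset^\trco$, so $\policyvec^\tcoe$ is feasible for problem~(\ref{eq:opt_TTL_1}) and its hit ratio cannot exceed the optimum $h^*(M)$. The middle link $h^*(M)-h^\tcoe(M)\le h^\text{upper}-h^\tcoe(M)$ is precisely Lemma~\ref{lemma:upper}. Hence everything reduces to $h^\text{upper}-h^\tcoe(M)\le\max_{1\le i\le K+1}2\sqrt{(\beta_i s_i+1)/M}$. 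Write $\mu_i\eqd\beta_i s_i+1$; by~(\ref{eq:p_i}) the product $p_i\mu_i$ is proportional to $\beta_i$, hence nonincreasing in $i$. Set $\rho_i\eqd\mu_i\,r_i^\tco(0,s_i)$; by Lemma~\ref{lemma:property2} this is the hit ratio of $\pi^\too(s_i)$, and $\rho_i\in(0,1]$ because under $\pi^\too(s_i)$ the item is loaded no earlier than $s_i$ after the previous request, so it occupies at most a fraction $(X_i-s_i)/X_i$ of a renewal cycle, giving $r_i^\tco(0,s_i)\le 1/\mu_i$.

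\emph{Step 1 (reducing the gap to the inefficiencies $1-\rho_i$).} By Lemma~\ref{lemma:upper} we may write $h^\text{upper}=\sum_{i=1}^{K+1}p_i\mu_i r_i^{\mathrm{id}}$ with $r_i^{\mathrm{id}}\eqd 1/\mu_i$ for $i\le K$, $r_{K+1}^{\mathrm{id}}\eqd b-\sum_{i\le K}1/\mu_i$, and $r_i^{\mathrm{id}}\eqd 0$ for $i>K+1$; by~(\ref{eq:upper_K}), $r_i^{\mathrm{id}}\le 1/\mu_i$ for every $i$ and $\sum_i r_i^{\mathrm{id}}=b$. Feed the scaled allocation $r_i'\eqd\rho_i r_i^{\mathrm{id}}$ into problem~(\ref{eq:opt_TTL_4}): it is feasible since $\sum_i r_i'\le\sum_i r_i^{\mathrm{id}}=b$, and $r_i'\le\rho_i/\mu_i=r_i^\tco(0,s_i)$ puts it on the linear piece of $\widehat{h}_i^\trco$, so $\widehat{h}_i^\trco(r_i')=\mu_i r_i'$. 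Since $\policyvec^\tcoe$ attains the optimum of~(\ref{eq:opt_TTL_4}) exactly — Lemma~\ref{lemma:property2} realises the linear piece exactly, and randomising $\pi^\too(s_i)$ with $\pi^\tc(+\infty)$ realises the segment up to $(1,1)$ exactly (under the solvability assumption of Section~\ref{sec:hit_ratio_event_driven}) — we get $h^\tcoe(M)\ge\sum_i p_i\mu_i r_i'=\sum_i p_i\mu_i\rho_i r_i^{\mathrm{id}}$. Subtracting and using $\mu_i r_i^{\mathrm{id}}\le1$,
\begin{align}
 h^\text{upper}-h^\tcoe(M)\ &\le\ \sum_{i=1}^{K+1}p_i\,\mu_i r_i^{\mathrm{id}}\,(1-\rho_i) \nonumber\\
 &\le\ \sum_{i=1}^{K+1}p_i\,(1-\rho_i). \nonumber
\end{align}

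\emph{Step 2 (bounding each $1-\rho_i$).} This is the crux. Note $1-\rho_i$ is the miss probability of $\pi^\too(s_i)$: conditioning on the length $t$ of the residual ON phase, it equals the probability that an $\mathrm{Exp}(\beta_i)$ clock of value $t$ (independent of the other caches) runs out before the next overhearing opportunity for $d_i$. The overhearing opportunities for $d_i$ are the misses of the other $M-1$ caches; when $d_i$ is served by $\pi^\too(s_i)$ everywhere these are i.i.d.\ renewal streams of rate $\nu_i\eqd\tfrac{\beta_i}{\mu_i}(1-\rho_i)$ each, so by independence across caches the conditional no‑opportunity probability is $g(t)^{M-1}\le e^{-(M-1)(1-g(t))}$, where $g(t)$ is the chance that one such stream has no point in a window of length $t$. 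Because a cache's inter‑miss time is $\ge s_i$ almost surely and has an exponentially decaying tail (it is a geometric‑type sum of cycle lengths $s_i+\mathrm{Exp}(\beta_i)$), it does not concentrate near $0$, which yields $1-g(t)\ge\tfrac12\min(\nu_i t,\kappa)$ for a constant $\kappa\in(0,1]$ depending only on $\beta_i s_i$. Averaging over $t\sim\mathrm{Exp}(\beta_i)$ and computing the exponential moment,
\begin{align}
 1-\rho_i\ &\le\ \expect\!\big[e^{-\tfrac{M-1}{2}\min(\nu_i t,\kappa)}\big] \nonumber\\
 &\le\ \frac{\beta_i}{\beta_i+(M-1)\nu_i/2}+e^{-(M-1)\kappa/2} \nonumber\\
 &\le\ \frac{2\mu_i}{(M-1)(1-\rho_i)}+e^{-(M-1)\kappa/2}. \nonumber
\end{align}
For $M$ large the exponential term is $o(1/M)$, so $(1-\rho_i)^2\le 4\mu_i/M$, i.e.\ $1-\rho_i\le 2\sqrt{\mu_i/M}$. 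Substituting into the display of Step~1 and using $\sum_{i\le K+1}p_i\le1$ gives $h^\text{upper}-h^\tcoe(M)\le\max_{1\le i\le K+1}2\sqrt{(\beta_i s_i+1)/M}$.

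The main obstacle is Step~2, because the event‑driven overhearing process is not analytically tractable. The idea that makes it work is to never handle the superposition itself but, using only independence across caches, to reduce everything to the single‑cache quantity $g(t)$, for which only a crude \emph{lower} bound on one cache's chance of missing $d_i$ within a window is needed; establishing that lower bound is precisely where the renewable ON‑OFF structure enters — the forced OFF period of length $s_i$ after every request, together with Lemma~\ref{lemma:property2}, keeps the inter‑miss distribution away from $0$. An alternative to the exponential‑moment step is a Paley--Zygmund bound applied \emph{after} conditioning on $t$ (so that the window is deterministic), which delivers the same self‑referential inequality $(1-\rho_i)^2\lesssim\mu_i/M$.
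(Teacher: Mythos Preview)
Your Step~1 is fine and matches the paper's reduction: construct the feasible policy that serves $d_i$ by $\pi^\too(s_i)$ for $i\le K$ (your scaled allocation $\rho_i r_i^{\mathrm{id}}$ lands on exactly these item policies), use that $\policyvec^\tcoe$ is optimal for~(\ref{eq:opt_TTL_4}), and conclude $h^{\text{upper}}-h^\tcoe(M)\le\sum_{i\le K+1}p_i(1-\rho_i)$.

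Step~2 has a genuine gap. You assert that the miss streams of the other $M-1$ caches are ``i.i.d.\ renewal streams'' and then factorise the probability of no overhearing opportunity in a window as $g(t)^{M-1}$. But while the \emph{request} processes of the caches are independent by assumption, the \emph{miss} events are not: a miss at any cache broadcasts $d_i$, which loads it into every other cache currently past its deaf period and thereby alters their subsequent hit/miss pattern. The events ``cache $m$ has no miss in $[s_i,s_i+t]$'' are coupled through the shared broadcast channel, so neither the product formula nor a single-cache marginal $g(t)$ is available without further argument; the self-referential inequality $(1-\rho_i)^2\le 4\mu_i/M$ therefore rests on an unsupported step. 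The auxiliary claim $1-g(t)\ge\tfrac12\min(\nu_i t,\kappa)$ is also only sketched; that the inter-miss time is bounded below by $s_i$ and has an exponential tail does not by itself give a uniform lower bound on the probability of a point in a short window, since the inter-miss law itself depends on $M$ through $\rho_i$.

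The paper avoids the coupled miss dynamics entirely with a direct combinatorial count (its Lemma~\ref{lemma:overhear_only}). The key observation is that for any fixed time instant $u$, among all requests (across all $M$ caches) whose ON-period contains $u$, at most one can be a miss: the earliest such miss triggers a broadcast that guarantees a hit for all later ones. Laying down $T/L$ equally spaced time stamps over $[0,T]$, every miss either has its ON-period straddling some stamp (at most $T/L$ such misses in total) or has ON-period shorter than $L$ (at most a fraction $\beta_i L$ of the $MT/(s_i+1/\beta_i)$ requests). Optimising over $L$ gives the pathwise bound $1-H_i\le 2\sqrt{(\beta_i s_i+1)/M}$, which plugs straight into your Step~1 inequality and finishes the proof.
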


In Appendix~\ref{sec:proof_theorem_asymp_opt},
we prove Theorem~\ref{theorem:asymp_opt} by showing 
that the expected inter-overhearing time for each date item
will converge to zero as $M$ goes to infinity. 
Theorem~\ref{theorem:asymp_opt} tells us that 
the proposed policy $\policyvec^\tcoe$ for event-driven overhearing setting 
is asymptotically optimal as the number of caches goes to infinity.
Intuitively, as the number of caches in the system increases, 
it will be more likely to overhear a data item.
The proposed policy could efficiently utilize the overhearing opportunities
and achieve asymptotically optimal performance. 

\subsection{Discussion on Implementation}
\label{sec:discuss}
Since the overhearing process is difficult to analyze, 
in order to design provably good policies, 
we previously assumed in Section~\ref{sec:hit_ratio_event_driven} that 
the cache occupancy achieved by
the item policy $\policy^\too(\omega_i)$, $\omega_i \geq s_i$,
can be analytically solved. 
% the parameter $\omega_i$ of 
% the policy $\policy^\tco(0,\omega_i)$
% that achieves any given cache occupancy $r$ in the interval 
% $[0, r^\tco(0,s_i)]$ is known.  
Based on this tractability assumption, we propose and analyze 
$\policyvec^\tcoe$  in Section~\ref{sec:opt_general}.
In this section, we will discuss how to implement the proposed policies
without this assumption.

First, we note that it is impractical to estimate 
the cache occupancy for all possible $\policy^\too(\omega_i)$'s,
since $\omega_i$ can take any real numbers. 
In contrast, we will show that a good performance can be guaranteed 
by leveraging an accurate estimation of the cache occupancy achieved by 
a specific item policy $\policy^\too(s_i)$.
% When designing policies $\policyvec^\toe$ and $\policyvec^\tcoe$, 
% we use the tractability assumption
% to achieve an overall cache occupancy equal to the cache size
% and lead to asymptotic optimality guarantees. 
% It turns out that it suffices to know the expected cache occupancy 
% $r^\tco(0,s_i)$ for each data item  
% to achieve reasonably good performance.

% In problem~(\ref{eq:opt_TTL_4}), the objective function is a summation of 
% the $p_i\widehat{h}_i^\text{rco}(r)$ function over all data items.
% And the derivative $p_i \cdot dh_i^\trco(r)/dr$ can be interpreted as the 
% marginal gain in the overall hit ratio 
% by increasing the cache occupancy of $d_i$. 
% Therefore, by applying the KKT conditions, we can easily observe that 
% the optimal policy is trying to maximize the marginal gain. 

For the convex problem~(\ref{eq:opt_TTL_4}), KKT conditions show that 
the optimal solution satisfies that 
\begin{align}\label{eq:KKT_2}
    \sum_{i=1}^N \ind(r_i^* \neq r^\too_i(s_i) \text{ and } 0 < r_i^* < 1) \leq 1,
\end{align}
where $r^\too_i(s_i)$ is the cache occupancy achieved by the overhearing-only item policy
$\policy^\too(\omega_i)$ with $\omega_i=s_i$.
It indicates that there is at most one $r_i^*$ that takes a value other than 
$r^\too_i(s_i)$, 0 and 1.
In other word, except for one item policy, the other item policies in the optimal solution
must be the overhearing-only item policy $\policy^\too(s_i)$ or a randomization of $\policy^\too(s_i)$
and $\policy^\tc(+\infty)$.
As a result, we could further narrow down the policy set by considering 
$\policy^\too(s_i)$ and randomizations of $\policy^\too(s_i)$
and $\policy^\tc(+\infty)$, i.e., 
\begin{align}\label{eq:policy_set_3}
    \{\policy^\trco((q_i,1-q_i),(+\infty,0),(+\infty,s_i)): 0 \leq q_i \leq 1\},
\end{align}
Once we have a good estimation of $r^\too_i(s_i)$, 
all item policies in this set are tractable.

Therefore, to implement the caching and overhearing policy $\policyvec^\tcoe$ proposed in 
Section~\ref{sec:opt_general},
we could solve the hit ratio maximization problem based on the policy set (\ref{eq:policy_set_3})
rather than the one defined in (\ref{eq:policy_set_2}).
The solved policy is an approximation of $\policyvec^\tcoe$.
By applying (\ref{eq:KKT_2}), 
we can prove that the overall hit ratio achieved by this approximated policy
is within $1-1/b$ fraction of $\policyvec^\tcoe$, where $b$ is the cache size.

% For the caching and overhearing policy $\policyvec^\tcoe$ proposed in 
% Section~\ref{sec:opt_general}, we may also ignore the boundary case 
% by solving the following optimization problem instead of problem~(\ref{eq:opt_TTL_4}).
% \begin{align}\label{eq:opt_TTL_5}
%     \max_{r_i} \hspace{12mm}& \sum_{i=1}^N  p_i \cdot \widehat{h}_i^\text{rco}(r_i)   
%      \hspace{-20mm} &  \\
%     \text{subject to} \hspace{8mm} 
%     & r_i^\tco(0,s_i) \leq r_i \leq 1, &  \hspace{-4mm} 1\leq i\leq N, \nonumber \\
%     &\sum_{i=1}^N r_i \leq b. & \hspace{-0mm} \nonumber
% \end{align}
% Note that the optimal solution $r_i^*$'s to problem~(\ref{eq:opt_TTL_5}) satisfies
% $r_i^* \geq r_i^\tco(0,s_i)$, $1\leq i\leq N$.
% As a result, the item policy achieves $r_i^*$ must belong to a narrowed set 
% \begin{align*}
%     \{\policy^\trco((q_i,1-q_i),(+\infty,0),(+\infty,s_i)): 0 \leq q_i \leq 1\},
% \end{align*}
% and can be identified if $r_i^\tco(0,s_i)$ is known. 
% Furthermore, based on the KKT conditions, it is easy to see that 
% the optimal solutions to problem~(\ref{eq:opt_TTL_4}) and problem~(\ref{eq:opt_TTL_5})
% should be the same except for at most one data item. 
% Thus, the optimal overall hit ratio achieved by problem~(\ref{eq:opt_TTL_4})
% is within $1-1/b$ fraction of 
% the one achieved by the original policy $\policyvec^\tcoe$. 

The remaining problem is how to estimate $r_i^\too(s_i)$
values. 
A simple solution would be to first run an estimation phase to 
approximate $r_i^\too(s_i)$ and then solve the 
modified policies using the estimated values. 
In the estimation phase, for each data item $d_i$, 
we may run $\pi^\too(s_i)$ for $T$ units of time,
and estimate $r_i^\too(s_i)$ by 
\begin{align}
    \bar{r}_i^\too(s_i) = 
    \left(\text{Duration of time when $d_i$ is cached}\right) / T.
    \nonumber
\end{align}

The proposed implementation solution could introduce performance losses
compared to the original policy $\policyvec^\tcoe$
due to the following two reasons:
\begin{enumerate}
    \item The implemented policy is an approximation of $\policyvec^\tcoe$ by considering 
    the policy set (\ref{eq:policy_set_3}) rather than (\ref{eq:policy_set_2}).
    \item The caching and overhearing decisions are biased since the estimation 
    $\bar{r}_i^\too(s_i)$ is not accurate. 
\end{enumerate}
However, these performance losses could be ignored as long as the cache size $b$
and the length of the estimation phase $T$ are sufficiently large.

\section{Generalization for Heterogeneous Demand Dynamics}
\label{sec:gen}

In the main paper, we focus on the homogeneous demand dynamics, 
where different users have the same demand pattern (i.e., $s_i$, $\beta_i$) for a given data item $d_i$. 
The obtained insights and theorems can be easily generalized for heterogeneous demands with minor modifications. 

We use $m$, $1\leq m \leq M$, to index an edge cache or 
the user served by the corresponding edge cache interchangeably, 
since each edge cache is assumed to serve a single user. 
Our first step is to extend the proposed ON-OFF processes 
to allow for different demand patterns of the same data item 
among different users. 
For the user $m$, we use the proposed ON-OFF process with
a OFF-period length $s_i^\um$ and ON-period request rate $\beta_i^\um$ 
to describe the demand dynamics of the data item $d_i$.
The popularity of $d_i$ for user $m$ can be evaluated by 
\begin{align*}
    p_i^\um = \frac{1}{s_i^\um + 1/\beta_i^\um} \left/\sum_{j=1}^N\frac{1}{s_j^\um + 1/\beta_j^\um}\right. .
\end{align*}
Define $\nu^\um$ as 
\begin{align}
    \nu^\um = \sum_{i=1}^N\frac{1}{s_i^\um + 1/\beta_i^\um} 
    \left/\sum_{m=1}^M\sum_{i=1}^N\frac{1}{s_i^\um + 1/\beta_i^\um}\right. .\nonumber
\end{align}
$\nu^\um$ represents the ratio of requests that are from user $m$.
We have $\sum_{m=1}^M \nu^\um = 1$.

Similar to the hit ratio and the cache occupancy defined in Section~\ref{sec:HRM},
we use $h_i^\um(\pi)$ and $r_i^\um(\pi)$ to denote the hit ratio and the cache occupancy 
of the data item $d_i$ in the edge cache $m$ achieved by the item policy $\pi$, respectively. 

Our goal is to find the optimal policy such that the overall hit ratio of the entire system 
is maximized. 
We formally propose the problem as follows.
\begin{align}\label{eq:opt_gen_1}
    \max_{\policy_i^\um, 1\leq m\leq M, 1\leq i\leq N} 
        \hspace{2mm}& \sum_{m=1}^M\sum_{i=1}^N \nu^\um p_i^\um \cdot h_i^\um(\policy_i^\um)  \\
    \text{subject to} \hspace{8mm} & \policy^\um_i \in \policyset^\text{rco}, 
         1\leq m\leq M, 1\leq i\leq N,\nonumber \\
    & \sum_{i=1}^N r_i^\um(\policy_i^\um) \leq b, \hspace{2mm} 1\leq m\leq M. \nonumber
\end{align}

\subsection{Time-Driven Overhearing}
The time-driven overhearing process is independent of the edge caching policy.
Under time-driven overhearing, the cache hit ratio and occupancy of an edge cache
are determined by its own policy and are independent of other caches.
Therefore, maximizing the overall hit ratio of the entire system is 
equivalent to maximizing the hit ratio of each edge cache separately. 
Formally, we can propose $M$ sub-problems, where the $m$-th problem is defined 
as follows.
\begin{align}\label{eq:opt_gen_2}
    \max_{\policy_i^\um, 1\leq i\leq N} \hspace{6mm}& \nu^\um p_i^\um \cdot \sum_{i=1}^N h_i^\um(\policy_i^\um)  \\
    \text{subject to} \hspace{8mm} & \policy^\um_i \in \policyset^\text{rco}, 
        \hspace{2mm} 1\leq i\leq N,\nonumber \\
    & \sum_{i=1}^N r_i^\um(\policy_i^\um) \leq b. \nonumber
\end{align}
Let $\{\policy_i^{\um*}: 1\leq i\leq N\}$ denote the optimal solution of the sub-problem~(\ref{eq:opt_gen_2}),
then $\{\policy_i^{\um*}: 1\leq i\leq N, 1\leq m\leq M\}$ would be the optimal solution of the original problem~(\ref{eq:opt_gen_1}).
Notably, solving the sub-problem~(\ref{eq:opt_gen_2}) is equivalent to solving problem~(\ref{eq:opt_TTL_1})
that was proposed for the homogeneous setting previously.
Therefore, our analysis for the homogeneous demand setting in Section~\ref{sec:single} is still valid for the 
heterogeneous setting with time-driven overhearing. 
And the proposed $\policyvec^T$ policy would be the solution to the sub-problem~(\ref{eq:opt_gen_2}),
where the policy inputs $s_i$, $\beta_i$ are replaced by $s_i^\um$ and $\beta_i^\um$, $1\leq i \leq N$.

\subsection{Event-Driven Overhearing}
\label{sec:genearlized_event_driven}
The event-driven overhearing process becomes more complicated when the demand dynamics are heterogeneous
across different edge caches. 
We are not able to split the original overall hit ratio maximization problem
to independent sub-problems designed for each cache, since the policy for one cache will impact the 
overhearing processes as well as the optimal decision of other caches. 

Due to the increased complexity, 
the policy $\policyvec^E$ proposed for homogeneous demand dynamics cannot 
be directly extended for heterogeneous settings. 
Fortunately, the key properties of hit ratios and cache occupancies
characterized in Lemma~\ref{lemma:property2} are still valid for each edge cache. 
In particular, for the edge cache $m$, Lemma~\ref{lemma:property2} holds if we replace $s_i$, $\beta_i$ by 
$s_i^\um$, $\beta_i^\um$.
Similar to the idea of proposing policy $\policyvec^E$ for homogeneous demands in Section~\ref{sec:multi},
we will leverage Lemma~\ref{lemma:property2} and the informative structure characterized in Section~\ref{sec:structure}
to design a provably good policy.

For the cache $m$ and the data item $d_i$, let $ind(i,m)$ be a reordering of the data index,
such that $ind(i,m)$ takes distinct integer values in $[1, N]$ for different input $i$,
and $\beta_{ind(i,m)}^\um \geq \beta_{ind(j,m)}^\um$ for any $1\leq i < j\leq N$.
Define the set of items $\D^\um = \{d_i: 1\leq ind(i,m) \leq K^\um\}$,
where $K^\um$ is an integer such that 
\begin{align} \label{eq:K_m}
    &\sum_{i=1}^{K^\um} \frac{1}{\beta_{ind(i,m)}^\um s_i^\um +1} \leq b, \\
    &\sum_{i=1}^{K^\um+1} \frac{1}{\beta_{ind(i,m)}^\um s_i^\um +1} > b.\nonumber
\end{align}
We say the data item $d_i$ is a popular data item for user $m$, 
if $ind(i,m)\leq K^\um$.
Define 
$$\mathcal{C}_i = \{m: ind(i,m) \leq K^\um\}$$
to be the set of users, for which $d_i$ is a popular data item. 
We propose the policy $\policyvec^\toe$ as follows.

\noindent
\textbf{Overhearing only policy for event-driven overhearing ($\policyvec^\toe$):} 
For each edge cache $m$, apply the overhearing only item policy $\policy^\too(s_i^\um)$
to serve the data item $d_i$, if $d_i \in \D^\um$.
Do not overhear or cache the data items that are not in $\D^\um$.
  
$\policyvec^\toe$ imitates the overhearing decisions of $\policyvec^\tcoe$,
which is proposed for homogeneous demand dynamics in Section~\ref{sec:opt_general}.
However, to simplify the analysis,
$\policyvec^\toe$ forgoes the caching only options of $\policyvec^\tcoe$.
For $M$ caches under event-driven overhearing, 
let $h^*(M)$ and  $h^\toe(M)$ denote the overall expected hit ratio achieved 
by the optimal policy and the proposed policy $\policyvec^\toe$, respectively.
Assume that $\beta_i^{(m)}$ is upper bounded and define 
$$\beta_{max} = \max_{1\leq i\leq N, 1\leq m\leq M} \beta_i^\um.$$
We can prove that $h^\toe(M)$ is close to $h^*(M)$.

\begin{theorem}\label{theorem:optimality}
    For the proposed policy $\policyvec^\toe$, we have
    \begin{align}
        0 & \leq h^*(M) - h^\toe(M) \nonumber \\
        % & \leq 
        % \max_{1\leq i\leq N} 2\sqrt{\beta_{max}} 
        % \left/ \sqrt{\sum_{m \in \mathcal{C}_i} \left(s_i^\um+1/\beta_i^\um\right)^{-1}}\right. \nonumber 
        & \leq \frac{1}{b} + \max_{1\leq i\leq N} \frac{2\sqrt{\beta_{max}}} {\sqrt{\sum_{m \in \mathcal{C}_i} \left(s_i^\um+1/\beta_i^\um\right)^{-1}}}
        \nonumber
    \end{align} 
\end{theorem}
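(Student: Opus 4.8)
The plan is to mirror the proof structure of Theorem~\ref{theorem:asymp_opt}, decomposing the gap $h^*(M) - h^\toe(M)$ into two pieces and bounding each separately. First I would split the bound as $h^*(M) - h^\toe(M) = \bigl(h^*(M) - h^{\text{upper,het}}\bigr) + \bigl(h^{\text{upper,het}} - h^\toe(M)\bigr)$, where $h^{\text{upper,het}}$ is the natural heterogeneous analogue of the idealized upper bound from Lemma~\ref{lemma:upper}, obtained by summing the per-cache idealized hit ratios. Since the event-driven overhearing process only ever \emph{adds} overhearing opportunities relative to the idealized ``overhear anything at any time'' policy used to define the bound, the same argument sketched for Lemma~\ref{lemma:upper} gives $h^*(M) \le h^{\text{upper,het}} = \sum_{m=1}^M \nu^\um \bigl(\sum_{i: ind(i,m)\le K^\um} p_i^\um + (\text{boundary term for } K^\um{+}1)\bigr)$. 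This handles the nonnegativity $h^*(M) - h^\toe(M) \ge 0$ trivially (optimal dominates any feasible policy), and reduces the work to bounding $h^{\text{upper,het}} - h^\toe(M)$.

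For the second piece, the key observation is that $\policyvec^\toe$ plays $\policy^\too(s_i^\um)$ on exactly the popular set $\D^\um$, and by Lemma~\ref{lemma:property2} (applied per-cache with $s_i^\um,\beta_i^\um$) each such item policy would achieve the idealized hit ratio if overhearing were instantaneous. The loss incurred for item $d_i$ at cache $m$ is therefore controlled by the expected inter-overhearing time for $d_i$, which under event-driven overhearing is driven by the misses of $d_i$ across all caches in $\mathcal{C}_i$. I would show that the aggregate rate of overhearing opportunities for $d_i$ grows like $\sum_{m\in\mathcal{C}_i}(s_i^\um + 1/\beta_i^\um)^{-1}$ (each cache in $\mathcal{C}_i$ contributes miss events at a rate proportional to its request rate for $d_i$), so the expected inter-overhearing time scales as the reciprocal of this sum. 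Combining with the Lipschitz/square-root control on how hit ratio degrades with inter-overhearing time — exactly as in the proof of Theorem~\ref{theorem:asymp_opt}, where a term of the form $2\sqrt{(\beta_i s_i + 1)/M}$ emerges — yields a per-item loss of order $2\sqrt{\beta_{max}}\big/\sqrt{\sum_{m\in\mathcal{C}_i}(s_i^\um+1/\beta_i^\um)^{-1}}$, and taking the max over $i$ bounds the total overhearing-induced loss.

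Finally, the extra additive $1/b$ term accounts for the fact that $\policyvec^\toe$, unlike $\policyvec^\tcoe$, \emph{forgoes the caching-only options} entirely: on each cache it fills the budget with overhearing-only policies on $\D^\um$ rather than optimally mixing in a $\policy^\tc(+\infty)$ component for the single fractional item allowed by the KKT condition~(\ref{eq:KKT_2}). By the same reasoning used in Section~\ref{sec:discuss} to show the approximated policy is within a $1 - 1/b$ fraction of $\policyvec^\tcoe$, sacrificing that one caching-only slot costs at most a $1/b$ fraction of the achievable hit ratio, hence at most an additive $1/b$. Assembling the three contributions gives the stated bound.

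I expect the main obstacle to be rigorously controlling the inter-overhearing time for $d_i$ under \emph{heterogeneous} event-driven overhearing: the miss events that generate overhearing opportunities for $d_i$ depend, through $\policyvec^\toe$ itself, on the caching states at every cache in $\mathcal{C}_i$, and these are coupled. The clean way around this is to lower-bound the overhearing rate by counting only a tractable sub-stream of miss events — e.g., misses that occur right at the end of an OFF period before the $\policy^\too(s_i^\um)$ timer can reload the item via overhearing — whose rate is an explicit function of $(s_i^\um,\beta_i^\um)$ and independent across caches given the renewal structure. Making that sub-stream argument precise, and checking it still yields the $\sum_{m\in\mathcal{C}_i}(s_i^\um+1/\beta_i^\um)^{-1}$ scaling, is where the real care is needed; the rest follows the established template.
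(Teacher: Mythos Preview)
Your decomposition matches the paper's: both introduce an idealized scenario (your $h^{\text{upper,het}}$ is the paper's $h^{\text{ideal}}(M)$), split off a $1/b$ loss from $\policyvec^\toe$ not fully using the budget, and then bound the overhearing-induced loss by a heterogeneous analogue of Lemma~\ref{lemma:overhear_only}. Two points where your execution diverges from the paper are worth noting.

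First, for the $1/b$ term the paper does not go through the KKT condition~(\ref{eq:KKT_2}) or the Section~\ref{sec:discuss} approximation ratio. It argues directly: in the idealized scenario $\policyvec^\toe$ fills each cache to occupancy in $(b-1,b]$ with the highest-efficiency items (efficiency $\propto \beta_i^\um$), and since the idealized hit ratio is concave in the budget, the optimal with budget $b$ exceeds $\policyvec^\toe$'s idealized hit ratio by at most $h^{\text{ideal}}(M)/b\le 1/b$. Your KKT route is defensible but indirect, and~(\ref{eq:KKT_2}) was stated for the homogeneous shrunken policy set, so you would need to re-derive it per cache.

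Second, and more importantly, for the overhearing-induced loss the paper does \emph{not} try to lower-bound an overhearing rate via a tractable sub-stream of misses. Instead it extends the time-stamp argument of Lemma~\ref{lemma:overhear_only}: place time stamps spaced $L$ apart; among all requests for $d_i$ (across caches in $\mathcal{C}_i$) whose ON-period intersects a given stamp, at most one can be a miss, because the first miss triggers a broadcast that every other such cache overhears before its request arrives. Misses therefore split into at most $T/L$ ``stamp-crossing'' misses plus requests with ON-period shorter than $L$; optimizing $L$ gives the $2\sqrt{\beta_{\max}}/\sqrt{\sum_{m\in\mathcal{C}_i}(s_i^\um+1/\beta_i^\um)^{-1}}$ bound directly. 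This completely sidesteps the coupling you flag as the main obstacle---no need to track inter-overhearing times or identify independent miss sub-streams. Your sub-stream idea could perhaps be made to work, but the stamp argument is both simpler and already available from the homogeneous case.
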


The proof of this theorem is presented in Appendix~\ref{sec:proof_theorem_optimality}.
Theorem~\ref{theorem:optimality} indicates 
\begin{align*}
    \lim_{M\to +\infty} h^*(M) - h^\toe(M) = 1/b,
\end{align*} 
if for any $1\leq i \leq N$
\begin{align}\label{eq:opt_condition}
    \lim_{M\to +\infty} \sum_{m \in \mathcal{C}_i} \left(s_i^\um+1/\beta_i^\um\right)^{-1} \to +\infty.
\end{align}
Condition (\ref{eq:opt_condition}) states that as the user population grows, 
the overall request rate for $d_i$ from the user set $\mathcal{C}_i$ also keeps increasing.
Under such conditions, we would have more and more overhearing opportunities 
as the user population increases.
Theorem~\ref{theorem:optimality} reveals an insight that 
we could achieve near optimal caching performance 
by strategically leveraging the overhearing opportunities (i.e., $\policyvec^\toe$),
if the demand of each data item increases consistently  
when it is exposed to a larger user population.

Different from the policy $\policyvec^\tcoe$ that can achieve 
asymptotically optimal performance for homogeneous demand settings,
the policy $\policyvec^\toe$ designed for heterogeneous settings 
always has a $1/b$ hit ratio gap with the optimal policy. 
The reason is that $\policyvec^\toe$ adopts an overhearing only mechanism 
and restricts the overhearing TTL $\omega_i^\um$
to take either value $s_i^\um$ or $+\infty$ for the ease of analysis. 
Considering that the cache size $b$ is typically large, 
$\policyvec^\toe$ achieves reasonably good performance.

\section{Evaluation}
\label{sec:eva}
In this section, we will validate the theoretical results by 
evaluating the empirical performance of 
the proposed $\policyvec^\tcot$ and $\policyvec^\tcoe$ policies 
and comparing them with the following benchmarks:
\begin{itemize}
    \item The optimal overhearing-only policy: this policy is the solution of problem~(\ref{eq:opt_TTL_1})
    with an additional constraint $\tau_i = 0$, $1\leq i\leq N$,
    and can be easily solved using the same approach that solves $\policyvec^\tcot$ and $\policyvec^\tcoe$.
    The optimal overhearing-only policy evicts the data item immediately after a request for it.  
    \item The optimal caching-only policy: this policy is the solution of problem~(\ref{eq:opt_TTL_1})
    with an additional constraint $\omega_i = +\infty$, $1\leq i\leq N$,
    and can be easily solved using standard convex optimization tools.
    It turns out that the optimal caching-only policy caches data items with the largest long term popularities.
    and achieves better performance than various conventional policies (e.g., LRU, LFU) under the setting of this paper. 
    \item LFU policy: the policy caches the most frequently used data items, which is an approximation of the optimal caching-only policy.
    \item LRU policy: the policy caches the most recently used data items.
\end{itemize}
In Experiments~1 and 2, we evaluate the performance of these policies under time-driven 
and event-driven overhearing settings, respectively. 

\textbf{Experiment 1:}
In this experiment, we consider the time-driven overhearing setting. 
% The optimal caching-only policy stores the data items 
% with the largest long-term popularities, 
% and is the same as $\policyvec^\tce$ proposed in Section~\ref{sec:asy_opt}.
% The optimal overhearing-only policy is the solution of problem~(\ref{eq:opt_TTL_1})
% with an additional constraint $\tau_i = 0$, $1\leq i\leq N$,
% and can be easily solved using standard convex optimization tools.
{\color{black}Set $b=50$, $N=1000$, $\beta_i = c\cdot i^{-0.8}$ 
with $c = 1/\sum_{i=1}^Ni^{-0.8}$ 
and $s_i = 1/\beta_i$.}
Let $\lambda_i = \gamma \cdot \beta_i$,
where $\gamma = \sum_{i=1}^N \lambda_i$ is the overall overhearing rate. 
\begin{figure}[t]
    \centering
    \begin{subfigure}{.9\linewidth}
        \centering
        \includegraphics[width=6.8cm]{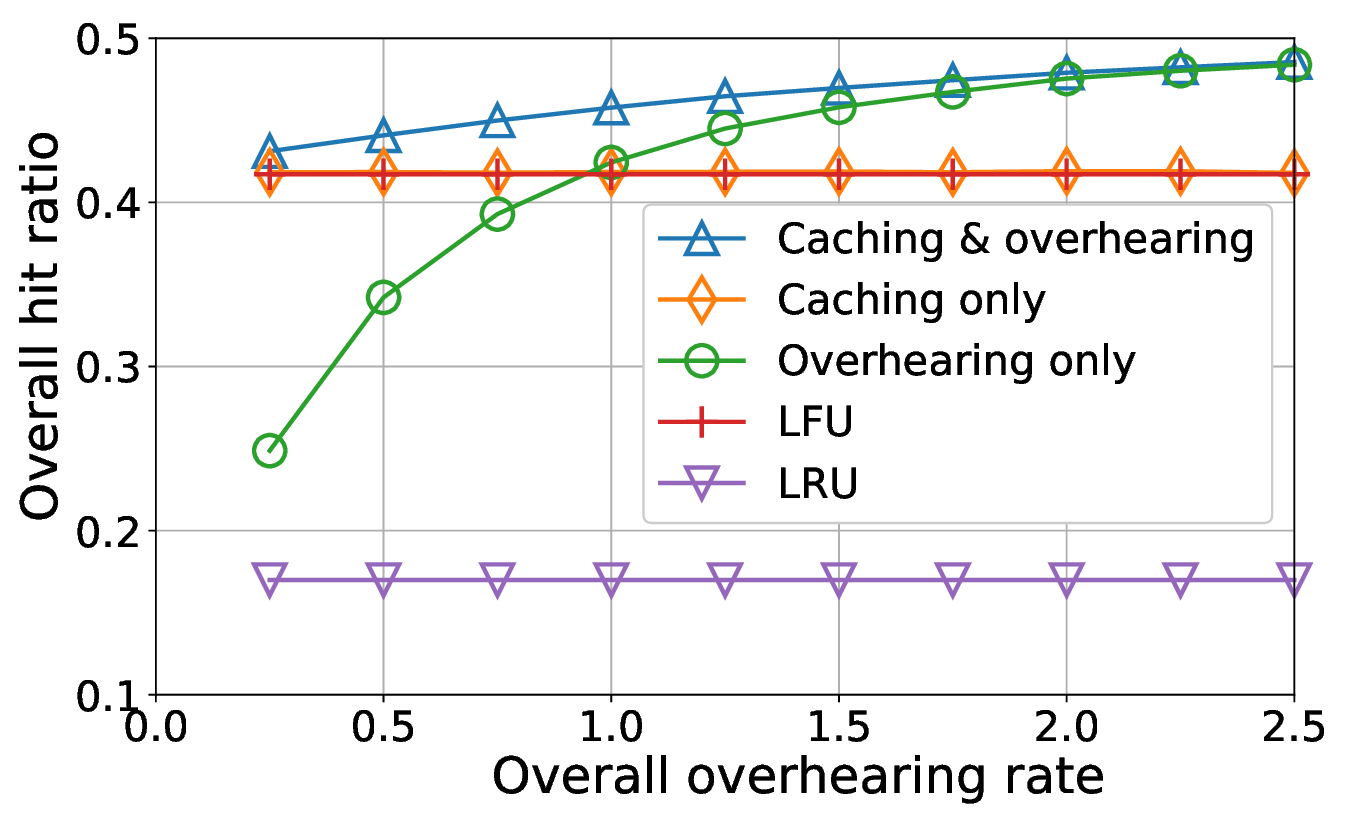}
        \vspace{-2mm}
        \caption{}
        \label{fig:exp1_sub1}
    \end{subfigure}%
    \\
    \begin{subfigure}{.9\linewidth}
        \centering
        \includegraphics[width=6.8cm]{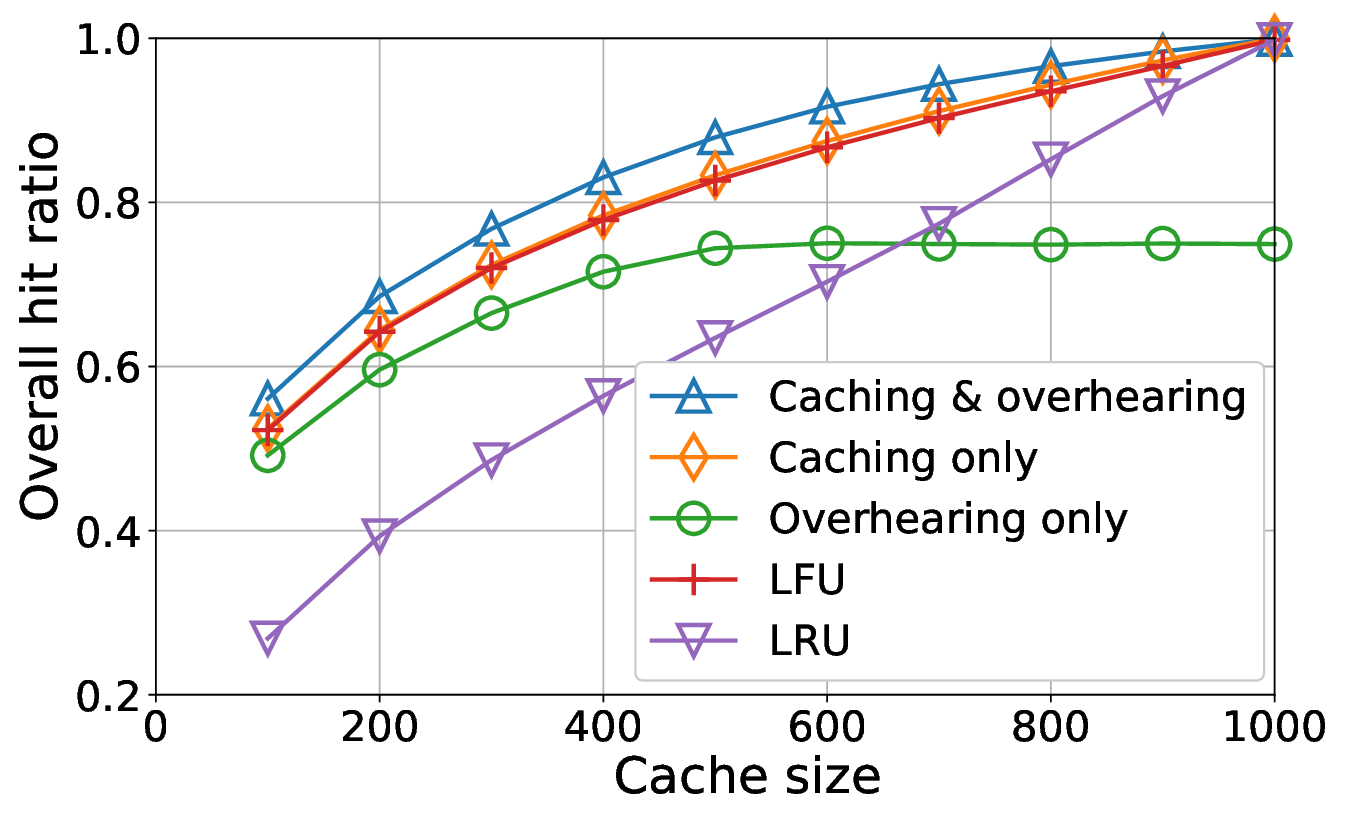}
        \vspace{-2mm}
        \caption{}
        \label{fig:exp1_sub2}
    \end{subfigure}
    \vspace{-2mm}
    \caption{\color{black}Overall hit ratio with time-driven overhearing.}
    \label{fig:exp1}
    \vspace{-5mm}
\end{figure}
Since the number of caches $M$ does not impact the performance for time-driven 
overhearing, we simply set $M=1$.
We evaluate the overall hit ratios under different $\gamma$ values
and depict the results in Figure~\ref{fig:exp1_sub1}.
It can be observed that the proposed optimal caching and overhearing policy $\policyvec^\tcot$
always outperforms the other benchmarks. 
The overhearing-only policy achieves similar performance as $\policyvec^\tcot$ when $\gamma$ is large,
which validates that when there are sufficient overhearing opportunities, 
the overhearing-only policy can achieve near optimal performance. 
However, when $\gamma$ is small, the overhearing-only suffers a lot. 
The overall hit ratios achieved by the caching-only policy, LFU and LRU are constant, 
since they are independent of the overhearing process. 
LFU achieves similar performance as the caching-only policy.
However, LRU achieves much worse performance, since under individual demand dynamics
the most recently used data item may not be popular in the near future. 

Next, we fix $\gamma = 1$ and change the cache size $b$.
The results are plotted in Fig.~\ref{fig:exp1_sub2}.
The caching and overhearing policy $\policyvec^\tcot$ still outperforms the other benchmarks. 
As for the overhearing-only policy, the hit ratio 
will be a constant less than $1$, when the cache size is larger than a threshold.
The reason is that the overall overhearing rate is too low,
and cache is not full even when we maximize the overhearing utilization 
(i.e., set $\omega_i = 0$).  
As a result, further increasing the cache size will not lead to a higher hit ratio.
The caching-only policy, LFU and LRU can achieve a hit ratio $1$,
when the cache is large enough to store all items (i.e., $b=N=1000$).

% \noindent 
\textbf{Experiment 2:}
In this experiment, we simulate the event-driven overhearing.
Consider $N=1000$ data items with $\beta_i = c\cdot i^{-0.8}$, $c = 1/\sum_{i=1}^N i^{-0.8}$
and $s_i = 1/\beta_i$ for $1\leq i \leq N$.
\begin{figure}[t]
    \centering
    \begin{subfigure}{.9\linewidth}
        \centering
        \includegraphics[width=6.8cm]{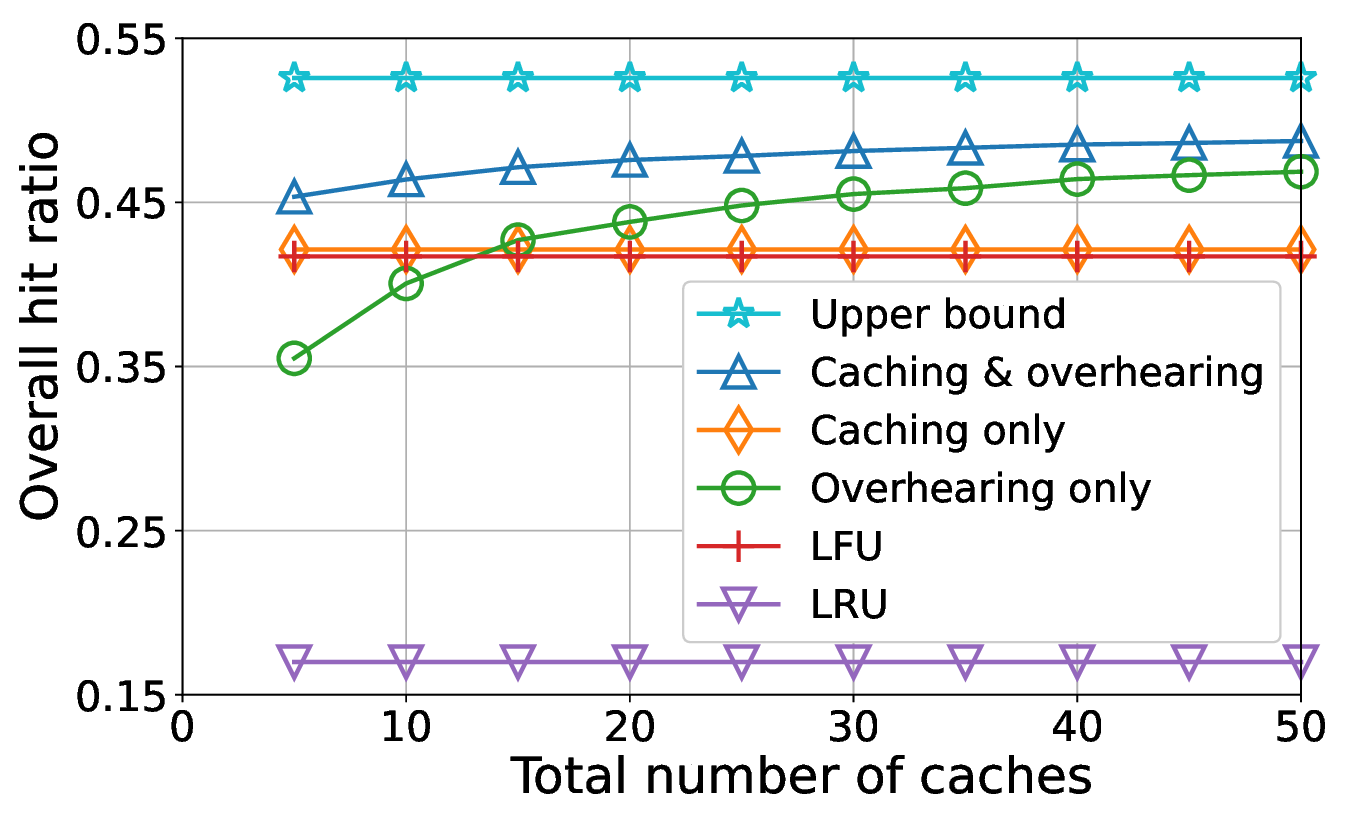}
        \vspace{-2mm}
        \caption{}
        \label{fig:exp2_sub1}
    \end{subfigure}%
    \\
    \begin{subfigure}{.9\linewidth}
        \centering
        \includegraphics[width=6.8cm]{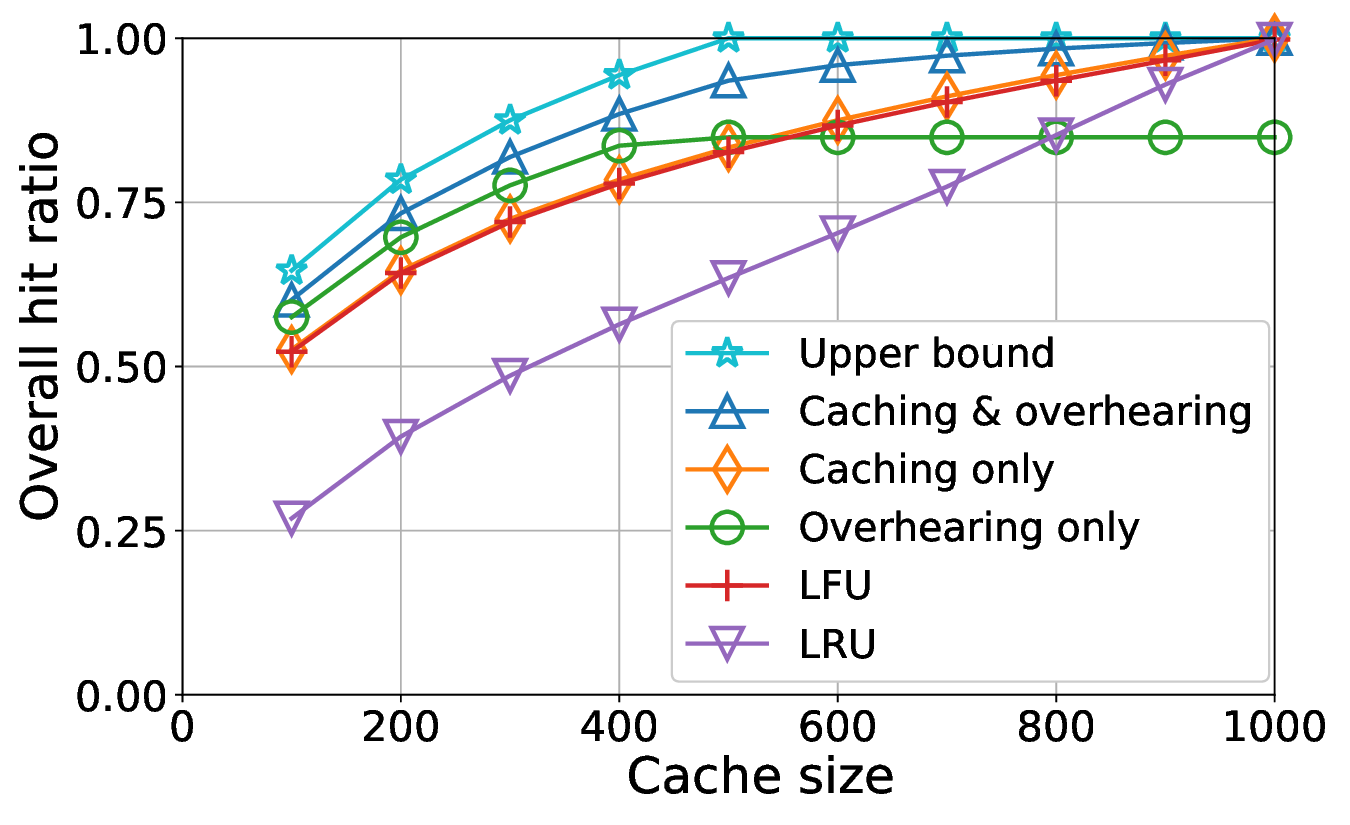}
        \vspace{-2mm}
        \caption{}
        \label{fig:exp2_sub2}
    \end{subfigure}
    \vspace{-2mm}
    \caption{\color{black}Overall hit ratio with event-driven overhearing.}
    \label{fig:exp2}
    \vspace{-5mm}
\end{figure}
We evaluate the overall hit ratios achieved by the proposed overhearing and caching policy $\policyvec^\tcoe$
as well as other benchmarks,
and compare them with the upper bound of the optimal hit ratio derived in Equation~(\ref{eq:upper_hit}).
Note that the policy $\policyvec^\tcoe$ is solved 
using an estimation phase with duration $10000$ 
based on discussions in Section~\ref{sec:discuss}.

{\color{black}First, we set the cache size $b = 50$ and evaluate the hit ratios for different numbers of caches.
The results are plotted in Figure~\ref{fig:exp2_sub1}. }
The proposed caching and overhearing-only policy always achieves the best performance.
When $M$ is small, the overhearing-only policy achieves a much lower hit ratio 
than the caching-only policy. 
When $M$ is large, the overhearing-only policy outperforms the caching-only policy. 
In addition, the hit ratios achieved by
the caching and overhearing policy
will get closer and closer to the upper bound $h^\text{upper}$ as $M$ increases, 
which validates the asymptotic optimality.
LFU achieves similar performance to the optimal caching-only policy
and LRU achieves the worst performance. 

Next, we set $M=50$ and evaluate the hit ratios for different cache sizes. 
The results are presented in Figure~\ref{fig:exp2_sub2}.
The proposed caching and overhearing-only policy $\policyvec^\tcoe$
always achieves the best performance.
When the cache size is less than $500$,
the overhearing-only policy outperforms the caching-only policy, 
because $M$ is relatively large to generate sufficient overhearing opportunities. 
However, for $b>500$, when we further increase the cache size,
the overhearing-only policy cannot achieve a larger hit ratio. 
At this time, the overhearing opportunities become the bottleneck of the system,
and the cache space cannot be fully utilized due to the lack of overhearing opportunities.
In contrast, the proposed caching and overhearing policy, the caching-only policy, LFU and LRU can always benefit from a larger
cache size and achieve a higher hit ratio.   

\section{Conclusion}
\label{sec:conclusion}
Edge caching typically serves a very small group of users with 
individualized data demand.
Hence, caching schemes for an edge need to be substantially 
different from those at the core
that serves a large population of users. 
In this paper, we developed new caching policies optimized for
individualized data demand at the wireless edges.
With the objective to maximize the overall hit ratio, 
we proposed to actively evict the data items that are not 
likely to be requested in the near future and 
bring them back into the cache through overhearing 
when they become popular again. 
In particular, when the overhearing opportunities are time-driven,
the optimization problem turns out to be non-convex.
Nevertheless, by exploiting an informative structure of the optimal solution, 
we converted the original problem to a convex one 
and found the optimal policy $\policyvec^\tcot$. 
When the overhearing opportunities are event-driven, 
the overhearing processes become intractable. 
% We first proposed an overhearing-only policy $\policyvec^\toe$
% and a caching-only policy $\policyvec^\tce$ which achieve
% the optimality in two different regimes. 
Still, inspired by the optimality structure 
of the time-driving overhearing setting, 
we proposed a caching and overhearing policy $\policyvec^\tcoe$
which is asymptotically optimal 
as the total number of caches increases.
Both theoretical and numerical results verified that the caching policies designed specifically for edges
could substantially improve the caching efficiency and outperform
the policies designed for the core. 
% that successfully combine the advantages of $\policyvec^\toe$
% and $\policyvec^\tce$.

% \input{acknowledgment.tex}

% {\appendix[Proof of the Zonklar Equations]
% Use $\backslash${\tt{appendix}} if you have a single appendix:
% Do not use $\backslash${\tt{section}} anymore after $\backslash${\tt{appendix}}, only $\backslash${\tt{section*}}.
% If you have multiple appendixes use $\backslash${\tt{appendices}} then use $\backslash${\tt{section}} to start each appendix.
% You must declare a $\backslash${\tt{section}} before using any $\backslash${\tt{subsection}} or using $\backslash${\tt{label}} ($\backslash${\tt{appendices}} by itself
%  starts a section numbered zero.)}

{\appendices
\section{Proof of Theorem~\ref{theorem:hit_ratio}}
\label{sec:pf_hit_ratio}
Let $\pi^\tc(\tau_i) = \pi^\tco(\tau_i,+\infty)$
denote a caching-only policy that never overhears.
Let $h_i^\tc(\tau_i)$ and $r_i^\tc(\tau_i)$ denote the 
expected hit ratio and cache occupancy of the data item $d_i$
when it is served by the policy $\pi^\tc(\tau_i)$.
Similarly, we can define $\pi^\too(\omega_i) = \pi^\tco(0,\omega_i)$,
$h_i^\too(\omega_i)$ and $h_i^\too(\omega_i)$.

\begin{lemma}\label{lemma:hit_ratio_sepa}
    The expected hit ratio and cache occupancy achieved by a 
    policy $\pi^\tco(\tau_i,\omega_i)$ with $\tau_i\leq\omega_i$
    can be calculated by 
    \begin{align} 
        & h^\tco(\tau_i,\omega_i) = h_i^\tc(\tau_i) + h_i^\too(\omega_i), \nonumber \\
        & r^\tco(\tau_i,\omega_i) = r_i^\tc(\tau_i) + r_i^\too(\omega_i). \nonumber
    \end{align}
\end{lemma}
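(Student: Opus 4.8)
The plan is to show that the item policy $\pi^\tco(\tau_i,\omega_i)$ decomposes, on every renewal cycle, into the ``superposition'' of the caching-only policy $\pi^\tc(\tau_i)$ and the overhearing-only policy $\pi^\too(\omega_i)$, and then to invoke the renewal-reward theorem. First I would fix attention on a single renewal cycle for $d_i$: by the renewable ON-OFF model, consecutive requests for $d_i$ are separated by i.i.d.\ inter-request times $X_i = s_i + Y_i$, where $s_i$ is the deterministic OFF period and $Y_i$ is $\mathrm{Exp}(\beta_i)$. Within one such cycle, starting at a request epoch, the policy $\pi^\tco(\tau_i,\omega_i)$ keeps $d_i$ cached on the interval $[0,\tau_i]$ (the caching-timer phase), then evicts it, stays ``deaf'' on $[\tau_i,\omega_i]$, and then — if the next request has not yet arrived — reloads $d_i$ upon the first overhearing opportunity after $\omega_i$ and keeps it until the next request. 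The key structural observation is that these two sources of cache residency are \emph{disjoint in time} (because $\omega_i \ge \tau_i$, the caching-timer interval $[0,\tau_i]$ and the post-$\omega_i$ overhearing interval never overlap), and moreover the post-$\omega_i$ behavior of $\pi^\tco(\tau_i,\omega_i)$ is \emph{identical} to that of $\pi^\too(\omega_i)$ within the cycle, while the pre-$\tau_i$ behavior is identical to that of $\pi^\tc(\tau_i)$. The overhearing process is Poisson with rate $\lambda_i$ and independent of the request process, so the conditional probabilities governing the post-$\omega_i$ phase are the same regardless of whether the caching timer is $\tau_i$ or $0$.

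Next I would make this precise for the hit indicator. Label the next request a hit under $\pi^\tco(\tau_i,\omega_i)$ iff it arrives either in $[0,\tau_i]$ (caught by the caching timer) or after $d_i$ has been reloaded via overhearing. These two events are mutually exclusive. The probability of the first is exactly the hit probability contributed by $\pi^\tc(\tau_i)$ — indeed $\pi^\tc(\tau_i)$ declares a hit precisely when the next request arrives in $[0,\tau_i]$ — and the probability of the second is exactly the hit probability of $\pi^\too(\omega_i)$, since $\pi^\too(\omega_i)$ evicts immediately and reloads on the first overhearing opportunity after $\omega_i$, which is the same mechanism. Summing, $\P(\text{hit under }\pi^\tco) = \P(\text{hit under }\pi^\tc) + \P(\text{hit under }\pi^\too)$, and dividing by the per-cycle request count (one) gives the hit-ratio identity via the strong law / renewal-reward theorem applied to the i.i.d.\ cycles. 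The cache-occupancy identity follows the same way: the expected time-per-cycle that $d_i$ is resident under $\pi^\tco(\tau_i,\omega_i)$ equals the expected residency from the $[0,\tau_i]$ phase plus the expected residency from the post-$\omega_i$ overhearing phase, these being precisely the per-cycle occupancies of $\pi^\tc(\tau_i)$ and $\pi^\too(\omega_i)$ respectively; dividing by $\expect[X_i]$ and applying renewal-reward yields $r^\tco = r_i^\tc + r_i^\too$.

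The main obstacle I anticipate is being careful about the conditioning at the boundary epoch $\omega_i$: one must argue that, \emph{conditioned on the next request not having arrived by time $\omega_i$} (which is the only situation in which the overhearing phase is active at all), the residual behavior — residual request time and the first overhearing arrival after $\omega_i$ — has exactly the same law under $\pi^\tco(\tau_i,\omega_i)$ as under $\pi^\too(\omega_i)$. This rests on the independence of the Poisson overhearing process from the request process and on the fact that $\tau_i \le \omega_i$, so the caching-timer phase is entirely consumed before the overhearing phase begins and cannot alter the state seen at $\omega_i$. Once this independence/consistency is nailed down, the additivity of the per-cycle hit-count reward and the per-cycle occupancy reward is immediate, and the renewal-reward theorem closes the argument. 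The explicit formulas in Theorem~\ref{theorem:hit_ratio} would then be obtained by computing $h_i^\tc$, $h_i^\too$, $r_i^\tc$, $r_i^\too$ separately in each of the three regimes determined by the ordering of $\tau_i$, $\omega_i$ relative to $s_i$ — routine integrations against the $\mathrm{Exp}(\beta_i)$ residual and the $\mathrm{Exp}(\lambda_i)$ overhearing law — and summing.
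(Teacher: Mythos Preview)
Your proposal is correct and follows essentially the same decomposition as the paper: both arguments fix a single renewal cycle, observe that the hit event under $\pi^\tco(\tau_i,\omega_i)$ splits into the mutually exclusive events ``next request arrives before the caching timer expires'' and ``next request arrives after $d_i$ is reloaded via overhearing,'' identify these with the hit events under $\pi^\tc(\tau_i)$ and $\pi^\too(\omega_i)$ respectively, and then do the analogous additive split for the in-cache residency time. Your write-up is in fact more careful than the paper's --- you make explicit the role of $\omega_i\ge\tau_i$ in ensuring disjointness, the independence of the Poisson overhearing process, and the appeal to renewal-reward --- whereas the paper simply states the probability identities and the equality $T^\tco=T^\tc+T^\too$ without elaboration.
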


\begin{proof}[Proof of Lemma \ref{lemma:hit_ratio_sepa}]
Assume there is a request for the data item $d_i$ at time 0
and the next request for $d_i$ will arrive at time $\sigma>s_i$.
We will first analyze the probability of having a cache hit of 
$d_i$ at time $\sigma$, as well as the 
expected time when $d_i$ is stored in the cache in the time 
interval $[0,\sigma]$.

\begin{align}
    \P&[\text{cache hit at time $\sigma$}
        |\text{ $d_i$ is served by $\pi^\tco(\tau_i,\omega_i)$}] \nonumber \\
    & = \P[\sigma > s_i + \omega_i \text{ and $d_i$ is overheard
      during $[\omega_i,\sigma]$}] \nonumber \\
    & \;\;\;\; + \P[\sigma \leq s_i + \tau_i] \nonumber \\
    \P&[\text{cache hit at time $\sigma$}
        |\text{ $d_i$ is served by $\pi^\tc(\tau_i)$}] \nonumber \\
    & = \P[\sigma \leq s_i + \tau_i] \nonumber \\
    \P&[\text{cache hit at time $\sigma$}
        |\text{ $d_i$ is served by $\pi^\too(\omega_i)$}] \nonumber \\
    & = \P[\sigma > s_i + \omega_i \text{ and $d_i$ is overheard
      during $[\omega_i,\sigma]$}]. \nonumber
\end{align}
Hence, we have 
\begin{align}
    \P&[\text{cache hit at time $\sigma$}
        |\text{ $d_i$ is served by $\pi^\tco(\tau_i,\omega_i)$}] 
        \nonumber \\
    & = \P [\text{cache hit at time $\sigma$}
    |\text{ $d_i$ is served by $\pi^\tc(\tau_i)$}] \nonumber \\
    & \;\;\;\; + \P [\text{cache hit at time $\sigma$}
    |\text{ $d_i$ is served by $\pi^\too(\omega_i)$}], \nonumber
\end{align}
which indicates 
$h^\tco(\tau_i,\omega_i) = h_i^\tc(\tau_i) + h_i^\too(\omega_i)$,
since the policy is renewed after each data request. 

Let $T^\tco$, $T^\tc$ and $T^\too$ denote the amount of time 
when $d_i$ is stored in the cache during $[0,\sigma]$,
if $\pi^\tco(\tau_i,\omega_i)$, $\pi^\tc(\tau_i)$
and $\pi^\too(\omega_i)$ are applied, respectively. 
We have $T^\tco = T^\tc + T^\too$,
which indicates 
$r^\tco(\tau_i,\omega_i) = r_i^\tc(\tau_i) + r_i^\too(\omega_i)$.
\end{proof}

Now we are ready to prove Theorem~\ref{theorem:hit_ratio}.
\begin{proof}[Proof of Theorem \ref{theorem:hit_ratio}]
    In order to prove Theorem \ref{theorem:hit_ratio}, 
    we will derive the expected hit ratio and cache occupancy
    achieved by $\pi^\tc(\tau_i), \pi^\too(\omega_i)$ in different 
    parameter regions. 

    \noindent
    \textbf{Case 1:} $\tau_i \leq \omega_i \leq s_i$
    
    \noindent 
    In this case, the cached data item $d_i$ is evicted during the OFF period. 
    Thus, the caching-only policy $\pi^\tc(\tau_i)$ always achieves a $0$ hit ratio 
    and cache occupancy.
    For the overhearing-only policy $\pi^\too(\omega_i)$, 
    without loss of generality, we assume that 
    the most recent request arrives at time $0$.
    We will analyze the probability that the next request for $d_i$ is a hit,
    and its expected cache occupancy. 
    Let $X_i$ denote the time when the next request for $d_i$ arrives, 
    and $Y_i$ denote the time when we overhear $d_i$ for the first time
    after the deaf period (i.e., after time $\omega_i$).
    We have 
    \begin{align*}
        h_i^\too(\omega_i) & = \P[\text{The next request for $d_i$ is a hit 
            under $\pi^\too(\omega_i)$}] \\
        & = \P[Y_i + \omega_i\leq X_i] \\
        & = \P[Y_i + \omega_i \leq s_i] 
            + \P[s_i < Y_i + \omega_i \leq X_i] \\
        & = 1 - \exp\left(-\lambda_i(s_i-\omega_i) \right) \nonumber \\
        & \;\;\;\; + \exp\left(-\lambda_i(s_i-\omega_i) \right) 
            \cdot \frac{\lambda_i}{\lambda_i+\beta_i} \\
        & = 1 - 
         \exp\left(-\lambda_i(s_i-\omega_i) \right) 
        \cdot \frac{\beta}{\lambda_i+\beta_i}
    \end{align*}
    The cache occupancy of $\pi^\too(\omega_i)$ is 
    \begin{align*}
        r_i^\too(\omega_i) 
        & = \frac{1}{\expect[X_i]}
        \left( 
            \P[Y_i+\omega_i<s_i] \right. \\
        & \hspace{18mm} \cdot \expect[X_i-Y_i-\omega_i|Y_i+\omega_i<s_i] \\
        & \hspace{16mm} 
            + \P[s_i < Y_i + \omega_i \leq X_i] \nonumber \\
        & \hspace{18mm}\left. \cdot 
            \expect[X_i-Y_i-\omega_i|s_i < Y_i + \omega_i \leq X_i]
        \right)\\
        & = \frac{1}{\expect[X_i]} 
        \biggl(
            (1-\exp(-\lambda_i(s_i-\omega_i))) \nonumber \\
        & \hspace{16mm} \cdot
            \left(\frac{s_i-\omega_i}{1-\exp(-\lambda_i(s_i-\omega_i))} 
            -\frac{1}{\lambda_i} + \frac{1}{\beta_i}\right)  \\
        & \hspace{16mm}
            + \exp(-\lambda_i(s_i-\omega_i)) \frac{\lambda_i}{\lambda_i+\beta_i} 
            \frac{1}{\beta_i}
        \biggr) \\
        & = \frac{1}{\expect[X_i]} 
        \left(
            s_i-\omega_i + \exp(-\lambda_i(s_i-\omega_i))
            \frac{\beta_i}{\lambda_i(\lambda_i+\beta_i)} \right. \nonumber \\
        & \hspace{16mm} \left.
            - \frac{1}{\lambda_i} + \frac{1}{\beta_i} 
        \right).
    \end{align*}

    \noindent 
    \textbf{Case 2: $\tau_i \leq s_i \leq \omega_i$}

    \noindent
    In this case, the expected hit ratio and cache occupancy of the 
    caching-only policy are all $0$ similar to Case 1.
    For the overhearing-only policy, we have 
    \begin{align*}
        h_i^\too(\omega_i) & = \P[\text{The next request for $d_i$ is a hit 
            under $\pi^\too(\omega_i)$}] \\
        & = \P[s_i < Y_i + \omega_i \leq X_i] \\
        & = \exp\left(-\lambda_i(s_i-\omega_i) \right) 
            \cdot \frac{\lambda_i}{\lambda_i+\beta_i},
    \end{align*}
    and
    \begin{align*}
        r_i^\too(\omega_i) 
        = & \frac{1}{\expect[X_i]}
            \P[s_i < Y_i + \omega_i \leq X_i] \\ 
        & \cdot
            \expect[X_i-Y_i-\omega_i|s_i < Y_i + \omega_i \leq X_i]\\
        = & \frac{1}{\expect[X_i]}  
            \exp(-\lambda_i(s_i-\omega_i)) \frac{\lambda_i}{\lambda_i+\beta_i} 
            \frac{1}{\beta_i}.
    \end{align*}

\noindent
\textbf{Case 3: $s_i \leq \tau_i \leq \omega_i$}

\noindent
In this case, $h_i^\too(\omega_i)$ and $r_i^\too(\omega_i)$ are exactly the 
same as those in Case~2.
As for $\pi^\tc(\tau_i)$, we have
\begin{align*}
    h_i^\tc(\tau_i) & = \P[\text{The next request for $d_i$ is a hit 
            under $\pi^\tc(\tau_i)$}] \\
            & = \P[X_i \leq \tau_i] \\
            & = 1- \exp(-\beta_i(\tau_i-s_i)).
\end{align*}
The expected cache occupancy can be calculated as 
\begin{align*}
    r_i^\tc(\tau_i) & = \frac{1}{\expect[X_i]}
    \left(
        \P[X_i \geq \tau_i] \cdot \tau_i 
        + \P[X_i < \tau_i] \cdot \expect[X_i|X_i < \tau_i]
    \right) \\
    & =  \frac{1}{\expect[X_i]}
    (
        \P[X_i \geq \tau_i] \cdot \tau_i 
        - \P[X_i \geq \tau_i] \cdot \expect[X_i|X_i \geq \tau_i] \\
        & \hspace{4mm} + \P[X_i \geq \tau_i] \cdot \expect[X_i|X_i \geq \tau_i] \\
        & \hspace{4mm} + \P[X_i < \tau_i] \cdot \expect[X_i|X_i < \tau_i]
    )\\
    & = \frac{1}{\expect[X_i]}
    (-\P[X_i \geq \tau_i] \cdot \expect[X_i - \tau_i|X_i \geq \tau_i]
    +\expect[X_i]) \\
    & = \frac{1}{\expect[X_i]} 
    \frac{1}{\beta_i}(1-\exp(-\beta_i(\tau_i-s_i))).
\end{align*}
Then applying Lemma~\ref{lemma:hit_ratio_sepa} completes the proof.  
\end{proof}

\section{Proof of Lemma~\ref{lemma:boundary}}
\label{sec:proof_lemma_boundary}
The proof of Lemma~\ref{lemma:boundary} consists of two steps.
In Step~1, we will show that 
for $\tau_i \leq s_i$, there exists  $\widetilde{\omega}_i$
such that $h_i^\tco(\tau_i,\omega_i) \leq h_i^\too(\widetilde{\omega}_i)$
and $r_i^\tco(\tau_i,\omega_i)= r_i^\too(\widetilde{\omega}_i)$;
for $\tau_i > s_i$, there exists  $\widetilde{\tau}_i$
such that $h_i^\tco(\tau_i,\omega_i)\leq h_i^\tc(\widetilde{\tau}_i)$
and $r_i^\tco(\tau_i,\omega_i)= r_i^\tc(\widetilde{\tau}_i)$.
% The results of Step~1 implies that for any item policy in the set $\policyset^\trco$,
% there exist a randomization of caching-only and overhearing-only policies that 
% achieve the same cache occupancy and a higher (or the same) hit ratio. 
Next, in Step~2, we will show that the $h^\trco_i(\cdot)$ function
characterizes the upper boundary.

\noindent 
\textbf{Step 1: }
For $\tau_i \leq s_i$, it is easy to observe that 
$\widetilde{\omega}_i = \omega_i$ will satisfy the property. 
By applying Theorem~\ref{theorem:hit_ratio}, we can verify that 
$h_i^\tco(\tau_i,\omega_i) = h_i^\tco(0,\omega_i) = h_i^\too(\widetilde{\omega}_i)$
and $r_i^\tco(\tau_i,\omega_i) = r_i^\tco(0,\omega_i)= r_i^\too(\widetilde{\omega}_i)$.

For $\tau_i > s_i$, 
we may first solve $\widetilde{\tau}_i$ as the unique solution 
to the equation $r_i^\tco(\tau_i,\omega_i)= r_i^\tc(\widetilde{\tau}_i)$.
Then, we will prove that $h_i^\tco(\tau_i,\omega_i) \leq  h_i^\tc(\widetilde{\tau}_i)$.
By applying Lemma~\ref{lemma:hit_ratio_sepa},
it is equivalent to prove 
$h_i^\tc(\tau_i) + h_i^\too(\omega_i)\leq h_i^\tc(\widetilde{\tau}_i)$.

Since $r_i^\tco(\tau_i,\omega_i) = r_i^\tc(\tau_i) + r_i^\too(\omega_i) 
= r_i^\tc(\widetilde{\tau}_i)$,
we have 
\begin{align}
    r_i^\too(\omega_i) 
    & = r_i^\tc(\widetilde{\tau}_i) - r_i^\tc(\tau_i) \nonumber \\
    % & = \frac{s_i + 1/\beta_i}{\beta_i} 
    % \left( h_i^\tc(\widetilde{\tau}_i) - h_i^\tc(\tau_i) \right), \nonumber
    & = \frac{1}{\beta_is_i + 1} 
    \left( h_i^\tc(\widetilde{\tau}_i) - h_i^\tc(\tau_i) \right), \nonumber
\end{align}
where the second equation holds due to the fact that 
$\widetilde{\tau}_i>\tau_i > s_i$ and the linear relationship between 
$h_i^\tc$ and $r_i^\tc$, which is illustrated in Fig.~\ref{fig:h_r}
and can be easily proved using Theorem~\ref{theorem:hit_ratio}.
Moreover, Theorem~\ref{theorem:hit_ratio} also indicates that 
\begin{align*}
    % r_i^\too(\omega_i) \geq \frac{s_i + 1/\beta_i}{\beta_i} h_i^\too(\omega_i).
    r_i^\too(\omega_i) \geq \frac{1}{s_i\beta_i+1} h_i^\too(\omega_i).
\end{align*}
Therefore, we have $h_i^\tc(\widetilde{\tau}_i) - h_i^\tc(\tau_i)\geq h_i^\too(\omega_i)$,
which completes Step~1.

\noindent 
\textbf{Step~2:}
First of all, we will show that for any randomized item policy 
$\policy^\trco(\bq_i, \btau_i, \bomega_i)$
with $|\bq_i| = |\btau_i| = |\bomega_i| = n \geq 2$,
% there exist $\widetilde{\bq}_i = (\widetilde{q}_i^{(1)}, \widetilde{q}_i^{(2)})$, 
% $\widetilde{\btau}_i = (\widetilde{\tau}^{(1)}_i, \widetilde{\tau}^{(2)}_i)$
% and $\widetilde{\bomega}_i 
% = (\widetilde{\omega}^{(1)}_i, \widetilde{\omega}^{(2)}_i)$,
there exist $\widetilde{\bq}_i$, 
$\widetilde{\btau}_i$
and $\widetilde{\bomega}_i$ with 
$|\widetilde{\bq}_i| = |\widetilde{\btau}_i| = |\widetilde{\bomega}_i| = 2$,
such that $h_i^\trco(\bq_i, \btau_i, \bomega_i) 
\leq h_i^\trco(\widetilde{\bq}_i, \widetilde{\btau}_i, \widetilde{\bomega}_i)$
and $r_i^\trco(\bq_i, \btau_i, \bomega_i) 
= r_i^\trco(\widetilde{\bq}_i, \widetilde{\btau}_i, \widetilde{\bomega}_i)$.

The item policy $\policy^\trco(\bq_i, \btau_i, \bomega_i)$ is a randomization
of $n$ deterministic item policies $\policy^\tco(\tau_i^{(j)}, \omega_i^{(j)})$,
$1\leq j \leq n$.
As shown in Fig.~\ref{fig:polygon},
we may plot the cache occupancies and the hit ratios achieved by 
the deterministic item policies $\policy^{\tco}(\tau_i^{(j)}, \omega_i^{(j)})$, 
$1\leq j \leq n$, in a two-dimensional Cartesian coordinate system where  
the x-axis represents the cache occupancy and the y-axis represents the hit ratio.
Based on Theorem~\ref{theorem:hit_ratio_random},
the cache occupancy and hit ratio achieved by any 
randomized item policy $\policy^\trco(\bq_i, \btau_i, \bomega_i)$
must be in the convex hull of these $n$ points 
(i.e., a convex polygon).
\begin{figure}[t]
    \centering
    \includegraphics[width=8cm]{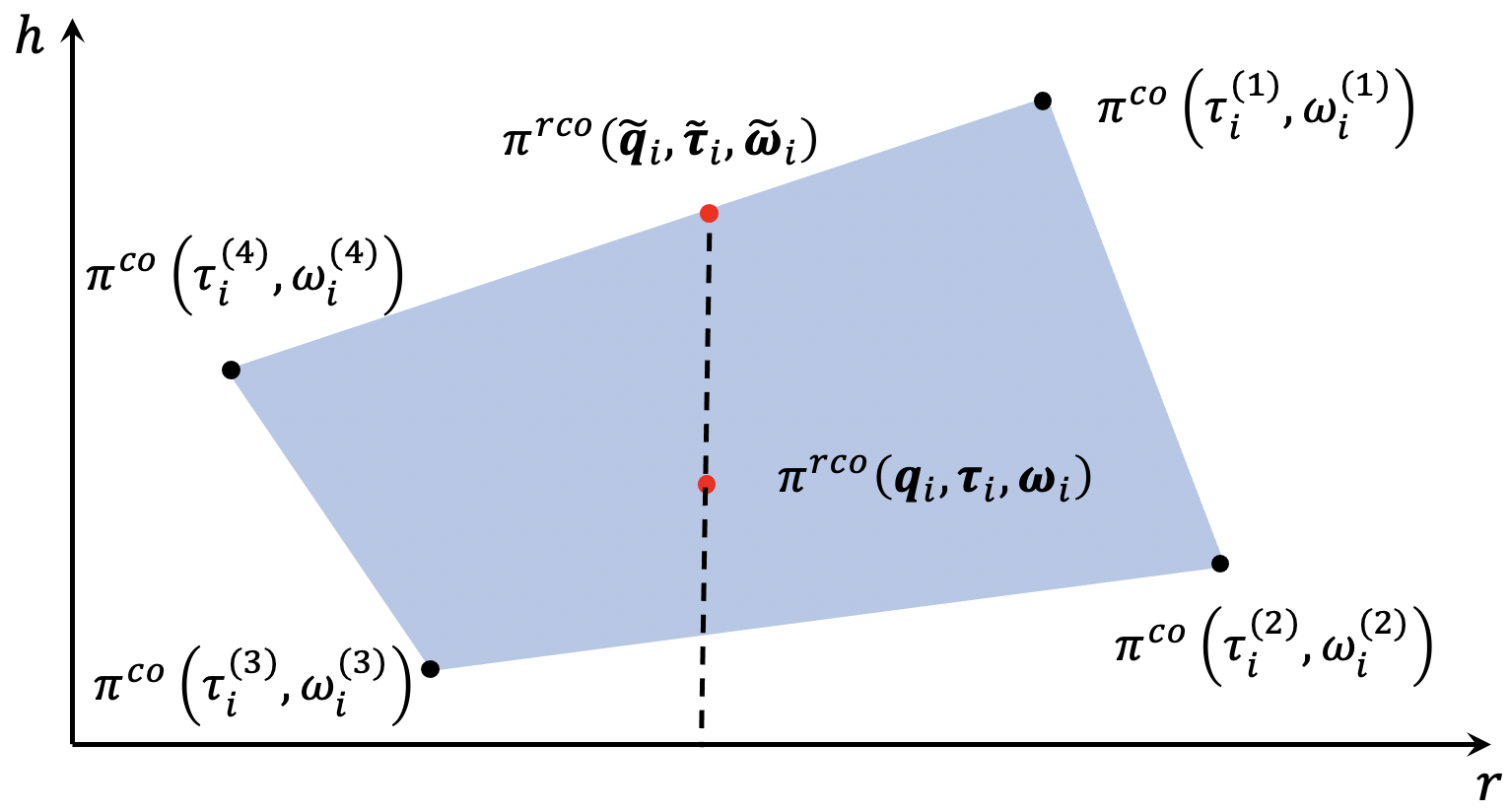}
    \caption{Randomized policies achieving hit ratios
    and cache occupancies in a convex polygon.}
    \vspace{-2mm}
    \label{fig:polygon}
\end{figure}

Next, we may find 
$\policy^\trco(\widetilde{\bq}_i, \widetilde{\btau}_i, \widetilde{\bomega}_i)$
as the item policy  that maximizes
the expected overall hit ratio such that the cache occupancy 
is the same as the one achieved by 
$\policy^\trco(\bq_i, \btau_i, \bomega_i)$.
It is easy to observe that 
$\policy_i^\trco(\widetilde{\bq}_i, \widetilde{\btau}_i, \widetilde{\bomega}_i)$
must be on the 
boundary of the convex polygon and is 
achieved by the randomization of two deterministic policies.
For example, in Fig.~\ref{fig:polygon},
$\policy_i^\trco(\widetilde{\bq}_i, \widetilde{\btau}_i, \widetilde{\bomega}_i)$
is a randomization of 
$\policy^\tco(\tau_i^{(1)},\omega_i^{(1)})$ and 
$\policy^\tco(\tau_i^{(4)},\omega_i^{(4)})$.

Therefore, for any item policy in the set $\policyset^\trco$,
there must exist 
an item policy from the set
\begin{align}
    \widehat{\policyset}^\trco \eqd
    & \Big\{\pi^\trco\left(
        \left(q,1-q\right), 
        \left(\tau^{(1)},\tau^{(2)}\right), 
        \left(\omega^{(1)},\omega^{(2)}\right)
        \right):  \nonumber \\
    & \hspace{20mm} 0\leq q\leq 1,
    0\leq \tau^{(j)} \leq \omega^{(j)}, 
    1\leq j\leq 2 \Big\}. \nonumber
\end{align}
that achieves the same cache occupancy and a higher (or the same)
hit ratio.
An item policy in the set $\widehat{\policyset}^\trco$ is a randomization
of two deterministic item policies 
$\policy^\tco(\tau^{(1)},\omega^{(1)})$ and
$\policy^\tco(\tau^{(2)},\omega^{(2)})$.
We have $\widehat{\policyset}^\trco \subset \policyset^\trco$.
Based on the result of Step~1,
we may know that the any point on the upper boundary of $\mathcal{R}_i^\trco$
must be achieved by 
a randomization of an overhearing-only item policy $\policy^\too(\omega_i)$
and a caching-only item policy $\policy^\tc(\tau_i)$ for some $\omega_i\geq \tau_i \geq 0$.
Based on the result of Step~1,
we can further conclude that for any item policy in the set $\policyset^\trco$,
there must exist 
an item policy from the set
\begin{align}
    & \Big\{\pi^\trco\left(
        \left(q,1-q\right), 
        \left(\tau,0\right), 
        \left(+\infty,\omega\right)
        \right):  \nonumber \\
    & \hspace{12mm} 0\leq q\leq 1,
    0\leq \tau\leq +\infty, 0\leq \omega \leq +\infty\Big\} \nonumber
\end{align}
that achieves the same cache occupancy and a higher (or the same)
hit ratio.
By applying Theorem~\ref{theorem:hit_ratio},
it is easy to show that $h_i^\too(r) \geq h_i^\tc(r)$ for $0\leq r \leq r_i^\too(0)$
and $h_i^\too(r_i^\too(0)) \leq h_i^\tc(r_i^\tc(+\infty)) = 1$.
Therefore, we can prove 
that for any item policy in the set $\policyset^\trco$,
there must exist 
an item policy from the set
\begin{align}
    \widetilde{\policyset}^\trco = & \{\policy^\too(\omega): \omega \geq 0\}  \nonumber \\
    & \cup
    \{\policy^\trco((q,1-q),(+\infty,0),(+\infty,0)): 0 \leq q \leq 1\} \nonumber
\end{align}
that achieves the same cache occupancy and a higher (or the same)
hit ratio.

\section{Proof of Lemma~\ref{lemma:property2}}
\label{sec:proof_lemma_property2}
Consider the overhearing-only item policy $\pi^\too(\omega_i)$ with $\omega_i \geq s_i$
under the event-driven overhearing setting. 
Without loss of generality, we assume that 
the most recent request arrives at time $0$.
We will analyze the probability that the next request for $d_i$ is a hit,
and its expected cache occupancy. 
Let $X_i$ denote the time when the next request for $d_i$ arrives, 
and $Y_i$ denote the time when we overhear $d_i$ for the first time
after the non-overhearing period (i.e., after time $\omega_i$).
We have 
\begin{align*}
    h_i^\too(\omega_i) & = \P[\text{The next request for $d_i$ is a hit 
        under $\pi^\too(\omega_i)$}] \\
    & = \P[Y_i + \omega_i\leq X_i]
\end{align*}
The cache occupancy of $\pi^\too(\omega_i)$ is 
\begin{align*}
    r_i^\too(\omega_i) 
    = & \frac{1}{\expect[X_i]}
        \P[Y_i + \omega_i \leq X_i] \nonumber \\
        & \cdot 
        \expect[X_i-Y_i-\omega_i| Y_i + \omega_i \leq X_i]\\
    = & \frac{1}{\expect[X_i]}
       \P[Y_i + \omega_i \leq X_i] \nonumber \\
       & \cdot 
       \expect[\expect[X_i-Y_i-\omega_i| Y_i + \omega_i \leq X_i]|Y_i]\\
    = & \frac{1}{\expect[X_i]}
       \P[Y_i + \omega_i \leq X_i] \cdot 
       \expect \left[ 1/\beta_i|Y_i \right]\\
    = & \frac{1}{s_i\beta_i + 1} \P[Y_i + \omega_i\leq X_i].
\end{align*}
Therefore, we have $h_i^o(r) = (s_i\beta_i+1)r$ for $\omega_i\geq s_i$,
or equivalently, for $0\leq r \leq r_i^\tco(0, s_i)$.

As for the caching-only item policy $\policy^\tc(\tau_i)$,
the hit ratio and the cache occupancy are the same as 
those in the time-driving scenario,
since $\policy^\tc(\tau_i)$ is independent of the overhearing process.

\section{Proof of Lemma~\ref{lemma:upper}}
\label{sec:proof_lemma_upper}
We want to show that for a given cache size $b$,
any achievable overall hit ratio must be no larger than $h^\text{upper}$. 
Consider an idealized setting, where we can always overhear $d_i$ and store it 
in the cache, $1\leq i\leq N$,
immediately after its OFF period. 
% An item policy $\policy^\text{ideal}(\tau_i)$ for $d_i$ under 
% this idealized setting is defined as follows.
% At the beginning of each ON period of $d_i$,
% $d_i$ will be loaded into the cache through overhearing,
% and at the same time, a caching timer with duration $\tau_i$ is initiated.
% If no requests for $d_i$ arrive, it will be evicted when the caching timer expires. 
Let $h_i^\text{ideal}(r)$ be the expected hit ratio of $d_i$ 
when the cache occupancy is $r$ under the overhearing only item policy $\policy^\text{\too}(\omega_i)$.
$h_i^\text{ideal}(r)$ is defined for $0\leq r \leq 1/(\beta_is_i+1)$.
It is easy to show that $h_i^\text{ideal}(r) = (\beta_is_i + 1) r$.
Note that a hit ratio $1$ and a cache occupancy $1/(\beta_is_i+1)$ are achieved 
by $\policy^\too(s_i)$.

Furthermore, by applying Theorems~\ref{theorem:hit_ratio} and \ref{theorem:hit_ratio_random},
it is easy to prove that for any $(r, h) \in \mathcal{R}_i^\trco$, we have 
$h \leq h_i^\text{ideal}(r)$.
Therefore, the maximal overall hit ratio of problem~(\ref{eq:opt_TTL_1}) 
should not exceed the maximal overall hit ratio of the following problem
\begin{align}
    \max_{r_i} \hspace{12mm}& \sum_{i=1}^N  p_i \cdot h_i^\text{ideal}(r_i)   
        \hspace{-20mm} & \label{eq:opt_TTL_3} \\
    \text{subject to} \hspace{8mm} 
    & 0 \leq r_i \leq 1, &  \hspace{-20mm} 1\leq i\leq N, \nonumber \\
    &\sum_{i=1}^N r_i \leq b. & \hspace{-0mm} \nonumber
\end{align}
Based on Equation~(\ref{eq:p_i}), we have 
\begin{align*}
p_i \cdot h_i^\text{ideal}(r_i)  = p_i (\beta_is_i+1) r_i =
 \beta_i r_i \left/ \sum_{j=1}^N \frac{1}{s_j+1/\beta_j} \right.
\end{align*}
Since the data items are sorted such that 
$\beta_i$ is decreasing with respect to $i$, 
the optimal solution to problem~(\ref{eq:opt_TTL_3})
is to set $r_i = 1/(s_i\beta_i+1)$ for $1\leq i \leq K$,
$r_{K+1} = b - \sum_{i=j}^K1/(s_j\beta_j+1)$
and $r_i = 0$ for $i> K+1$.
And the maximal overall hit ratio achieved by this optimal solution is 
\begin{align*}
    \sum_{i=1}^K p_i 
        + p_{K+1} \left(\beta_{K+1} s_{K+1} + 1\right)
        \left(b - \sum_{i=1}^{K} \frac{1}{\beta_is_i +1}\right) 
        \eqd h^\text{upper},
\end{align*}
which is an upper bound for the overall hit ratio
that achieved by any feasible solution to problem~(\ref{eq:opt_TTL_1}).
Therefore, we have $h^*(M) \leq h^\text{upper}$ for $\forall M > 0$.

\section{Proof of Theorem~\ref{theorem:asymp_opt}}
\label{sec:proof_theorem_asymp_opt}
Consider $M$ edge caches with
event-driven overhearing opportunities.
Assume that the data item $d_i$ is served by the item policy $\policy^\too(s_i)$.
We define $H_i$ as 
\begin{align}
    H_i \eqd  \lim_{T \to \infty} 
    \frac{\text{Number of hits for $d_i$ during $[0, T]$}}
    {\text{Number of requests for $d_i$ during $[0, T]$}}. \nonumber
\end{align}
In order to Theorem~\ref{theorem:asymp_opt}, 
we will first introduce the following lemma. 
\begin{lemma}\label{lemma:overhear_only}
    Consider $M$ edge caches with
    event-driven overhearing opportunities. 
    If the data item $d_i$ is served by the item policy $\policy^\too(s_i)$,
    then we have
    \begin{align}
    0 \leq 1- H_i 
    \leq
    2\sqrt{\frac{\beta_is_i+1}{M}} \nonumber
    \end{align}
    almost surely.
\end{lemma}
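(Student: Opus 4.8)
The plan is to set up a renewal argument for a single data item $d_i$ under the policy $\policy^\too(s_i)$ and reduce the claim to a bound on the expected residual time between the end of the OFF period and the first overhearing opportunity for $d_i$. First I would fix a tagged cache (by homogeneity all caches behave the same for $d_i$) and observe that, by the renewal structure of the ON-OFF process, $H_i$ equals the probability that a request for $d_i$ at the tagged cache is a hit. Since $\omega_i = s_i$, a request is a miss precisely when it arrives during the interval between the end of the OFF period (at time $s_i$ after the previous request) and the first subsequent overhearing opportunity for $d_i$; call the length of that gap $G_i$ (this is ``$Y_i$'' in the notation of Appendix~\ref{sec:proof_lemma_property2}, i.e.\ the waiting time measured from the end of the deaf period). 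The request within the ON period arrives after an $\mathrm{Exp}(\beta_i)$ amount of time, independent of everything else, so conditionally on $G_i$ the miss probability is $1-e^{-\beta_i G_i} \leq \beta_i G_i$, and hence
\begin{align}
    1 - H_i \;\leq\; \beta_i\, \expect[G_i]. \nonumber
\end{align}

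Next I would bound $\expect[G_i]$. The overhearing opportunities for $d_i$ are generated by the misses of the \emph{other} $M-1$ caches: whenever one of them misses $d_i$, the base station broadcasts it and the tagged cache can overhear. So the aggregate arrival process of overhearing opportunities at the tagged cache is the superposition of the miss streams of the other caches. Each such miss stream is a renewal-type point process whose inter-event times have mean at most $s_i + 1/\beta_i = \expect[X_i]$ (a cache misses $d_i$ at most once per request cycle). Therefore the long-run rate of overhearing opportunities available to the tagged cache is at least $(M-1)/\expect[X_i]$. The key step is to argue that the expected time the tagged cache must wait (starting from the end of its own OFF period) until one of these opportunities appears is at most a constant over that rate; here I would use that the waiting time is started at an epoch determined by the tagged cache's own dynamics, which is independent of the other caches' processes, so $G_i$ is stochastically dominated by the stationary forward-recurrence time of the superposed process, and its mean is controlled. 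A clean way to get an explicit constant is: the expected number of the tagged cache's own request cycles needed before it catches an overhearing opportunity is bounded, and each cycle has mean length $\expect[X_i]$, giving $\expect[G_i] \leq c\,\expect[X_i]/M$ for an absolute constant; chasing the constant carefully yields $\beta_i \expect[G_i] \leq \sqrt{(\beta_i s_i+1)/M}$ in the regime where this is $\le 1$, and pairing two such cycles (hence the factor $2$) covers the remaining regime, giving the stated bound $1-H_i \leq 2\sqrt{(\beta_i s_i+1)/M}$ almost surely (the a.s.\ statement coming from the ergodicity of the renewal reward ratio, so that $H_i$ is a.s.\ a constant equal to the hit probability).

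The main obstacle I anticipate is making the bound on $\expect[G_i]$ rigorous and getting the \emph{explicit} constant $2$ with the square-root dependence, rather than just an $O(1/M)$ bound. The subtlety is that the other caches' miss processes are not independent Poisson processes and are themselves influenced (through the broadcasts they trigger) by the overall dynamics, so one cannot simply invoke elementary superposition-of-Poisson facts; instead one needs a stochastic-domination / coupling argument showing that the tagged cache's wait is no worse than if overhearing opportunities arrived at rate $(M-1)/\expect[X_i]$ in a renewal fashion, plus a Markov/Jensen-type inequality to pass from the $1-e^{-\beta_i G_i}$ expression to the square root. Once Lemma~\ref{lemma:overhear_only} is in hand, Theorem~\ref{theorem:asymp_opt} follows by taking expectations and $p_i$-weighted sums: $h^\tcoe(M)$ is within $\max_{i\le K+1} 2\sqrt{(\beta_i s_i+1)/M}$ of the idealized hit ratio $h^\text{upper}$ of Lemma~\ref{lemma:upper}, which sandwiches $h^*(M)$ between $h^\tcoe(M)$ and $h^\text{upper}$, and letting $M\to\infty$ gives the claimed limit.
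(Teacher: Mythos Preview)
Your plan has a genuine gap at the step where you bound the overhearing rate. You write that the aggregate overhearing rate at the tagged cache is ``at least $(M-1)/\expect[X_i]$'' because ``a cache misses $d_i$ at most once per request cycle.'' That inequality goes the wrong way: at most one miss per cycle means the miss rate is at \emph{most} $1/\expect[X_i]$, not at least. In fact the per-cache miss rate is $(1-H_i)/\expect[X_i]$, so the overhearing rate depends on the very quantity $1-H_i$ you are trying to bound. If one plugs this back in and solves the resulting self-consistency $(1-H_i)\lesssim \beta_i \expect[X_i]/\bigl(M(1-H_i)\bigr)$, one does recover the $\sqrt{(\beta_is_i+1)/M}$ scaling---so the square root is not a matter of ``chasing the constant carefully'' from an $O(1/M)$ bound, it is forced by this circularity. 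Your proposal does not set up or resolve that fixed point, and the forward-recurrence-time domination you invoke would also need second-moment control of the (non-Poisson, correlated) miss process, which you have not established.

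The paper sidesteps all of this with a direct combinatorial argument that never analyzes the overhearing process. It looks at all $M$ caches at once, represents each request by the interval of its ON period, and makes the key structural observation that for any fixed time stamp $t$ there can be \emph{at most one miss} among all requests whose ON intervals contain $t$ (the first such miss triggers a broadcast that the others overhear). Placing stamps at spacing $L$ then bounds the total miss count by (number of stamps) plus (number of requests with ON period shorter than $L$); optimizing over $L$ gives the factor $2$ and the square root simultaneously. This avoids the circularity entirely and yields the almost-sure statement via the strong law. If you want to salvage your renewal approach, you would need to make the fixed-point argument rigorous; otherwise the paper's covering argument is both simpler and complete.
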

\begin{proof}
    Define $A_i(T)$ 
    as the number of misses of $d_i$ during $[0,T]$,
    and $B_i(T)$ as the number of requests of $d_i$ during $[0,T]$.
    Based on the definition of $H_i$,
    we have $1-H_i = \lim_{T\to +\infty}A_i(T)/B_i(T)$.
    Since the expected inter-request time for $d_i$
    is $s_i+1/\beta_i$.
    the strong law of large numbers implies 
    \begin{align}\label{eq:req_num}
        \lim_{T\to\infty} \frac{B_i(T)}{T} = \frac{M}{s_i+1/\beta_i}
    \end{align}
    almost surely.

    Next, we will analyze the limiting behavior of $A_i(T)/T$.
    We represent each request for $d_i$ by an interval $[t_1, t_2]$,
    if the ON-period of the corresponding request starts at time $t_1$
    and the request arrives at time $t_2$.
    Consider the requests for $d_i$ from all $M$ users. 
    For any given time stamp $t$, we claim that there is at most one 
    miss for $d_i$ among the requests whose
    corresponding intervals intersect with the time stamp $t$.
    Let $[t_1^{(k)}, t_2^{(k)}]$, $1\leq k \leq K$ be the $K$ ($K\leq M$) intervals
    that intersect with the time stamp $t$.
    Without loss of generality, we may assume that these intervals are sorted 
    such that $t_2^{(k)}$ is non-decreasing. 
    If the request represented by the interval $[t_1^{(1)}, t_2^{(1)}]$
    is a miss, then at time $t_2^{(1)}$, the base station will broadcast $d_i$. 
    The other caches will overhear $d_i$ and store it based on the item policy  $\policy^\too(s_i)$.
    Consequently, the requests arriving at $t_2^{(k)}$, $2\leq k\leq K$ will all be hits.
    Using a similar argument, we can conclude that there will be at most 
    one miss among these $K$ requests. 

    In order to derive an upper bound for the number of misses $A_i(T)$,
    we may set multiple time stamps with 
    equal distance apart from  each other, for a given time horizon $T$.
    Let $L$ be the distance between two time stamps and 
    $n = \lfloor T/L\rfloor$.
    Consider $n$ time stamps at time $j\cdot L$, $1\leq j \leq n$.
    We will divide all the requests of $d_i$ into two categories. 
    Category~1 contains the requests that intersect with at least one 
    of these $n$ time stamps. 
    In Category~2, the requests do not intersect with any of 
    the time stamps. 
    Let $A^{(1)}_i(T)$ and  $A^{(2)}_i(T)$ denote 
    the number of requests in Categories~1 and 2, respectively. 
    We have $A_i(T) = A^{(1)}_i(T) + A^{(2)}_i(T)$.
    Next, we will prove upper bounds for $A^{(1)}_i(T)$ and  $A^{(2)}_i(T)$.

    It is easy to observe that $A_i^{(1)}(T)$ can be upper bounded by $n$, 
    since there is at most one miss for the requests intersect with a time stamp. 
    Therefore, we have 
    \begin{align}
        \frac{A_i^{(1)}(T)}{T} 
        \leq \frac{n}{T} \leq \frac{1}{L}. \nonumber
    \end{align}
    Moreover, we can upper bound $A^{(2)}_i(T)$ by the number of requests in 
    Category~2.
    During $[0,T]$, there will be roughly $MT/(s_i + 1/\beta_i)$ requests for 
    the data item $d_i$, where $s_i + 1/\beta_i$ is the expected inter-request 
    time for $d_i$.
    A request belongs to Category~2 only when the length of its ON-period 
    is less than $L$, which occurs with a probability $1-\exp(-\beta_iL)$.
    Therefore, by the strong law of large numbers, 
    we have 
    \begin{align} 
        \lim_{T \to + \infty} \frac{A^{(2)}_i(T)}{T} 
        \leq \frac{M(1-\exp(-\beta_iL))}{s_i + 1/\beta_i}
        \leq \frac{M\beta_iL}{s_i + 1/\beta_i}, \nonumber
    \end{align}
    almost surely. 
    Combining $A^{(1)}_i(T)$ and $A^{(2)}_i(T)$ yields 
    \begin{align}
        \lim_{T \to +\infty} \frac{A_i(T)}{T}
        & = \lim_{T \to +\infty} \frac{1}{T} \left(A^{(1)}_i(T)+A^{(2)}_i(T)\right)
        \nonumber \\
        & \leq \frac{1}{L} + \frac{M\beta_i L}{s_i + 1/\beta_i}
        \nonumber
    \end{align}
    almost surely for any $0< L <+\infty$.
    By selecting $L = ((s_i+1/\beta_i)/(M\beta_i))^{1/2}$, we have
    \begin{align}\label{eq:miss_num}
        \lim_{T \to +\infty} \frac{A_i(T)}{T}
        \leq 2 \sqrt{\frac{M\beta_i}{s_i+1/\beta_i}}
    \end{align}
    almost surely. 

    Thus, combining (\ref{eq:req_num}) and (\ref{eq:miss_num}) implies 
    \begin{align}
        1-H_i
        = \lim_{T \to +\infty} \frac{A_i(T)}{B_i(T)} 
        \leq 2\sqrt{\frac{\beta_is_i + 1}{M}} \nonumber
    \end{align}
    almost surely,
    which completes the proof of the upper bound. 
    In addition, 
    since the number of misses must not exceed the number of requests, 
    $1-H_i$ is lower bounded by zero. 
\end{proof}

Now, we are ready to prove Theorem~\ref{theorem:asymp_opt}.

\begin{proof}[Proof of Theorem~\ref{theorem:asymp_opt}]
First, we define an overhearing only policy as follows.
Let the overhearing only element policy $\policy^\too(s_i)$ serve $d_i$, $1\leq i \leq K$,
where $K$ in defined in (\ref{eq:upper_K}).
Based on the definition of $K$, we must have $\sum_{i=1}^K r_i^\too(s_i) \leq b$,
where $r_i^\too(s_i)$ is the cache occupancy of $d_i$ under $\policy^\too(s_i)$.
We serve $d_{K+1}$ by $\policy^\too(s_{K+1})$, if $\sum_{i=1}^{K+1} r_i^\too(s_i) \leq b$.
Otherwise, we serve $d_{K+1}$ by $\policy^\too(\omega_{K+1})$, such that 
$r_{K+1}^\too(\omega_{K+1}) = 1-\sum_{i=1}^{K} r_i^\too(s_i)$.
We note that 
\begin{itemize}
    \item if $d_{K+1}$ is served by $\policy^\too(\omega_{K+1})$ with $\omega_{K+1} \neq s_{K+1}$
    under the proposed overhearing only policy,
    then based on Lemma~\ref{lemma:property2}, we have $h_i^\too(\omega_{K+1}) > h_i^\text{upper}$, where 
    $h_i^\text{upper}$ is the hit ratio of $d_i$ under the idealized policy defined in Section~\ref{sec:opt_general};
    \item the proposed overhearing only policy cannot outperform $\policyvec^\tcoe$,
    since the overhearing only policy is a feasible solution of problem~(\ref{eq:opt_TTL_4}),
    while $\policyvec^\tcoe$ is the optimal solution.
\end{itemize}

Let $h^\too_i(M) = \expect[H_i]$ denote the expected hit ratio of $d_i$
achieved by the proposed overhearing only policy.
By applying Lemma~\ref{lemma:overhear_only} and the fact that $H_i$ is bounded, 
we have
\begin{align}
    h^\too_i(M) \geq 1- 2\sqrt{\frac{\beta_is_i + 1}{M}}. \nonumber
\end{align}

% % Let $r_i^\too$ and $r_i^\text{upper}$ denote the expected cache occupancy of $d_i$
% % achieved by $\policy^\too(s_i)$ and the idealized policy, respectively. 
% As for the data item $d_{K+1}$,
% if it is served by $\policy^\too(s_{K+1})$ under $\policyvec^\toe$,
% then we must have
% \begin{align}
%     h^\too_{K+1} \geq 1- 2\sqrt{\frac{\beta_{K+1}s_{K+1} + 1}{M}}
%     > \left( 1- 2\sqrt{\frac{\beta_{K+1}s_{K+1} + 1}{M}}  \right) 
%     h_{K+1}^\text{upper}. \nonumber
% \end{align}
% Instead, if $d_{K+1}$ is served by $\policy^\too(\omega)$ 
% with $\omega > s_{K+1}$ under $\policyvec^\toe$,
% then we must have $h_{K+1}^\too \geq h_{K+1}^\text{upper}$.

Under the idealized policy proposed in Section~\ref{sec:opt_general},
the hit ratios for $d_i$, $K+1 < i \leq N$, must be zeros.  
Therefore, we can conclude that
\begin{align} 
    & h^\text{upper} - h^\tcoe(M) \nonumber \\
    & \leq  h^\text{upper} - \sum_{i=1}^N p_i h_i^\too(M) 
    \leq \sum_{i=1}^{K+1} p_i (1-h_i^\too(M)) \nonumber \\
    & \leq \sum_{i=1}^{K+1} 2 p_i 
    \sqrt{\frac{\beta_is_i + 1}{M}} 
    \leq \max_{1\leq i\leq K+1} 2\sqrt{\frac{\beta_is_i + 1}{M}}. \nonumber
\end{align}
\end{proof}

\section{Proof of Theorem~\ref{theorem:optimality}}
\label{sec:proof_theorem_optimality}
Consider $M$ edge caches with
event-driven overhearing opportunities described in Section~\ref{sec:genearlized_event_driven}.
We define $H_i$ as hit ratio of $d_i$ of the entire system
\begin{align}
    H_i \eqd  \lim_{T \to \infty} 
    \frac{\text{Number of hits for $d_i$ during $[0, T]$}}
    {\text{Number of requests for $d_i$ during $[0, T]$}}. \nonumber
\end{align}
Using the same approach that proves Lemma~\ref{lemma:overhear_only},
we can establish the following lemma.
\begin{lemma}\label{lemma:overhear_only_generalized}
    Under the proposed policy $\policyvec^\toe$, we have
    \begin{align}
    0 \leq 1- H_i 
    \leq
    \frac{2\sqrt{\beta_{max}}} {\sqrt{\sum_{m \in \mathcal{C}_i} \left(s_i^\um+1/\beta_i^\um\right)^{-1}}} \nonumber
    \end{align}
    almost surely.
\end{lemma}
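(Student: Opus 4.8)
The plan is to mimic the proof of Lemma~\ref{lemma:overhear_only} almost verbatim, but now the relevant ``population'' for item $d_i$ is not all $M$ caches but only the set $\mathcal{C}_i$ of users for which $d_i$ is popular, because under $\policyvec^\toe$ only those users run $\policy^\too(s_i^\um)$ for $d_i$ and thus only their misses trigger overhearing opportunities (and only their requests can be hits or misses that count toward $H_i$). First I would set up the same two counting processes restricted to the caches in $\mathcal{C}_i$: let $A_i(T)$ be the number of misses of $d_i$ during $[0,T]$ across all caches in $\mathcal{C}_i$, and $B_i(T)$ the number of requests of $d_i$ during $[0,T]$ across those caches (requests from caches outside $\mathcal{C}_i$ do not involve $d_i$ at all). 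Then $1-H_i = \lim_{T\to\infty} A_i(T)/B_i(T)$ almost surely. Since user $m\in\mathcal{C}_i$ generates requests for $d_i$ as a renewal process with mean inter-request time $s_i^\um + 1/\beta_i^\um$, superposing independent renewal processes and applying the strong law of large numbers gives $\lim_{T\to\infty} B_i(T)/T = \sum_{m\in\mathcal{C}_i}(s_i^\um + 1/\beta_i^\um)^{-1}$ almost surely.

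Next I would reprove the ``at most one miss per time stamp'' structural claim in the heterogeneous setting. Represent each request for $d_i$ from a cache $m\in\mathcal{C}_i$ by the interval $[t_1,t_2]$ where $t_1$ is the start of its ON-period and $t_2$ its arrival time; note the policy $\policy^\too(s_i^\um)$ has deaf TTL exactly $s_i^\um$, which coincides with the OFF period length, so a cache is ready to overhear $d_i$ the instant its ON-period begins. Hence if several such intervals overlap a fixed time stamp $t$ and the earliest-finishing one is a miss, its broadcast at $t_2$ is overheard and stored by every other cache in $\mathcal{C}_i$ whose interval straddles $t_2$, making all the later requests hits — so at most one miss among requests whose interval intersects $t$. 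Placing $n=\lfloor T/L\rfloor$ equally spaced time stamps, the number of Category-1 misses is at most $n$, so $A_i^{(1)}(T)/T \le 1/L$. For Category-2 (requests whose ON-period is shorter than $L$, so the interval misses every time stamp), the fraction of such requests from cache $m$ is $1-\exp(-\beta_i^\um L) \le \beta_i^\um L \le \beta_{max} L$, and by the strong law $\lim_{T}A_i^{(2)}(T)/T \le \beta_{max}L \cdot \sum_{m\in\mathcal{C}_i}(s_i^\um+1/\beta_i^\um)^{-1}$. Writing $S_i \eqd \sum_{m\in\mathcal{C}_i}(s_i^\um+1/\beta_i^\um)^{-1}$, we get $\lim_T A_i(T)/T \le 1/L + \beta_{max}L\,S_i$, and optimizing over $L$ (taking $L=1/\sqrt{\beta_{max}S_i}$) yields $\lim_T A_i(T)/T \le 2\sqrt{\beta_{max}S_i}$. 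Dividing by the limit of $B_i(T)/T = S_i$ gives $1-H_i \le 2\sqrt{\beta_{max}S_i}/S_i = 2\sqrt{\beta_{max}}/\sqrt{S_i}$, as claimed; the lower bound $1-H_i\ge 0$ is immediate since misses cannot exceed requests.

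The main obstacle I anticipate is being careful about two subtleties that are glossed over in the homogeneous proof. First, since the mean inter-request times $s_i^\um+1/\beta_i^\um$ differ across $m\in\mathcal{C}_i$, the superposed request process is no longer a delayed renewal process with a single rate, so one must invoke the SLLN (or elementary renewal theorem) for each cache separately and then sum — this is routine but needs the $\beta_i^\um$ (hence the rates) to be bounded, which is exactly the hypothesis $\beta_{max}<\infty$. Second, one must confirm the structural ``at most one miss'' claim still holds even though different caches have different deaf TTLs $s_i^\um$: the key point is that for each cache $m$ the deaf TTL is chosen to equal that same cache's OFF period $s_i^\um$, so every cache becomes overhearing-ready precisely when its own ON-period starts, which is exactly the left endpoint of its interval; thus the argument chaining the first miss's broadcast to all later overlapping requests goes through unchanged. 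Once these two points are handled, the rest is a direct transcription of the proof of Lemma~\ref{lemma:overhear_only} with $M$ replaced by $S_i$ and $\beta_i s_i+1$ replaced appropriately, and Theorem~\ref{theorem:optimality} then follows by the same aggregation-plus-$1/b$-slack argument used to derive Theorem~\ref{theorem:asymp_opt} from Lemma~\ref{lemma:overhear_only}.
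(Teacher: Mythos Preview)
Your proposal is correct and follows exactly the approach the paper intends — the paper in fact omits the proof, citing it as an easy extension of Lemma~\ref{lemma:overhear_only}, and your write-up is precisely that extension, with the two subtleties (summing the per-cache renewal rates via the SLLN, and the observation that each cache's deaf TTL equals its own OFF period so the ``at most one miss per time stamp'' argument still chains) handled correctly. One minor inaccuracy worth fixing: your parenthetical that ``requests from caches outside $\mathcal{C}_i$ do not involve $d_i$ at all'' is not literally true — users $m\notin\mathcal{C}_i$ still request $d_i$ and always miss under $\policyvec^\toe$, and their misses also trigger broadcasts that $\mathcal{C}_i$-caches may overhear; what actually justifies restricting $A_i,B_i$ to $\mathcal{C}_i$ is that (i) those extra broadcasts can only decrease $\mathcal{C}_i$-misses, so your upper bound survives, and (ii) in the application to Theorem~\ref{theorem:optimality} the non-$\mathcal{C}_i$ contributions cancel between $\hat h^{\text{ideal}}$ and $h^\toe$, so only the $\mathcal{C}_i$-restricted miss ratio is needed.
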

We omit the proof of Lemma~\ref{lemma:overhear_only_generalized}, 
since it is an easy extension of the proof of Lemma~\ref{lemma:overhear_only}.

\begin{proof}[Proof of Theorem~\ref{theorem:optimality}]
    Consider an idealized scenario, where we could overhear any data item at any time 
    and receive it without any cost or latency.
    Under this idealized scenario, there is an overhearing only policy where 
    for any cache $m$, each data item $d_i$ is served by an overhearing only item policy $\policy^\too(\omega_i^\um)$.
    We use $h^\text{ideal}(M)$ to denote the maximum overall expected hit ratio of the system under the idealized scenario.
    Let $\hat{h}^\text{ideal}(M)$ to denote the overall expected hit ratio of the system achieved by the proposed $\policyvec^\toe$
    in the idealized scenario.
    
    Note that, in the idealized scenario, the overall cache occupancy of cache $m$ achieved by $\policyvec^\toe$ should be 
    in the range of $(b-1, b]$, due to Equation~(\ref{eq:K_m}).
    Therefore, we have
    \begin{align}\label{eq:h_gap_1}
        0 \leq h^\text{ideal}(M) - \hat{h}^\text{ideal}(M) \leq 1/b, 
    \end{align}
    since $\policyvec^\toe$ prefers to cache items with the largest caching efficiency 
    (i.e., hit ratio per unit cache occupancy).
    
    Using the same approach that proves Theorem~\ref{theorem:asymp_opt}, 
    we can show that 
    \begin{align}\label{eq:h_gap_2}
        & \hat{h}^\text{ideal}(M) - h^\toe(M)  \nonumber \\ 
        & \hspace{4mm}\leq \max_{1\leq i\leq N} \frac{2\sqrt{\beta_{max}}} {\sqrt{\sum_{m \in \mathcal{C}_i} \left(s_i^\um+1/\beta_i^\um\right)^{-1}}}.
    \end{align}
    Combining (\ref{eq:h_gap_1}), (\ref{eq:h_gap_2}) and the fact that $h^*(M) \leq h^\text{ideal}(M)$
    finishes the proof.
\end{proof}

\bibliographystyle{IEEEtran}
\bibliography{mybib}

\vfill

\end{document}